\theoremstyle{plain}
\newtheorem{lemma}{Lemma}
\newtheorem{theorem}{Theorem}
\newtheorem{proposition}{Proposition}
\newtheorem{conjecture}{Conjecture}
\theoremstyle{definition}
\newtheorem{definition}{Definition}
\newtheorem{condition}{Condition}
\newcommand\abs[1]{\left|#1\right|}
\newcommand\where[2]{\left.#1\right|_{#2}}
\newcommand\difffrac[3][1]{
	\ifnum #1=1
		\frac{\mathrm{d} #2}{\mathrm{d} #3}
	\else
	\frac{{\mathrm{d}}^{#1} #2}{\mathrm{d} #3^{#1}}
	\fi
}
\newcommand\R{{\mathbb{R}}}
\newcommand\N{{\mathbb{N}}}
\newcommand{\Setminus}[2]{{\left.#1\middle\backslash #2\right.}}
\newcommand\manifold{{\mathcal{F}}}
\newcommand\problem{{\mathcal{P}}}
\newcommand\hamilton{{\mathcal{H}}}
\newcommand\swithchlaw{{\mathcal{S}}}
\newcommand\st{{\mathrm{s.t.}}}
\renewcommand\vector[1]{\boldsymbol{#1}}
\newcommand\ve{{\vector{e}}}
\newcommand\vf{{\vector{f}}}
\newcommand\vj{{\vector{j}}}
\newcommand\vt{{\vector{t}}}
\newcommand\vx{{\vector{x}}}
\newcommand\vy{{\vector{y}}}
\newcommand\vA{{\vector{A}}}
\newcommand\vB{{\vector{B}}}
\newcommand\vC{{\vector{C}}}
\newcommand\vD{{\vector{D}}}
\newcommand\vI{{\vector{I}}}
\newcommand\vJ{{\vector{J}}}
\newcommand\vM{{\vector{M}}}
\newcommand\vX{{\vector{X}}}
\newcommand\vlambda{{\vector{\lambda}}}
\newcommand\veta{{\vector{\eta}}}
\newcommand\vphi{{\vector{\phi}}}
\newcommand\vPhi{{\vector{\Phi}}}
\newcommand\vzero{{\vector{0}}}
\newcommand\tf{{t_\mathrm{f}}}
\newcommand\f{{\mathrm{f}}}
\newcommand\vxf{{\vx_\mathrm{f}}}
\newcommand\sgn{{\mathrm{sgn}}}
\newcommand\dom{{\mathrm{dom\,}}}
\newcommand\rank{{\mathrm{rank}}}
\newcommand\modify[1]{\textcolor{black}{#1}}
\newcommand\RomanNum[1]{\uppercase\expandafter{\romannumeral #1}}
\begin{document}
%
\title{Time-Optimal Control for High-Order Chain-of-Integrators Systems with Full State Constraints and Arbitrary Terminal States}
%
%
%
\author{Yunan~Wang,~
        Chuxiong~Hu,~\IEEEmembership{Senior~Member,~IEEE,}
        Zeyang~Li,
        Shize~Lin,
        Suqin~He,
        and~Yu~Zhu,~\IEEEmembership{Member,~IEEE}
\thanks{Corresponding author: Chuxiong Hu (e-mail: cxhu@tsinghua.edu.cn). The authors are all with the Department of Mechanical Engineering, Tsinghua University, Beijing 100084, China.}}

\maketitle

\begin{abstract}
    Time-optimal control for high-order chain-of-integrators systems with full state constraints and arbitrar\modify{ily} given terminal states remains a challenging problem in the optimal control theory domain, yet to be resolved. To enhance further comprehension of the problem, this paper establishes a novel notation system and theoretical framework, providing the switching manifold for high-order problems in the form of switching law\modify{s}. Through deriving properties of switching laws \modify{regarding} signs and dimension, this paper proposes a definite condition for time-optimal control. Guided by the developed theory, a trajectory planning method named the manifold-intercept method (MIM) is developed. The proposed MIM can plan time-optimal jerk-limited trajectories with full state constraints, and can also plan near-optimal \modify{non-chattering} higher-order trajectories with negligible extra motion time \modify{compared to optimal profiles}. Numerical results indicate that the proposed MIM outperforms all baselines in computational time, computational accuracy, and trajectory quality by a large gap.
\end{abstract}

\begin{IEEEkeywords}
    Optimal control, linear systems, variational methods, switched systems, trajectory planning.
\end{IEEEkeywords}

%
\IEEEpeerreviewmaketitle


\section{Introduction}\label{sec:Introduction}

\IEEEPARstart{H}{igh-order} chain-of-integrators systems have been universally utilized in computer numerical control (CNC) machining \cite{wang2021local,wang2022real}, robotic motion control \cite{wang2022learning,zhao2020pareto,wang2021dynamics}, semiconductor device fabrication \cite{ni2017sinusoidal,li2018convergence}, and autonomous driving \cite{guler2016adaptive}. Trajectory planning has a significant influence on motion efficiency and accuracy in those scenarios \cite{wang2023optimization,wang2023slice}. However, time-optimal control for high-order chain-of-integrators systems with full state constraints and arbitrary terminal states remains a challenging and significant open problem in the optimal control theory domain, yet to be \modify{fully} resolved.

The time-optimal control problem for chain-of-integrators requires planning a trajectory with minimum motion time from a given initial state vector to a terminal state vector, where the system state vector is composed of components such as position, velocity, acceleration, and so forth. State constraints require that all system state components are limited by given upper bounds, while the initial states and the terminal states are specified arbitrarily. A suboptimal fifth order trajectory planned by the proposed Algorithm \ref{alg:CalMIM} is shown in Fig. \ref{fig:firstshow}. \modify{The input control, i.e., crackle, and} the system states, i.e., position, velocity, acceleration, jerk, and snap, are bounded by the given constraints. The \modify{control} is always \modify{at its} maximum, minimum, or zero \modify{value} along the planned trajectory, satisfying the Bang-Singular-Bang control law \cite{he2020time}.

\begin{figure}[!t]
    \centering
    \includegraphics[width=\columnwidth]{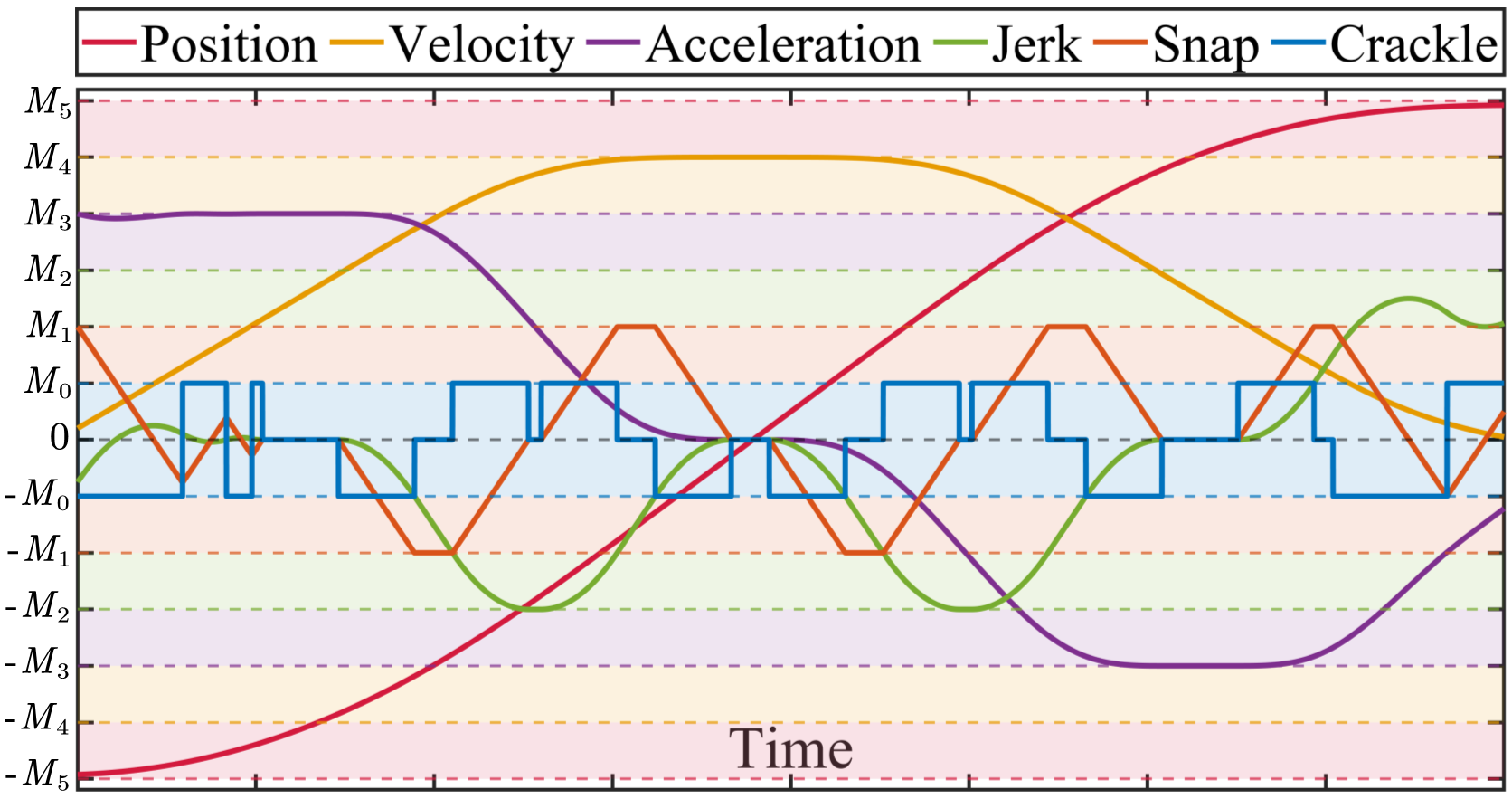}
    \caption{A fifth order trajectory planned by \modify{the} proposed method. \modify{$M_5$, $M_4$, $M_3$, $M_2$, $M_1$, and $M_0$ are the upper bounds of position, velocity, acceleration, jerk, snap, and crackle, respectively. The trajectory can be represented as $\underline{0}\overline{0}\left(\overline{3},2\right)\overline{0}\underline{0}\overline{03}\underline{01}\overline{0}\underline{2}\overline{01}\underline{0}\overline{4}\underline{01}\overline{0}\underline{2}\overline{01}\underline{0}\underline{3}\overline{01}\underline{0}\overline{0}$ in Section \ref{sec:CostateSystemBehaviorAnalysis}.}} 
    \label{fig:firstshow}
\end{figure}

In optimal control, numerous research has been conducted on time-optimal control for chain-of-integrators.
The problem with only input saturation has been fully solved based on Pontryagin's \modify{maximum} principle (PMP) \cite{hartl1995survey}, where theorems on the Bang-Bang control \cite{lee1967foundations} and the analytic expression of the optimal control \cite{bartolini2002time} are well-known.
Marchand et al. \cite{marchand2007global} proposed a discrete-time control law based on the above problem.
However, the time-optimal problem with input saturation and full state constraints remains unsolved, especially when the initial and terminal states are arbitrarily given, i.e., the terminal states are allowed to \modify{be} non-zero. The 2nd order problem is trivial \cite{ma2021optimal}, but 3rd or higher-order problems have not been solved well.
Haschke et al. \cite{haschke2008line} proposed an online time-optimal jerk-limited trajectory planning method without position constraints.
Kr{\"o}ger \cite{kroger2011opening} developed the Reflexxes type IV motion librar\modify{y}, solving the third order problem without given terminal acceleration and position constraints.
Berscheid \modify{and Kr{\"o}ger} \cite{berscheid2021jerk} fully considered third order chain-of-integrators with arbitrary terminal states and no position constraints, resulting in a widely-used jerk-limited trajectory planning package, i.e., Ruckig.
Some approaches on continuous path following time-optimal trajectory planning are proposed in \cite{he2022real}, with an order lower than 3. To the best of our knowledge, there has been no mature method available for planning 4th order optimal trajectories with full state constraints so far.

For higher-order problems, the control invariant set \cite{gao2020computing} and the switching surface \cite{walther2001computation} are significant tools, which provide the nature of the time-optimal problem \cite{khmelnitsky1998one}.
Mitchell et al. \cite{mitchell2005time} calculated the invariant set based on level sets of solutions of a partial differential equation.
Tahir et al. \cite{tahir2014low} employed a polyhedral approximation to characterize the invariant set.
Doeser et al. \cite{doeser2020invariant} constructed the third order invariant sets for integrators.
Yury \cite{yury2016quasi} obtained the switching surface for the 3rd order problem in part.
He et al. \cite{he2020time} provided the explicitly analytic expressions of the complete switching surfaces for the 3rd order problem with zero terminal states.
However, the investigation of switching surfaces for 3rd order problems remains incomplete, while limited studies on switching manifold have been conducted for 4th order or higher-order problems.
\modify{Specifically, the existence of chattering phenomena \cite{robbins1980junction}, i.e., the control switching for infinite times in a finite time period, still remains unknown in the investigated problem, let alone optimal control.}

Since higher-order problems are lack of theoretical results, scholars either discretize the continuous-time problem and solve the discrete problem \modify{using} numerical optimization, or plan feasible-but-not-optimal trajectories in engineering. Direct methods have been widely applied in numerical optimal control problems, and some optimization solvers have been built, such as CasADi \cite{andersson2019casadi}, ICLOCS2 \cite{nie2018iclocs2}, and Yop \cite{leek2016optimal}. However, the time-optimal control problem in the 4th or higher-order is non-convex in discrete time, leading to high computational time and failure in obtaining optimal solutions. Leomanni et al. \cite{leomanni2022time} transformed the time-optimal problem into sequential convex optimization problems, but the resulting trajectory exhibits serious oscillations. Solving the time-optimal problem for high-order chain-of-integrators systems efficiently and accurately is still challenging.

Quasi-optimal trajectories, especially \modify{those} in the S-curve form, are widely used in industry.
Erkorkmaz et al. \cite{erkorkmaz2001high} applied jerk-limited trajectories in the S-curve form in high-speed CNC machining.
Dai et al. \cite{dai2018generation} planned snap-limited trajectories by solving motion time in S-curves.
Ezair et al. \cite{ezair2014planning} proposed a greedy recursive trajectory planning method for the high-order problem with full state constraints, but the method would fail for cases where the terminal position is close to the initial position, and the terminal velocity is far from the initial velocity.
Oland et al. \cite{oland2019controlling} planned suboptimal 4th order trajectories without any constraints to obtain a zero terminal based on exponential activation functions.
However, the current methods either underutillize the inertia of system states or fall short of attaining full state constraints, thereby leading to unsatisfactory in motion time.

This paper theoretically studies the time-optimal control problem for high-order chain-of-integrators systems with full state constraints and arbitrary initial as well as terminal states.
A novel notation system and theoretical framework is established in Section \ref{sec:CostateSystemBehaviorAnalysis}, providing key concepts, i.e., the switching law and the optimal-trajectory manifold. A trajectory planning method named the manifold-intercept method (MIM) is proposed in Section \ref{sec:ManifoldInterceptMethod} \modify{and is open-source in \cite{MIMGithub2024}.} 
The contributions of this paper include the following aspects:
\begin{enumerate}
    \item This paper establishes a novel notation system and theoretical framework for the classical and longstanding problem in optimal control, i.e., the time-optimal control problem for high-order chain-of-integrators systems. Through comprehensive analysis of states and costates, the framework can provide the switching manifold for high-order problems in the form of switching laws. Notably, limited studies on the switching hypersurface have been conducted even for 4th order problems. This paper derives properties of switching laws \modify{regarding} signs as well as dimension and proposes a definite condition of augmented switching laws, imposing a necessary condition on optimal control \modify{for directly solving the nonlinear equation system}.
    \item Guided by the developed framework, this paper proposes an efficient trajectory planning algorithm called manifold-intercept method (MIM). The proposed MIM can plan near-time-optimal trajectories for 4th or higher-order problems with only negligible extra motion time compared to time-optimal trajectories\modify{, avoiding chattering in high-order problems with near-optimal terminal time compared to optimal solutions. The proposed MIM} outperforms all baselines on computational time, computational accuracy, and trajectory quality by a large gap. To the best of our knowledge, there has been no mature method available for planning 4th order optimal trajectories with full state constraints so far. 
    \item For 3rd order problems, the proposed MIM can plan strictly time-optimal trajectories with full state constraints. According to the available literature, the proposed MIM is the first 3rd order trajectory planning method with full state constraints and arbitrary terminal states. While it is claimed that Ruckig in pro version \cite{Ruckig2023Online} can achieve 3rd order trajectories with full state constraints, it is neither open-source nor explicit publishes its underlying principles. Since 3rd order trajectories are widely applied in the industry \cite{wang2021local,doeser2020invariant}, the proposed algorithm has significant application value.
\end{enumerate}

\section{Problem Formulation}\label{sec:Problem_Formulation}
    As mentioned in Section \ref{sec:Introduction}, time-optimal control for chain-of-integrator systems is a classical problem in the optimal control theory domain, remaining challenging to plan trajectories with input saturation, full state constraints \cite{he2020time}, and arbitrary terminal states \cite{kroger2011opening,berscheid2021jerk}. Generally, the problem can be summarized as
    \begin{IEEEeqnarray}{rl}\label{eq:optimalproblem}
        \min\quad& J=\int_{0}^{\tf}\mathrm{d}t=\tf,\IEEEyesnumber\IEEEyessubnumber*\\
        \st\quad&\modify{\dot{x}_k\left(t\right)=x_{k-1}\left(t\right),\,\forall t\in\left[0,t_\f\right],k=2,3,\dots,n,\label{eq:optimalproblem_dynamic}}\\
        &\modify{\dot{x}_1\left(t\right)=u\left(t\right),\,\forall t\in\left[0,t_\f\right],}\label{eq:optimalproblem_dynamic2}\\
        &\vx\left(0\right)=\vx_0,\,\vx\left(\tf\right)=\vx_\f,\\
        &\abs{x_k\left(t\right)}\leq M_k,\,\forall t\in\left[0,t_\f\right],\,k=1,2,\dots,n,\label{eq:optimalproblem_x_constraint}\\
        &\abs{u\left(t\right)}\leq M_0,\,\forall t\in\left[0,t_\f\right],
    \end{IEEEeqnarray}
    where $\vx=\left(x_1,x_2,\dots,x_n\right)\in\R^n$ is the state vector of the system, $u\in\R$ is the control input. $\vx_0=\left(x_{01},x_{02},\dots,x_{0n}\right)$ and $\vx_\f=\left(x_{\f1},x_{\f2},\dots,x_{\f n}\right)$ are the assigned initial state vector and terminal state vector. $n$ is called the order of problem \eqref{eq:optimalproblem}. The notation $\left(\bullet\right)$ means $\left[\bullet\right]^\top$.
    
    \modify{Problem \eqref{eq:optimalproblem} possesses a clear physical significance. For example, i}f $n=4$, $x_4$, $x_3$, $x_2$, $x_1$, and $u$ represent the position, velocity, acceleration, jerk, and snap of a 1-axis motion system, respectively. \modify{Then, \eqref{eq:optimalproblem} requires planning a trajectory with minimum motion time from a given initial state vector to a terminal state vector under box constraints.}
    

    For the convenience, denote $\vM=\left(M_0,M_1,\dots,M_n\right)\in\R_{++}\times\overline{\R}_{++}^{n}$, where $\overline{\R}_{++}=\R_{++}\cup\left\{\infty\right\}$ is the strictly positive part of the extended real number line. $M_k=\infty$ if $x_k$ is unconstrained. Problem \eqref{eq:optimalproblem} is denoted as $\problem\left(\vx_0,\vx_\f;\vM\right)$.

    In order to solve the optimal control problem \eqref{eq:optimalproblem}, the Hamiltonian is constructed as
    \begin{equation}\label{eq:hamilton}
        \begin{aligned}
            &\hamilton\left(\vx\left(t\right),u\left(t\right),\modify{\lambda_0,}\vlambda\left(t\right),\veta\left(t\right),t\right)\\
            =&\modify{\lambda_0}+\lambda_1u+\sum_{k=2}^{n}\lambda_k x_{k-1}+\sum_{k=1}^{n}\eta_k\left(\abs{x_k}-M_k\right),
        \end{aligned}
    \end{equation}
    where \modify{$\lambda_0\geq0$.} $\vlambda\left(t\right)=\left(\lambda_1\left(t\right),\lambda_2\left(t\right),\dots,\lambda_n\left(t\right)\right)\in\R^n$ is the costate vector, \modify{and $\left(\lambda_0,\vlambda\left(t\right)\right)\not=0$. $\vlambda$ satisfies} the Euler-Lagrange equations \cite{fox1987introduction}:
    \begin{equation}
        \dot\lambda_k=-\frac{\partial\hamilton}{\partial x_k},\,k=1,2,\cdots,n,
    \end{equation}
    i.e.,
    \begin{equation}\label{eq:derivative_costate}
        \begin{dcases}
            \dot\lambda_k=-\lambda_{k+1}-\eta_k\,\sgn\left(x_k\right),\,k<n,\\
            \dot\lambda_n=-\eta_n\,\sgn\left(x_n\right).
        \end{dcases}
    \end{equation}
    The initial costates $\vlambda\left(0\right)$ and the terminal costates $\vlambda\left(\tf\right)$ are not assigned since $\vx\left(0\right)$ and $\vx\left(\tf\right)$ are given in problem \eqref{eq:optimalproblem}.

    In \eqref{eq:hamilton}, $\veta$ is the multiplier vector induced by inequality constraints \eqref{eq:optimalproblem_x_constraint}, satisfying
    \begin{equation}\label{eq:eta_constraint_zero}
        \eta_k\geq0,\,\eta_k\left(\abs{x_k}-M_k\right)=0,\,k=1,2,\cdots,n.
    \end{equation}
    Equivalently, $\forall t\in\left[0,t_\f\right]$, $\eta_k\left(t\right)\not=0$ only if $\abs{x_k\left(t\right)}=M_k$.

    PMP gives the results that the input control $u\left(t\right)$ minimizes the Hamiltonian $\hamilton$ \modify{\cite{hartl1995survey}}, i.e.,
    \begin{equation}
        u\left(t\right)\in\mathop{\arg\min}\limits_{\abs{U}\leq M_0}\hamilton\left(\vx\left(t\right),U,\modify{\lambda_0,}\vlambda\left(t\right),\veta\left(t\right),t\right).
    \end{equation}
    Therefore,
    \begin{equation}\label{eq:bang_singular_bang_law}
        u\left(t\right)=
        \begin{dcases}
            M_0,&\lambda_1\left(t\right)<0\\
            *,&\lambda_1\left(t\right)=0\\
            -M_0,&\lambda_1\left(t\right)>0
        \end{dcases},
    \end{equation}
    where $u\left(t\right)$ is undetermined during $\lambda_1\left(t\right)=0$. \modify{Evidently}, the value of $u\left(t\right)$ in a zero-measure set \modify{have no influence on the integral result}. However, a singular condition occurs when $\lambda_1\left(t\right)=0$ holds for a period of time. The control law for the 3rd order version of \eqref{eq:bang_singular_bang_law} was previously reasoned and named the Bang-Singular-Bang time-optimal control law in \cite{he2020time}.

    The continuity of the system is guaranteed in two folds. On one hand,
    \begin{equation}\label{eq:hamilton_equiv_0}
        \forall t\in\left[0,t_\f\right],\,\hamilton\left(\vx\left(t\right),u\left(t\right),\modify{\lambda_0,}\vlambda\left(t\right),\veta\left(t\right),t\right)\equiv0
    \end{equation}
    holds along the time-optimal trajectory. On the other hand, the junction condition \cite{hartl1995survey} is \modify{introduced} as a guarantee of \eqref{eq:hamilton_equiv_0}. More specifically, $\vlambda$ might jump when the state vector $\vx$ enters or exits the boundaries of the inequality constraints \eqref{eq:optimalproblem_x_constraint}. In other words, $\vlambda$ might not be continuous when $\vx$ switches between $\abs{x_k}< M_k$ and $\abs{x_k}=M_k$ for some $k$.

\section{System Behavior Analysis and Switching Laws for the Time-Optimal Control Problem}\label{sec:CostateSystemBehaviorAnalysis}
    Section \ref{sec:Problem_Formulation} has provided the problem form and some properties of system states and costates. The Bang-Singular-Bang control law \eqref{eq:bang_singular_bang_law} indicates the importance of the costate analysis. Section \ref{subsec:Jump_Condition_of_Costates} and Section \ref{subsec:Roots_and_Orders_of_Costates} analyze costates and system behaviors, respectively. Then, system behaviors of problem \eqref{eq:optimalproblem} are classified into finite ones and the theory of the switching law is developed in Section \ref{subsec:SwitchingLawandOptimalStateManifold}. Finally, the definite condition of problem \eqref{eq:optimalproblem} is induced by the proposed augmented switching law in Section \ref{sub2sec:TangentMarker}.

    \subsection{Jump Condition of Costates $\vlambda$}\label{subsec:Jump_Condition_of_Costates}
        As mentioned in Section \ref{sec:Problem_Formulation}, the costate vector $\vlambda$ might not be continuous \modify{at the connection of unconstrained arcs and constrained boundary}, i.e., $\abs{x_k}$ increases onto $M_k$ or decreases from $M_k$ for some $k$. The above behavior is called the junction condition (or jump condition) of costates \modify{\cite{jacobson1971new}}. The junction condition has a significant influence on the switching law of the optimal control $u$. 



        Assume the junction condition for $\abs{x_k}\leq M_k$ occurs at time $t_1$. \modify{According to \cite{jacobson1971new}}, $\exists\mu\modify{\leq0}$, s.t. 
        \begin{equation}\label{eq:junction_condition}
            \vlambda\left(t_1^+\right)-\vlambda\left(t_1^-\right)=\mu\frac{\partial\left(\abs{x_k}-M_k\right)}{\partial\vx}=\mu\,\sgn\left(x_k\right)\ve_k,
        \end{equation}
        where the $k$-th component of $\ve_k\in\R^n$ is 1, and other components of $\ve_k$ are 0. In other words, $\forall i\not=k$, $\lambda_i$ is continuous at $t_1$, while $\lambda_k$ might jump at $t_1$.

        Two cases for the junction condition exist\modify{:} (a) $\abs{x_k}\equiv M_k$ for a period of time, \modify{i.e., the connection of an unconstrained arc and a constrained arc;} (b) $\abs{x_k}$ touches $M_k$ at a single time point \modify{i.e., the connection of two unconstrained arcs at the constrained boundary}. The above two cases will be discussed in Section \ref{sub2sec:junction_period} and Section \ref{sub2sec:junction_TouchMaxTimePoint}. Among them, Case 1 induces the system behavior defined in Definition \ref{def:SystemBehavior}, while Case 2 induces the tangent marker in Definition \ref{def:TangentMarker}. \modify{The limit point of chattering phenomena in problem \eqref{eq:optimalproblem}, in a one-sided neighborhood of which $\abs{x_k}=M_k$ and $\abs{x_k}<M_k$ occur for infinite times, i.e., unconstrained arcs are connected at the constrained boundary with lengths converging to 0, will be investigated in our related work \cite{wang2024part2}.}

        \subsubsection{Case 1. $\abs{x_k}\equiv M_k$ for A Period of Time}\label{sub2sec:junction_period}
            \quad
            
            Without loss of generality, assume that $\forall t\in\left[t_1,t_2\right]$, $x_k\left(t\right)\equiv M_k$, and $\exists \delta>0$, $\forall t\in\left(t_1-\delta,t_1\right)$, $x_k\left(t\right)<M_k$. The case where $x_k$ leaves $M_k$ or $x_k\equiv-M_k$ \modify{is} similar.

            \modify{For} $t\in\left[t_1,t_2\right]$, $x_k\equiv M_k$, $\dot{x}_1=u$, and $\forall i<k$, $\dot{x}_{i+1}=x_i$. Hence, $u\equiv0$ and $\forall i<k$, $x_i\equiv0$. \modify{Note that $\forall i>k$, $x_i\left(t\right)$ is a polynomial of degree $\left(i-k\right)$, and $\abs{\frac{\mathrm{d}^{\left(i-k\right)}x_i}{\mathrm{d}t^{\left(i-k\right)}}}\equiv M_k\not=0$. So $\abs{x_i}=M_i$ holds at most \modify{for} $2\left(i-k\right)$ number of points. In other words, $\forall i>k$, $\abs{x_i}<M_i$ holds except at finite time points.} From \eqref{eq:eta_constraint_zero} and \eqref{eq:bang_singular_bang_law}, $\lambda_1\equiv0$, and $\forall i\not=k$, $\eta_i\equiv0$ almost everywhere. It can be reasoned from \eqref{eq:derivative_costate} that $\forall i\leq k$, $\lambda_i\equiv0$. The term ``almost everywhere'' means a proposition holds for all points except for a zero-measure set \cite{stein2009real}. 


            Furthermore, in the case where $k=1$, $\lambda_1$ keeps continuous despite the junction condition \eqref{eq:junction_condition}. According to \eqref{eq:hamilton_equiv_0},
            \begin{equation}
                \hamilton^\pm=\modify{\lambda_0}+\lambda_1^{\modify{\pm}}u^\pm +\sum_{k=2}^{n}\lambda_k x_{k-1}=0.
            \end{equation} 
            \modify{From $\hamilton^+=\hamilton^-$}, $\lambda_1^+u^+=\lambda_1^-u^-$. The notation $\bullet^\pm$ means the left and right-hand limits of variable $\bullet$ at the junction time. Note that $u^+=0$, $u^-=M_0$, $\lambda_1^+=0$, so $\lambda_1^-=0$. Therefore, $\lambda_1^+=\lambda_1^-=0$, i.e., $\lambda_1$ keeps continuous at the junction time. 

            \modify{N}oted that the case where $\abs{x_n}\equiv M_n$ for a period of time does not exist. By contradiction, if $\exists t_1<t_2$, s.t. $\forall t\in\left[t_1,t_2\right]$, $\abs{x_n\left(t\right)}\equiv M_n$, then $x_1\left(t\right)=x_2\left(t\right)=\dots=x_{n-1}\left(t\right)=0$ \modify{for} $t\in\left[t_1,t_2\right]$. In other words, the system state vector $\vx$ parks at $\pm M_n\ve_n$ for a period of time, which contradicts the time-optimality. As a corollary, \eqref{eq:derivative_costate} can be written as
            \begin{equation}\label{eq:derivative_costate_lambdan_zero}
                \begin{dcases}
                    \dot\lambda_k=-\lambda_{k+1}-\eta_k\,\sgn\left(x_k\right),\,k<n,\\
                    \dot\lambda_n=0,
                \end{dcases}
            \end{equation}
            since $\eta_n=0$ almost everywhere. It should be pointed out that $\lambda_n$ might not be constant though $\dot\lambda_n=0$, because $\lambda_n$ might jump when $\abs{x_n}$ touches $M_n$ in the following Case 2.

            \modify{An example of Case 1 is shown in Fig. \ref{fig:costate_show1}. $\abs{x_1}\equiv M_1$ \modify{for} $t\in\left[t_1,t_2\right]\cup\left[t_5,t_6\right]$, but $\lambda_1$ keeps continuous despite to the junction condition \eqref{eq:junction_condition}. $x_2\equiv M_2$ \modify{for} $t\in\left[t_3,t_4\right]$, and $\lambda_2$ jumps decreasingly at $t_3$ and $t_4$.} 

            \begin{figure}[!t]
                \centering
                \includegraphics[width=\columnwidth]{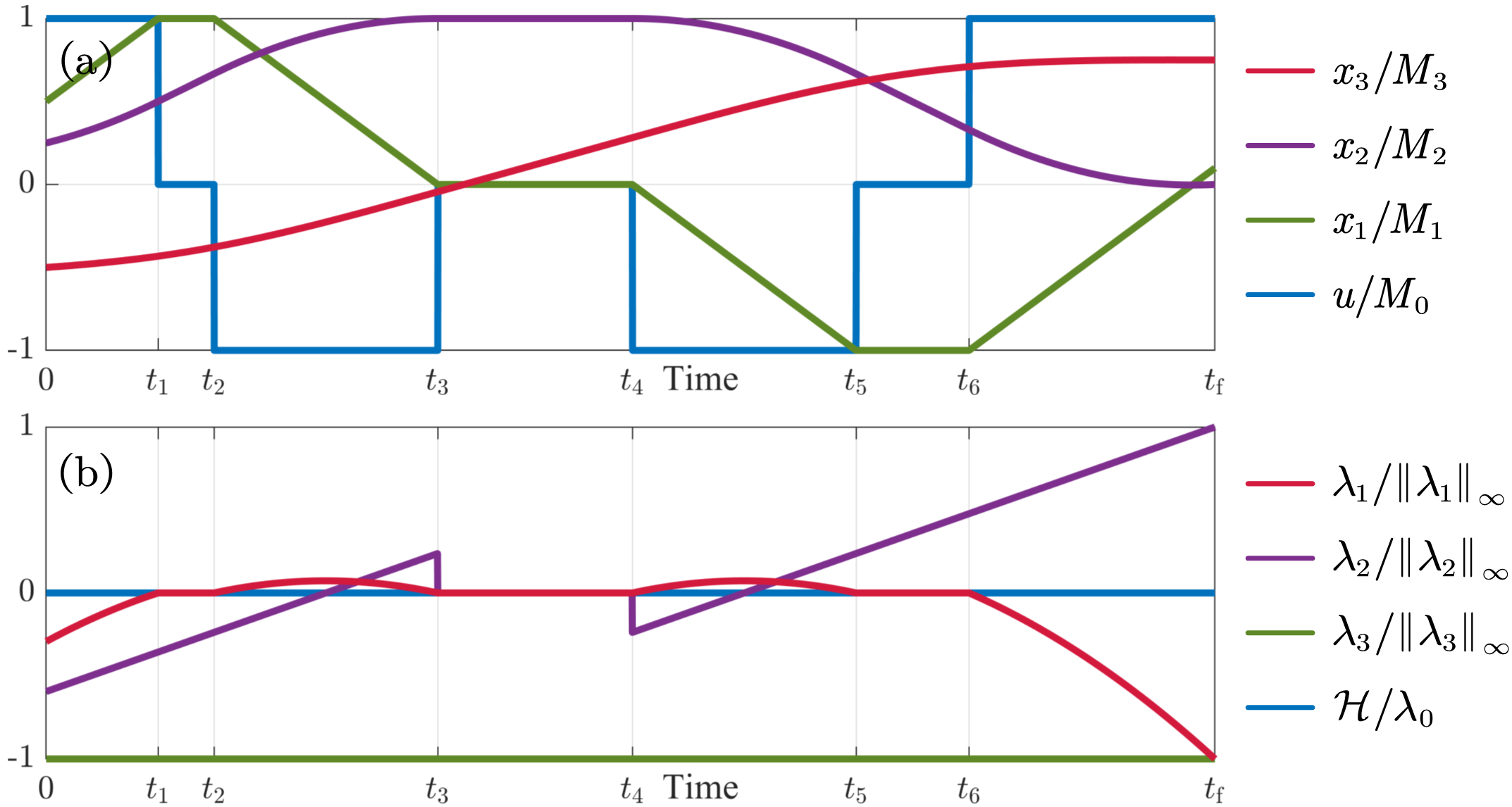}
                \caption{\modify{A 3rd order optimal trajectory represented by $\overline{01}\underline{0}\overline{2}\underline{01}\overline{0}$ in this paper, where $\lambda_0>0$. (a) The state vector. (b) The costate vector.}}
                \label{fig:costate_show1}
            \end{figure} 

        \subsubsection{Case 2. $\abs{x_k}$ Touches $M_k$ at A Single Time Point}\label{sub2sec:junction_TouchMaxTimePoint}
            \quad
            
            Without loss of generality, assume that $x_k$ touches $M_k$ at $t_1$, i.e., $x_k\left(t_1\right)=M_k$, and $\exists \delta>0$, $\forall t\in\left(t_1-\delta,t_1\right)\cup\left(t_1,t_1+\delta\right)$, $x_k\left(t\right)<M_k$. The case where $x_k$ touches $-M_k$ can be reasoned similarly.
            


            The case where $k=1$ is evident to Case 1 in Section \ref{sub2sec:junction_period}. \modify{Without loss of generality, assume $x_1$ touches $M_1$ at $t_1$. $\exists\varepsilon>0$, $\forall t\in\left(0,\varepsilon\right)$, $u\left(t_1+t\right)<0$, while $u\left(t_1-t\right)>0$. By \eqref{eq:bang_singular_bang_law}, $\forall t\in\left(0,\varepsilon\right)$, $\lambda_1\left(t_1-t\right)\leq0\leq\lambda_1\left(t_1+t\right)$; hence, $\lambda_1\left(t_1^+\right)\geq\lambda_1\left(t_1^-\right)$. \eqref{eq:junction_condition} indicates that $\lambda_1\left(t_1^+\right)-\lambda_1\left(t_1^-\right)\leq0$. So $\lambda_1\left(t_1^+\right)=\lambda_1\left(t_1^-\right)=0$, i.e., $\lambda_1$ keeps continuous at $t_1$.} 
            
            In the case where $k\geq 2$, $\exists \varepsilon\in\left(0,\delta\right)$, s.t. $\forall t\in\left(0,\varepsilon\right)$, $u\left(t_1+t\right)\equiv u^+$, and $u\left(t_1-t\right)\equiv u^-$. \modify{(a) If $x_{k-1}=x_{k-2}=\dots=x_{1}=0$ at $t_1$, then $\forall t\in\left(0,\varepsilon\right)$, $x_k\left(t_1+t\right)=M_k+\frac{u^+}{k!}t^k< M_k$, and $x_k\left(t_1-t\right)=M_k+\frac{u^-}{k!}\left(-t\right)^k< M_k$. Then, $u^+=-M_0$, $u^-=\left(-1\right)^{k-1}M_0$. (b) If $\exists i=1,2,\dots,k-1$, $x_i\left(t_1\right)\not=0$, then let $h^*=\mathrm{argmin}\left\{h:x_{k-h}\left(t_1\right)\not=0\right\}>1$, noting that $x_{k-1}\left(t_1\right)=\where{\frac{\mathrm{d}x_k}{\mathrm{d}t}}{t=t_1}=0$. Since $x_{k-h^*}$ is continuous, $\exists \tilde{\varepsilon}\in\left(0,\varepsilon\right)$, $\forall t\in\left(-\tilde{\varepsilon},\tilde{\varepsilon}\right)$, $\abs{x_{k-h^*}\left(t_1+t\right)-x_{k-h^*}\left(t_1+t\right)}<\frac12\abs{x_{k-h^*}\left(t_1\right)}$; hence, $\sgn\left(x_{k-h^*}\left(t_1+t\right)\right)\equiv\sgn\left(x_{k-h^*}\left(t_1\right)\right)$. By Taylor expansions of $x_k$ at $t_1$, $\forall t\in\left[-\tilde{\varepsilon},\tilde{\varepsilon}\right]$, $\exists\theta_t\in\left(0,1\right)$, $x_k\left(t_1+t\right)-x_k\left(t_1\right)=\frac{t^{h^*}}{h^*!}x_{k-h^*}\left(t_1+\theta_t t\right)<0$. Therefore, $x_{k-h^*}\left(t_1\right)<0$ and $h^*$ is even.} 

            For cases where (a) $n\leq2$, and where (b) $n=3$ with $M_3=\infty$\modify{, there exist costate vectors for optimal trajectories which do not jump when Case 2 occurs, noting that the costate vector can be non-unique for one optimal solution.} In the case where $n=3$ with $M_3<\infty$, the jerk-limited time-optimal problem under position constraints is still not well solved yet. He et al. \cite{he2020time} developed the analytic expression of switching surfaces in 3rd order with zero terminal states, where Case 2, fortunately, does not exist with zero terminal states. Reflexxes \cite{kroger2011opening} and Ruckig in community version \cite{berscheid2021jerk}, the two most famous open-source online trajectory planning packages, are not able to plan jerk-limited trajectories under position constraints. 

            \modify{An example of Case 2 is shown in Fig. \ref{fig:costate_show2}. Though $\dot\lambda_3\equiv0$ almost everywhere, $\lambda_3$ jumps at $t_3$ since $x_3$ touches $M_3$ at $t_3$. Two critical corollaries based on analysis above are as follows:} 

            

            \begin{figure}[!t]
                \centering
                \includegraphics[width=\columnwidth]{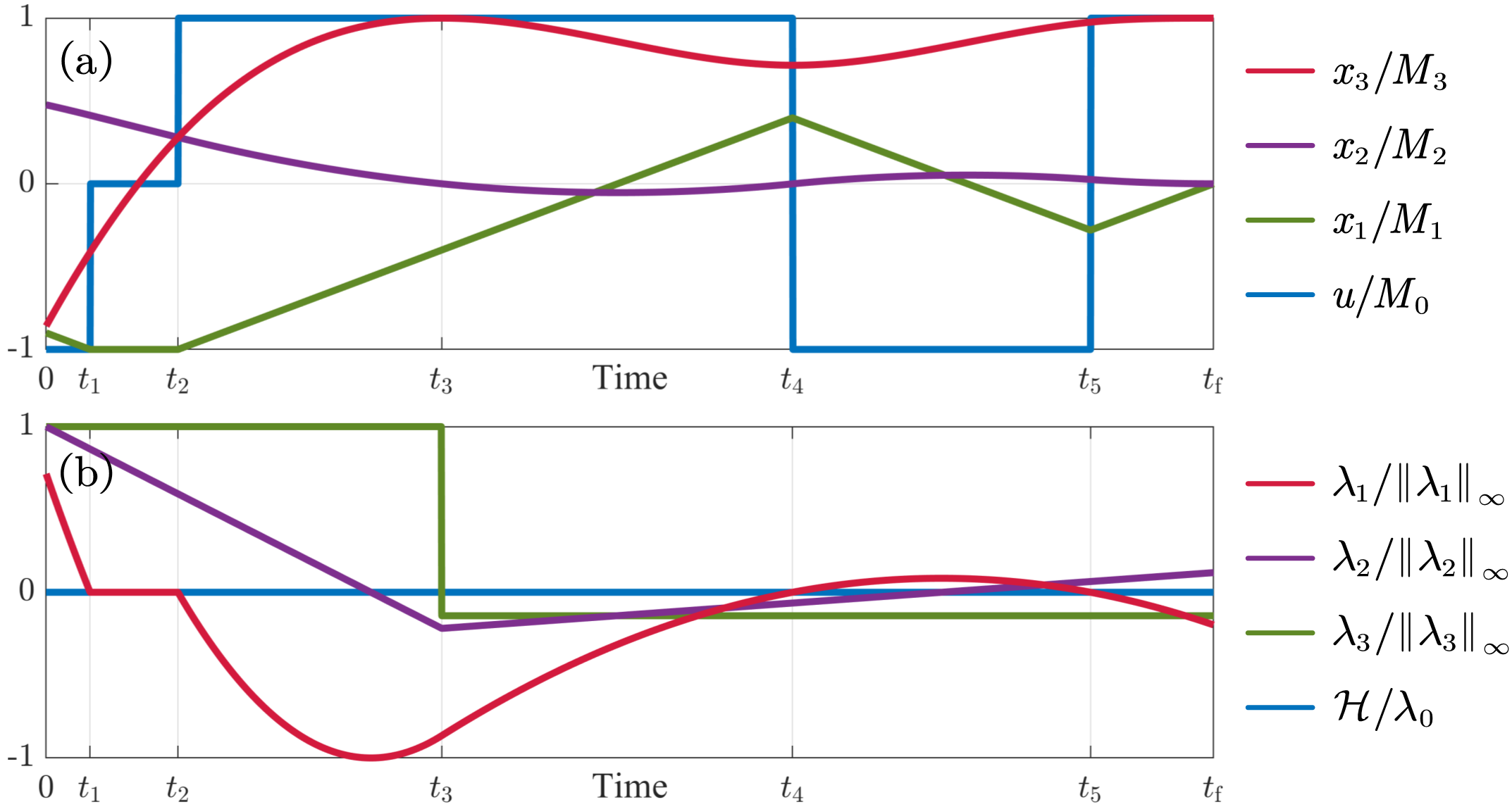}
                \caption{\modify{A 3rd order optimal trajectory represented by $\underline{01}\overline{0}\left(\overline{3},2\right)\overline{0}\underline{0}\overline{0}$ in this paper, where $\lambda_0>0$. (a) The state vector. (b) The costate vector.}}
                \label{fig:costate_show2}
            \end{figure} 

            \begin{proposition}[Bang-Singular-Bang Control Law]\label{prop:bang_zero_bang_law}
                The optimal control $u\left(t\right)$ of \eqref{eq:optimalproblem} satisfies
                \begin{equation}\label{eq:bang_zero_bang_law}
                    u\left(t\right)=
                    \begin{dcases}
                        M_0,&\lambda_1\left(t\right)<0\\
                        0,&\lambda_1\left(t\right)=0\\
                        -M_0,&\lambda_1\left(t\right)>0
                    \end{dcases}\text{ almost everywhere.}
                \end{equation}
            \end{proposition}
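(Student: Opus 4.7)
Equation \eqref{eq:bang_singular_bang_law} already pins $u(t)$ to $\pm M_0$ wherever $\lambda_1(t)\neq 0$, so the only substantive task is to show that $u\equiv 0$ on the zero set $Z=\{t:\lambda_1(t)=0\}$, in the almost-everywhere sense. Isolated points of $Z$ form a measure-zero set on which the value of $u$ is irrelevant, so I would fix an arbitrary sub-interval $I\subset[0,\tf]$ on which $\lambda_1\equiv 0$ and prove $u\equiv 0$ almost everywhere on $I$.

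On such an $I$, the plan is to cascade the Euler--Lagrange relations \eqref{eq:derivative_costate}. Differentiating $\lambda_1\equiv 0$ and invoking the first line of \eqref{eq:derivative_costate} yields $\lambda_2=-\eta_1\,\sgn(x_1)$; the complementary slackness condition \eqref{eq:eta_constraint_zero} then forces $\eta_1\equiv 0$, and hence $\lambda_2\equiv 0$, on any portion of $I$ where $|x_1|<M_1$. Iterating this step upward through the chain of costates, on any sub-interval of $I$ on which no state constraint is active one obtains $\lambda_k\equiv 0$ for every $k$. Substituting into the Hamiltonian identity \eqref{eq:hamilton_equiv_0} then collapses $\hamilton$ to the constant $1\neq 0$, contradicting $\hamilton\equiv 0$. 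Thus on every sub-interval of $I$ of positive length at least one state constraint $|x_k|\equiv M_k$ must be active.

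Once this is established, the Case~1 analysis of Section~\ref{sub2sec:junction_period} applies directly: wherever $|x_k|\equiv M_k$ holds on a sub-interval, the chain-of-integrators relations $\dot x_{j+1}=x_j$ force $x_{k-1},x_{k-2},\dots,x_1$ to vanish identically there, whence $u=\dot x_1\equiv 0$. Stitching the sub-interval conclusions together gives $u\equiv 0$ almost everywhere on $I$, and combined with \eqref{eq:bang_singular_bang_law} this yields \eqref{eq:bang_zero_bang_law}.

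\textbf{Main obstacle.} The delicate part is bookkeeping the partition of $I$ into maximal sub-intervals on which a particular active-constraint index $k$ is in force, since that index may change along $I$ and the cascade must be re-started at each transition. Fortunately the conclusion $u\equiv 0$ is insensitive to which $k$ is active, so the transition points contribute only a measure-zero exceptional set; nevertheless, making the patching precise (in particular ensuring that the cascade is not prematurely truncated by a constraint that is active only on a proper sub-segment) is where the argument deserves the most care.
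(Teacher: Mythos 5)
Your proposal is correct and follows essentially the same route as the paper's own (much terser) proof: cascade the costate equations on any interval where $\lambda_1\equiv 0$ to conclude that either some state constraint $\abs{x_k}\equiv M_k$ is active (whence the Case~1 kinematic argument gives $u\equiv 0$) or $\vlambda\equiv\vzero$, which contradicts \eqref{eq:hamilton_equiv_0}. Your version merely spells out the complementary-slackness bookkeeping and the sub-interval patching that the paper leaves implicit.
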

            \begin{proof}
                $\lambda_1\left(t\right)\equiv0$ occurs only if $\exists k$, s.t. $\abs{x_k\left(t\right)}\equiv M_k$, or else $\vlambda\left(t\right)\equiv\vzero$. \modify{If $\vlambda\left(t\right)\equiv\vzero$, then \eqref{eq:hamilton_equiv_0} implies that $\lambda_0=0$, which contradicts $\left(\lambda_0,\vlambda\left(t\right)\right)\not=0$.} Therefore, $u\left(t\right)\equiv0$ if $\lambda_1\left(t\right)\equiv0$. Hence, \eqref{eq:bang_zero_bang_law} is equivalent to \eqref{eq:bang_singular_bang_law} almost everywhere.
            \end{proof}


            \begin{proposition}\label{prop:unique_optimal_control}
                \modify{If \eqref{eq:optimalproblem} has an optimal control, then the optimal control is unique in an almost everywhere sense. In other words, if $u=u_1^*\left(t\right)$ and $u=u_2^*\left(t\right)$, $t\in\left[0,t_\f^*\right]$, are both the optimal control of \eqref{eq:optimalproblem}, then $u_1^*\left(t\right)=u_2^*\left(t\right)$ almost everywhere.}
            \end{proposition}
            
            \begin{proof}
                \modify{Denote $\mu$ as the Lebesgue measure \cite{stein2009real} on $\R$, and $Q_1\triangleq\left\{t:u_1^*\left(t\right),u_2^*\left(t\right)\in\left\{0,\pm M_0\right\}\right\}$. From Proposition \ref{prop:bang_zero_bang_law}, $\mu\left(Q_1\right)=t_\f^*$. Let $Q_2\triangleq\left\{t:u_1^*\left(t\right)\not=u_2^*\left(t\right)\right\}$. If $\mu\left(Q_2\right)>0$, then denote $u_3^*\left(t\right)=\frac14u_1^*\left(t\right)+\frac34u_2^*\left(t\right)$, $t\in\left[0,t_\f^*\right]$. It can be verified that $u_3^*\left(t\right)$ is also an optimal control of problem \eqref{eq:optimalproblem}. However, $\forall t\in Q_1\cap Q_2$, $u_3^*\left(t\right)\not\in\left\{0,\pm M_0\right\}$; hence, $0=\mu\left\{u_3^*\left(t\right)\not\in\left\{0,\pm M_0\right\}\right\}>\mu\left(Q_1\cap Q_2\right)=\mu\left(Q_2\right)>0$, which contradicts Proposition \ref{prop:bang_zero_bang_law}. Therefore, $\mu\left(Q_2\right)=0$.} 
            \end{proof}

    \subsection{System Behavior}\label{subsec:Roots_and_Orders_of_Costates}




        The case where the inequality constraints \eqref{eq:optimalproblem_x_constraint} hold strictly, i.e., $\forall k$, $\abs{x_k}<M_k$, is first considered. \modify{In this} case, it is evident that $\vlambda$ does not jump and $\veta\equiv\vzero$ in this period. Therefore,
        \begin{equation}\label{eq:dotlambda_without_constraint}
            \begin{dcases}
                \dot\lambda_k=-\lambda_{k+1},\,k<n\\
                \dot\lambda_n=0
            \end{dcases},\text{ if }\forall t\in\left[t_1,t_2\right],\,\abs{\vx\left(t\right)}<\vM.
        \end{equation}
        Since $\vlambda$ is continuous, \eqref{eq:dotlambda_without_constraint} indicates that $\lambda_k\left(t\right)$ is an $\left(n-k\right)$-th degree polynomial for $t\in\left[t_1,t_2\right]$. According to the fundamental theorem of algebra \cite{stein2009real}, $\lambda_k$ has no more than $\left(n-k\right)$ roots \modify{for} $t\in\left[t_1,t_2\right]$. A corollary, also well-known in previous works \cite{lee1967foundations}, is reasoned as follows:
        \begin{proposition}
            Assume $\forall t\in\left[t_1,t_2\right]$, $\forall k$, $\abs{x_k\modify{\left(t\right)}}<M_k$ in problem \eqref{eq:optimalproblem}. Then, the optimal control $u$ switches between $M_0$ and $-M_0$ with no more than $\left(n-1\right)$ times \modify{in $\left[t_1,t_2\right]$}.
        \end{proposition}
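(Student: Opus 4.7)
The plan is to combine the Bang-Singular-Bang control law of Proposition \ref{prop:bang_zero_bang_law} with the polynomial structure of $\lambda_1$ inferred from \eqref{eq:dotlambda_without_constraint}, and then invoke the fundamental theorem of algebra to count sign changes.

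First I would observe that the assumption $\abs{x_k}<M_k$ on $[t_1,t_2]$ forces $\veta\equiv\vzero$ there by \eqref{eq:eta_constraint_zero}, and also precludes any junction event of the form \eqref{eq:junction_condition}, so $\vlambda$ is continuous on the interval. Integrating \eqref{eq:dotlambda_without_constraint} starting from $\dot\lambda_n\equiv 0$ and working downward through $\dot\lambda_k=-\lambda_{k+1}$, $\lambda_{n-j}$ is a polynomial of degree at most $j$ in $t$. In particular, $\lambda_1$ is a polynomial of degree at most $n-1$ on $[t_1,t_2]$.

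Next, I need to rule out the degenerate case $\lambda_1\equiv 0$ on $[t_1,t_2]$. If this held, then $\dot\lambda_1\equiv 0$ combined with \eqref{eq:dotlambda_without_constraint} gives $\lambda_2\equiv 0$; a straightforward induction on the index yields $\vlambda\equiv\vzero$ on the interval. Substituting $\vlambda=\vzero$ and $\veta=\vzero$ into the Hamiltonian \eqref{eq:hamilton} would yield $\hamilton\equiv 1$, contradicting \eqref{eq:hamilton_equiv_0}. Thus $\lambda_1$ is a nontrivial polynomial of degree at most $n-1$, and by the fundamental theorem of algebra it has at most $n-1$ zeros in $[t_1,t_2]$, so it changes sign at most $n-1$ times. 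By Proposition \ref{prop:bang_zero_bang_law}, $u$ equals $M_0$ on $\{\lambda_1<0\}$ and $-M_0$ on $\{\lambda_1>0\}$, and each switch between $M_0$ and $-M_0$ forces $\lambda_1$ to pass through a sign change, so the number of switches is bounded by $n-1$.

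The only nontrivial obstacle is the justification that $\lambda_1$ cannot be identically zero on $[t_1,t_2]$, since otherwise the polynomial-degree count is vacuous; this step crucially uses the Hamiltonian identity $\hamilton\equiv 0$ together with the interior assumption that kills $\veta$, leaving the constant term $1$ as a contradiction. Everything else reduces to routine use of the ODE \eqref{eq:dotlambda_without_constraint} and Proposition \ref{prop:bang_zero_bang_law}.
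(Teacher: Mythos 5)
Your proposal is correct and follows essentially the same route as the paper: continuity of $\vlambda$ in the constraint-interior, the polynomial structure of $\lambda_1$ from \eqref{eq:dotlambda_without_constraint}, exclusion of $\lambda_1\equiv 0$ via the Hamiltonian identity \eqref{eq:hamilton_equiv_0} with $\veta\equiv\vzero$, and the root count combined with Proposition \ref{prop:bang_zero_bang_law}. Your write-up is slightly more explicit than the paper's in spelling out the induction $\lambda_1\equiv0\Rightarrow\lambda_2\equiv0\Rightarrow\cdots\Rightarrow\vlambda\equiv\vzero$ before invoking the contradiction $\hamilton\equiv1$, which the paper leaves implicit.
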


        \begin{proof}
            If $\forall t\in\left[t_1,t_2\right]$, $\forall k$, $\abs{x_k}<M_k$, then $\vlambda$ is continuous. Specifically, $\lambda_n\equiv\mathrm{const}$. By \eqref{eq:dotlambda_without_constraint}, $\lambda_1$ is an $\left(n-1\right)$-th degree polynomial for $t\in\left[t_1,t_2\right]$. Note that $\lambda_1\equiv0$ contradicts \eqref{eq:hamilton_equiv_0} when $\veta\equiv0$\modify{.} Therefore, $\lambda_1$ has no more than $\left(n-1\right)$ roots. According to Proposition \ref{prop:bang_zero_bang_law}, $u$ switches between $M_0$ and $-M_0$ with no more than $\left(n-1\right)$ times.
        \end{proof}

        The properties of states and costates reasoned in Section \ref{subsec:Jump_Condition_of_Costates} and Section \ref{subsec:Roots_and_Orders_of_Costates} \modify{are} summarized as follows:
        \begin{theorem}\label{thm:costate}
            The following propositions hold for the optimal control of problem \eqref{eq:optimalproblem}.
            \begin{enumerate}
                \item $u\left(t\right)=-\sgn\left(\lambda_1\left(t\right)\right)M_0$ almost everywhere. Specifically, $u\left(t\right)\modify{\equiv}0$ if $\lambda_1\left(t\right)\modify{\equiv}0$.
                \item $\forall t\in\left[0,\tf\right]$, $\vlambda\left(t\right)\not=\vzero$.
                \item\label{subthm:lambda1continue} $\lambda_1$ and $\vx$ are continuous, while $\lambda_k\left(k>1\right)$ might jump at junction time.
                \item\label{subthm:lambdak_polynomial} $\lambda_k$ consists of $\left(n-k\right)$-th degree polynomials and zero. Specifically, $\lambda_1\equiv0$ for a period of time if and only if $\exists k,\abs{x_k}\equiv M_k$ during the period of time.
                \item For $k<n$, cases might exist where $\exists t_1<t_2,\delta>0$, s.t. $\forall t\in\left[t_1,t_2\right]$, $\abs{x_k\left(t\right)}\equiv M_k$, while $\forall t\in\left(t_1-\delta,t_1\right)\cup\left(t_2,t_2+\delta\right)$, $\abs{x_k\left(t\right)}<M_k$. Then, $\forall t\in\left[t_1,t_2\right]$,
                \begin{enumerate}
                    \item $\forall i\leq k$, $\lambda_i\left(t\right)\equiv0$, while $\forall i>k$, $\lambda_i\left(t\right)$ is an $\left(n-i\right)$-th degree polynomial. Furthermore, $\lambda_{k+1}\left(t\right)$ is not always zero.
                    \item $u\left(t\right)\equiv0$. $\forall i<k$, $x_i\left(t\right)\equiv0$.
                    \item \modify{$\forall i\not=k$, $\lambda_i$ is continuous.} \modify{Only if} $1<k<n$, $\lambda_k$ might jump at $t_1$ and $t_2$. 
                \end{enumerate}
                \item\label{subthm:costate_junction_secondbracket} For $k>2$, case might exist where $\exists t_1\in\left(0,\tf\right)$, $\delta>0$, s.t. $\abs{x_k\left(t_1\right)}=M_k$, while $\forall t\in\left(t_1-\delta,t_1\right)\cup\left(t_1,t_1+\delta\right)$, $\abs{x_k\left(t\right)}<M_k$. Then, \modify{$\lambda_k$ might jump at $t_1$, and one and only one of the following cases holds:
                \begin{enumerate}
                    \item $\exists l<\frac{k}{2}$, s.t. $x_{k-1}=x_{k-2}=\dots=x_{k-2l+1}=0$ at $t_1$, while $x_{k-2l}\left(t_1\right)\not=0$, and $\sgn\left(x_{k-2l}\left(t_1\right)\right)=-\sgn\left(x_{k}\left(t_1\right)\right)$. Denote $h=2l$ as the degree of $\abs{x_k\left(t_1\right)}=M_k$.
                    \item $x_{k-1}=x_{k-2}=\dots=x_1=0$ at $t_1$. $u\left(t_1^+\right)=-\frac{M_0}{M_k}x_k\left(t_1\right)$, and $u\left(t_1^-\right)=\left(-1\right)^{k-1}\frac{M_0}{M_k}x_k\left(t_1\right)$. Denote $h=k$ as the degree of $\abs{x_k\left(t_1\right)}=M_k$.
                \end{enumerate}} 
            \end{enumerate}
        \end{theorem}

        \begin{proof}
            Based on discussion in Section \ref{subsec:Jump_Condition_of_Costates} and Section \ref{subsec:Roots_and_Orders_of_Costates}, Theorem \ref{thm:costate} is evident.
        \end{proof}

        Theorem \ref{thm:costate} \modify{fully lists} system behavior\modify{s} in a single stage \modify{of a sub-arc without limit points of chattering phenomena in an optimal trajectory, since connections between unconstrained arcs or constrained arcs at constrained boundaries with positive length have been fully discussed}. The system behavior is classified into finite ones and is proposed formally as follows:

        \begin{definition}\label{def:SystemBehavior}
            \textbf{System behavior} of problem \eqref{eq:optimalproblem} at a single stage is denoted as follows:
            \begin{enumerate}
                \item The stage where $u\equiv M_0\,\left(-M_0\right)$ is denoted as $\overline{0}\,\left(\underline{0}\right)$.
                \item The stage where $x_k\equiv M_k\,\left(-M_k\right), u\equiv0$, and $\forall i<k,x_i\equiv0$, is denoted as $\overline{k}\,\left(\underline{k}\right)$.
                \item $\forall 0\leq k\leq n$, the signs of $\overline{k}$ and $\underline{k}$ are denoted \modify{by} $\sgn\left(\overline{k}\right)=1$, $\sgn\left(\underline{k}\right)=-1$. The value of $\overline{k}$ and $\underline{k}$ is denoted as $\abs{\overline{k}}=\abs{\underline{k}}=k$.
            \end{enumerate}
        \end{definition}

        Based on Theorem \ref{thm:costate} and Definition \ref{def:SystemBehavior}, system behaviors during the whole moving process can be studied. In the following, the sign of a system behavior can be left out if no ambiguity exists.

    \subsection{Switching Law and Optimal-Trajectory Manifold}\label{subsec:SwitchingLawandOptimalStateManifold}

        Building on the analysis of the system behavior in Section \ref{subsec:Jump_Condition_of_Costates} and Section \ref{subsec:Roots_and_Orders_of_Costates}, this section focuses on how the system behavior switches along the time-optimal trajectory. The core idea in this section is the switching law and the optimal-trajectory manifold for time-optimal control, which are defined in Section \ref{sub2sec:SwitchingLawandOptimalStateManifold_Definition}. The properties of the switching law on dimension and sign are reasoned in Section \ref{sub2sec:AnalysisOfDimension} and Section \ref{sub2sec:AnalysisOfSign}, respectively.

        \subsubsection{Definitions}\label{sub2sec:SwitchingLawandOptimalStateManifold_Definition}
            \quad
            
            

            \begin{definition}\label{def:switchinglaw}
                Given a problem $\problem$, assume the time-optimal trajectory passes through system behaviors $s_1,s_2,\dots,s_N$ successively. Then, the series of system behaviors $S=s_1s_2\cdots s_N$ is called the \textbf{switching law} w.r.t. $\problem$, denoted as $S=\swithchlaw\left(\problem\right)$, where $N\in\N\modify{{}^*}$ is called the length of $S$.
            \end{definition}

            \modify{The switching law of a problem is unique according to Proposition \ref{prop:unique_optimal_control}.} As an example, two second order optimal problems with the same terminal state vector are shown in Fig. \ref{fig:2order_demo_switchinglaw}. The switching law of $\problem_1=\problem\left(\vx_0^{\left(1\right)},\vx_\f;\vM\right)$ is $\swithchlaw\left(\problem_1\right)=\underline{0}\underline{1}\overline{0}$. In other word\modify{s}, along the time-optimal trajectory from $\vx_0^{\left(1\right)}$ to $\vx_\f$, $\exists 0<t_1<t_2<t_3<\tf$, s.t. 
            \begin{enumerate}
                \item $\underline{0}$: $\forall t\in\left(0,t_1\right),u\left(t\right)\equiv-M_0$. $\vx$ starts from $\vx\left(0\right)=\vx_0^{\left(1\right)}$, and $x_1$ enters $-M_1$ at $t_1$.
                \item $\underline{1}$: $\forall t\in\left(t_1,t_2\right),x_1\left(t\right)\equiv -M_1$, while $u\left(t\right)\equiv\modify{0}$. 
                \item $\overline{0}$: $\forall t\in\left(t_2,t_\f\right),u\left(t\right)\equiv M_0$. $x_1$ exits $-M_1$ at $t_2$, and the system state vector $\vx$ reaches the terminal state vector $\vx_\f$ at $\tf$.
            \end{enumerate}
            A similar analysis applies to $\problem_2=\problem\left(\vx_0^{\left(2\right)},\vx_\f;\vM\right)$. From examples shown in Fig. \ref{fig:2order_demo_switchinglaw}, it can be observed that the switching law depends on the initial states.

            Notably, a switching law focuses on how the system behavior switches along the trajectory, without information on motion time. The optimal control of problem \eqref{eq:optimalproblem} consists of a switching law and motion time for every system behavior, \modify{whose} definite condition\modify{s} will be discussed in Section \ref{sub2sec:TangentMarker}.

            For an $n$-th order problem \eqref{eq:optimalproblem}, a natural approach is to employ mathematical induction, utilizing the solutions of lower-order problems to solve the $n$-th order problem. The above idea induces the definitions of \modify{the} sub-problem and optimal-trajectory manifold.

            For the convenience of discussion, the optimal trajectory, the optimal control, and the terminal time of a problem $\problem$ are denoted \modify{by} $\vx^*\left(t;\problem\right),u^*\left(t;\problem\right)$, and $t_\f^*\left(\problem\right)$, respectively.

            \begin{figure}[!t]
                \centering
                \includegraphics[width=0.65\columnwidth]{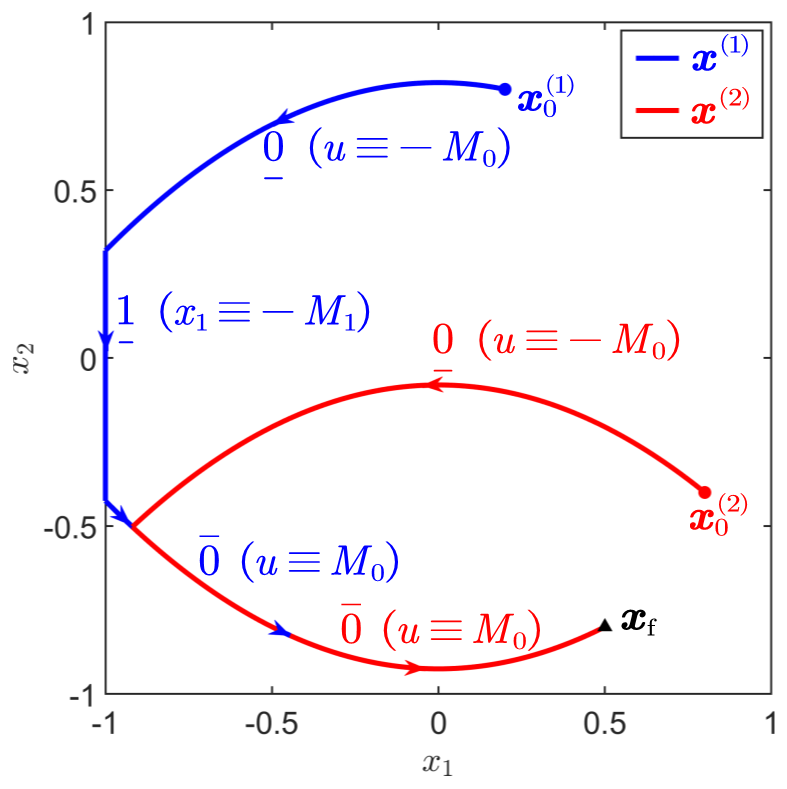}
                \caption{Switching laws for 2 optimal problems with the same terminal state vector $\vx_\f$, i.e., $\problem_1=\problem\left(\vx_0^{\left(1\right)},\vx_\f;\vM\right)$ and $\problem_2=\problem\left(\vx_0^{\left(2\right)},\vx_\f;\vM\right)$. Among them, $n=2$, $\vM=\left(1,1,1\right)$, $\mathcal{S}\left(\problem_1\right)=\underline{0}\underline{1}\overline{0}$, and $\mathcal{S}\left(\problem_2\right)=\underline{0}\overline{0}$.
                }
                \label{fig:2order_demo_switchinglaw}
            \end{figure}



            \begin{definition}
                For an $n$-th order problem $\problem=\problem\left(\vx_0,\vxf;\vM\right)$ and $0\leq a_i\leq b_i\leq n$, $a_1,a_2\geq1$, a \textbf{sub-problem} of $\problem$ is $\hat{\problem}=\problem\left(\vx_0^{a_1:b_1},\vx_\f^{a_2:b_2};\vM^{a_3:b_3}\right)$, i.e., the time-optimal problem for chain-of-integrators with the following constraints:
                \begin{equation}
                    \begin{dcases}
                        x_k\left(0\right)=x_{0k},&a_1\leq k\le b_1\\
                        x_k\left(\tf\right)=x_{\f k},&a_2\leq k\le b_2,\\
                        \abs{x_k}\leq M_k,&a_3\leq k\le b_3.
                    \end{dcases}
                \end{equation}
            \end{definition}

            \begin{definition}
                Two time-optimal problems $\problem_1,\problem_2$ are \textbf{equivalent}, denoted as $\problem_1\Leftrightarrow\problem_2$, if $\problem_1$ and $\problem_2$ have the same solution of optimal control, i.e.,
                \begin{equation}
                    \begin{dcases}
                        t_\f^*\left(\problem_1\right)=t_\f^*\left(\problem_2\right),\IEEEyesnumber\IEEEyessubnumber*\\
                        u^*\left(t,\problem_1\right)=u^*\left(t,\problem_2\right)\,\text{\modify{almost everywhere}}.
                    \end{dcases}
                \end{equation} 
            \end{definition}
            \modify{The equivalence between two problems is well-defined since the optimal for each problem is unique, according to Proposition \ref{prop:unique_optimal_control}.} Two problems are equivalent when their boundary conditions can replace each other. Specifically, if a problem $\problem$ is equivalent to its sub-problem $\hat{\problem}$, then some boundary conditions and constraints of $\problem$ is unnecessary. The above observation contributes to the following definition:

            \begin{definition}
                Given $\vxf\in\R^n$ and $\vM\in\R_{++}\times\overline{\R}_{++}^{n}$, define
                \begin{equation}\label{eq:definition_state_optimal_manifold}
                    \begin{aligned}
                        \manifold_{k}\left(\vxf,\vM\right)\triangleq\bigcup\left\{\vx_0\in\R^n:\problem\left(\vx_0,\vx_\f;\vM\right)\Leftrightarrow\right. \\ 
                        \left.\problem\left(\vx_0^{1:k},\vx_\f^{1:k};\vM\right)\text{ are feasible}\right\}
                    \end{aligned}
                \end{equation}
                as the $k$-th order \textbf{optimal-trajectory manifold} of $\vxf$ under $\vM$. \modify{Proposition \ref{prop:unique_optimal_control} provides the well-posedness of $\manifold_{k}\left(\vxf,\vM\right)$.} 
            \end{definition}

            ``$\Leftrightarrow$" is evidently an equivalence relation among the set of feasible time-optimal control problems, which ensures the well-definedness of \eqref{eq:definition_state_optimal_manifold}. Intuitively, $\manifold_{k}\left(\vxf,\vM\right)$ consists of time-optimal trajectories for $k$-th order sub-problem, with high-dimensional components of states additionally. \modify{Noting that differential properties of $\manifold_{k}\left(\vxf,\vM\right)$ are not utilized in this work, this paper names it as a ``manifold'' to establish its geometric intuition as a hypersurface, as shown in Fig. \ref{fig:demos_manifold}.} The following proposition \modify{indicates that elements in} $\manifold_{k}\left(\vxf,\vM\right)$ \modify{are uniquely determined by their first $k$ components}.

            \begin{proposition}\label{thm:dim_manifold}
                \modify{If $\vx_0,\vy_0\in\manifold_{k}\left(\vxf,\vM\right)$, $\forall i\leq k$, $x_{0i}=y_{0i}$, then $\vx_0=\vy_0$.}
            \end{proposition}
            \begin{proof}
                \modify{By} $\problem\left(\vx_0,\vx_\f;\vM\right)\Leftrightarrow\problem\left(\vy_0,\vx_\f;\vM\right)$, \modify{denote the} optimal control as \modify{$u^*$}. Note that \modify{$u^*$ drives $\vx$ from} $\vx_0\modify{,}\vy_0$ \modify{to $\vxf$}; hence, \modify{\eqref{eq:optimalproblem_dynamic} and \eqref{eq:optimalproblem_dynamic2} imply} $\vx_0=\vy_0$.
            \end{proof}

            \begin{definition}
                Assume $\manifold_{k}\left(\vxf,\vM\right)\not=\varnothing$ in \eqref{eq:definition_state_optimal_manifold}. The \textbf{switching-law representation} of $\manifold_{k}\left(\vxf,\vM\right)$ is defined as
                \begin{equation}
                    \mathcal{SF}_k\left(\vxf,\vM\right)=\bigcup_{\vx_0\in\manifold_{k}\left(\vxf,\vM\right)}\left\{\swithchlaw\left(\problem\left(\vx_0,\vxf;\vM\right)\right)\right\}.
                \end{equation}
            \end{definition}

            An example of the second order optimal-trajectory manifold is shown in Fig. \ref{fig:demos_manifold}(a). It is indicated in Fig. \ref{fig:demos_manifold}(a) that $\mathcal{SF}_2\left(\vzero,\vM\right)=$\{$\overline{0}$, $\underline{0}\overline{0}$, $\underline{1}\overline{0}$, $\underline{0}\underline{1}\overline{0}$, $\underline{0}$, $\overline{0}\underline{0}$, $\overline{1}\underline{0}$, $\overline{0}\overline{1}\underline{0}$\}. Fig. \ref{fig:demos_manifold}(a) is corresponding to $\mathcal{F}_2\left(\vzero,\vM\right)$ in Fig. \ref{fig:demos_manifold}(b). It can be observed that $\mathcal{F}_2\left(\vzero,\vM\right)\subset\mathcal{F}_3\left(\vzero,\vM\right)$ is a sub-manifold of 2 dimension\modify{s}, which confirms the validity of Proposition \ref{thm:dim_manifold}. Furthermore, every switching law induces a sub-manifold, i.e., any given sub-manifold of $\manifold_{k}\left(\vxf,\vM\right)$ can be represented by a subset of $\mathcal{SF}_k\left(\vxf,\vM\right)$.

        \subsubsection{Dimension Property of the Switching Law}\label{sub2sec:AnalysisOfDimension}
            \quad
            
            The dimension analysis of the optimal-trajectory manifold is indispensable to solve the time-optimal problem \eqref{eq:optimalproblem}. For a given switching law $S$, the motion time of every stage can be \modify{determined by solving equations as follows} only if $S$ is of $n$ dimension\modify{, since the number of variables should equal the number of independent equations \cite{verbeke1995newton}}. 

            \begin{figure}[!t]
                \centering
                \includegraphics[width=0.7\columnwidth]{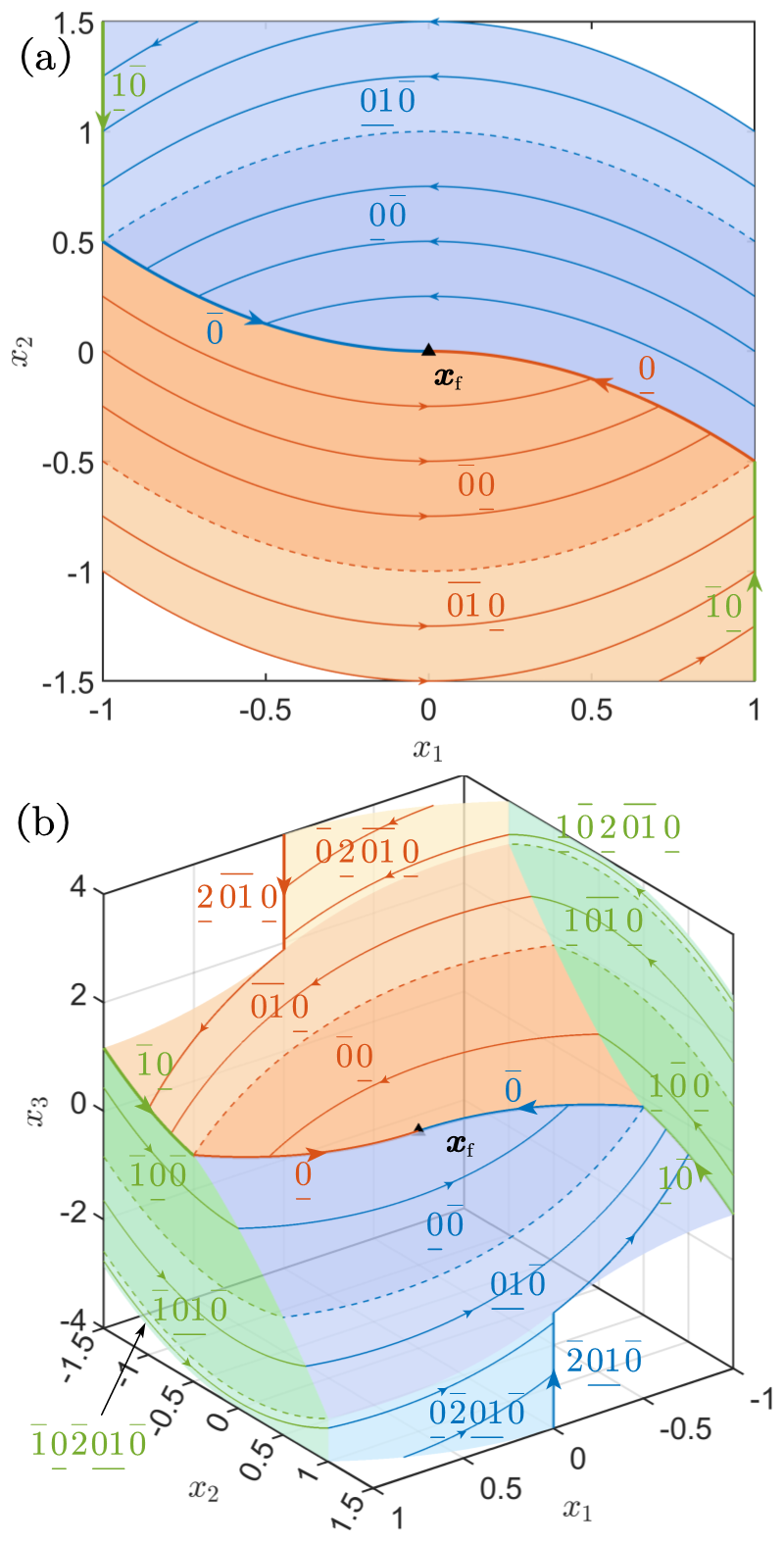}
                \caption{Examples of the optimal-trajectory manifolds. (a) Second order. $\vM=\left(1,1,1.5\right)$, $\vxf=\vzero$. (b) Third order. $\vM=\left(1,1,1.5,4\right)$, $\vxf=\vzero$. \modify{The parameters and the switching surfaces are the same as \cite{he2020time}.}} 
                \label{fig:demos_manifold}
            \end{figure}

            \begin{theorem}\label{thm:DimensionSwitchLaw}
                Given a switching law $S=s_1s_2\dots s_N\in\mathcal{SF}_k\left(\vxf,\vM\right)$, \modify{assume that $\forall 1\leq i<j\leq N+1$, if $\abs{s_i},\abs{s_j}>0$, then $\sum_{k=i+1}^{j-1}\abs{s_k}+\min\left\{\abs{s_i},\abs{s_j}\right\}<j-i$, where $\abs{s_{N+1}}\triangleq n$. T}he dimension of \modify{$S$, defined as the number of variables minus the number of equations except $\vx_0$}, is
                \begin{equation}\label{eq:DimensionSwitchLaw}
                    \dim S=N-\sum_{i=1}^{N}\modify{\abs{s_i}}.
                \end{equation} 
                Furthermore, if $\abs{s_i}\not=0$, then $\abs{s_{i-1}}=\abs{s_{i+1}}=0$.
            \end{theorem}
            \begin{proof}
                Without loss of generality, assume $k=n$, or else it can be considered as $S\in\mathcal{SF}_k\left(\vxf^{1:k},\vM\right)$ equivalently.

                Assume $\vx_i\in\R^n,i=0,1,\dots,N$ and $t_i\geq 0,i=1,2,\dots,N$. $\vx_{i-1}$ moves to $\vx_i$ after time $t_i$ under the system behavior $s_i$, $\forall i=1,2,\dots,N$. The terminal state vector $\vx_N=\vxf$ is given. \modify{Denote $\vx_{i}=\left(x_{i,k}\right)_{k=1}^n$. Note that $\abs{s_N}=0$.} 

                Under the system behavior $s_i$, the control $u\equiv u_i$, where
                \begin{equation}\label{eq:DimensionSwitchLaw_u}
                    u_i=\begin{dcases}
                        \sgn\left(s_i\right)M_0,&\abs{s_i}=0\\
                        0,&\text{otherwise}\\
                    \end{dcases}.
                \end{equation}
                
                Under the system behavior $\abs{s_i}=0$, $Nn$ constraints on state transformation exists, i.e., $\forall i=1,2,\dots,N,k=1,2,\dots,n$, 
                \begin{equation}\label{eq:constraints_state_transformation}
                    x_{i,k}=\sum_{j=1}^{k}\dfrac{1}{\left(k-j\right)!}x_{i-1,j}t_i^{k-j}+\dfrac{1}{k!}u_it_i^k.
                \end{equation}

                Under the system behavior $\abs{s_i}\not=0$, the states satisfy
                \begin{equation}\label{eq:constraints_state_system_behavior}
                    \begin{dcases}
                        x_{i-1,j}=0,j<\abs{s_i}\\
                        x_{i-1,\modify{\abs{s_i}}}=\sgn\left(s_i\right)M_{\abs{s_i}}
                    \end{dcases},\forall i=1,2,\dots,N\modify{-1}.
                \end{equation} 
                \eqref{eq:constraints_state_system_behavior} provides $\sum_{i=1}^{N}\abs{s_i}$ constraints.

                In summary, there exists $Nn+\sum_{i=1}^{N\modify{-1}}\abs{s_i}$ constraints and $\left(n+1\right)N$ variables in total. \modify{The independence of the above equations is proved in Appendix \ref{app:prove_independence}.} Therefore, the dimension of $S$ is
                \begin{equation}
                    \dim S=\left(n+1\right)N-\left(Nn+\sum_{i=1}^{N}\abs{s_i}\right)=N-\sum_{i=1}^{N}\abs{s_i}.
                \end{equation}

                Assume $\abs{s_i},\abs{s_{i-1}}\not=0$. Note that $s_{i-1}\not=s_i$, and assume $\abs{s_i}<\abs{s_{i-1}}$. According to \eqref{eq:constraints_state_system_behavior}, $\abs{x_{\abs{s_i}}\left(T_i^+\right)}=M_{\abs{s_i}}$, where $T_i$ is the switching time between $s_{i-1}$ and $s_i$. However, $x_{\abs{s_i}}\left(T_i^-\right)=0$; hence, $x_{\abs{s_i}}$ is not continuous at $T_i$, which leads to a contradiction. Therefore, if $\abs{s_i}\not=0$, then $\abs{s_{i-1}}=\abs{s_{i+1}}=0$.
            \end{proof}

            \modify{Intuitively, the dimension of a switching law characterizes the sub-manifold represented by the switching law.} Take the optimal-trajectory manifold in Fig. \ref{fig:demos_manifold}(b) as an example. $0$, $10$, and $2010$ are of 1 dimension. $00$, $1010$, and $02010$ are of 2 dimension\modify{s}. $000$, $0010$, and $0102010$ are of 3 dimension\modify{s}.
            


        \subsubsection{Sign Property of the Switching Law}\label{sub2sec:AnalysisOfSign}
            \quad
            
            From Fig. \ref{fig:demos_manifold}(b), it is observed that $\underline{2}\overline{01}\underline{0}$ exists while $\overline{201}\underline{0}$ and $\underline{20}\overline{1}\underline{0}$ do not exist. The analysis of the sign is given as the following theorem.
            \begin{theorem}\label{thm:AnalysisOfSign}
                $\forall S=s_1s_2\dots s_N\in\mathcal{SF}_k\left(\vxf,\vM\right)$, $\forall i=2,3,\dots,N$,
                \begin{equation}\label{eq:AnalysisOfSign}
                    \sgn\left(s_{i-1}\right)=\begin{dcases}
                        \sgn\left(s_{i}\right),&\text{if }\abs{s_i}\text{ is odd}\\
                        -\sgn\left(s_{i}\right),&\text{if }\abs{s_i}\text{ is even}
                    \end{dcases}.
                \end{equation}
            \end{theorem}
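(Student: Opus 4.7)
The plan is to proceed by a case analysis on the switching time $T_i$ separating $s_{i-1}$ and $s_i$, exploiting the continuity of $\vx$ (Theorem \ref{thm:costate}, item 3) together with Taylor expansions of the state components under the piecewise-constant control. By Theorem \ref{thm:DimensionSwitchLaw}, if $|s_i|\neq 0$ then $|s_{i-1}|=0$, so only three configurations can arise: (i) both $|s_{i-1}|=|s_i|=0$; (ii) $|s_{i-1}|=k\geq 1$ and $|s_i|=0$ (exiting a boundary); (iii) $|s_{i-1}|=0$ and $|s_i|=k\geq 1$ (entering a boundary). In each case the target formula \eqref{eq:AnalysisOfSign} reduces to a statement about $\sgn(s_{i-1})$ in terms of $\sgn(s_i)$ and the parity of $k$.

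Case (i) is immediate: two consecutive stages must differ (otherwise they would merge into a single stage), so $\sgn(s_{i-1})=-\sgn(s_i)$, which agrees with \eqref{eq:AnalysisOfSign} since $|s_i|=0$ is even. Case (ii) is handled by inspecting the right-hand Taylor expansion at $T_i$: by \eqref{eq:constraints_state_system_behavior} we have $x_1(T_i)=\cdots=x_{k-1}(T_i)=0$ and $x_k(T_i)=\sgn(s_{i-1})M_k$, while $u\equiv \sgn(s_i)M_0$ on $(T_i,T_i+\varepsilon)$, so
\begin{equation}
x_k(t)=\sgn(s_{i-1})M_k+\sgn(s_i)\frac{M_0}{k!}(t-T_i)^k.
\end{equation}
To exit the boundary we need $|x_k(t)|<M_k$ for $t>T_i$ slightly after, which forces $\sgn(s_i)=-\sgn(s_{i-1})$, again matching \eqref{eq:AnalysisOfSign} for $|s_i|=0$.

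Case (iii) is the interesting one and uses the left-hand Taylor expansion at $T_i$. Here $x_1(T_i)=\cdots=x_{k-1}(T_i)=0$ and $x_k(T_i)=\sgn(s_i)M_k$, while $u\equiv \sgn(s_{i-1})M_0$ on $(T_i-\varepsilon,T_i)$, giving
\begin{equation}
x_k(t)=\sgn(s_i)M_k+\sgn(s_{i-1})\frac{M_0}{k!}(t-T_i)^k.
\end{equation}
Requiring $|x_k(t)|<M_k$ for $t<T_i$ slightly before translates to $\sgn(s_{i-1})\sgn(s_i)(t-T_i)^k<0$. Since $(t-T_i)^k$ carries the sign $(-1)^k$ on the left of $T_i$, this yields $\sgn(s_{i-1})=\sgn(s_i)$ when $k$ is odd and $\sgn(s_{i-1})=-\sgn(s_i)$ when $k$ is even, exactly \eqref{eq:AnalysisOfSign}. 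Combining the three cases completes the proof.

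The main subtlety I anticipate is justifying the strict inequalities $|x_k(t)|<M_k$ on each side of $T_i$: one must argue that these are genuine switchings of system behavior in the sense of Definition \ref{def:SystemBehavior} rather than tangent contacts (Case 2 of Section \ref{subsec:Jump_Condition_of_Costates}), so that the leading-order term in the Taylor expansion actually determines the sign of $x_k-\sgn(s_i)M_k$ near $T_i$. Once that is settled, the rest is a direct reading off of parities from the expansions above.
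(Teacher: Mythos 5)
Your proposal is correct and follows essentially the same route as the paper's proof: the same three-case split (both behaviors of value $0$; exiting a constraint boundary; entering a constraint boundary), the same appeal to Theorem \ref{thm:DimensionSwitchLaw} to rule out consecutive nonzero behaviors, and the same one-sided Taylor expansions of $x_{\abs{s}}$ at the switching time combined with the constraint $\abs{x_{\abs{s}}}\leq M_{\abs{s}}$ to read off the sign relation from the parity of $\abs{s_i}$. No further comparison is needed.
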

            \begin{proof}
                In the case where $\abs{s_{i-1}}=0,\,\abs{s_i}=0$, note that $s_i\not=s_{i-1}$. Hence, $\sgn\left(s_{i-1}\right)=-\sgn\left(s_{i}\right)$.

                In the case where $\abs{s_{i-1}}\not=0$, according to Theorem \ref{thm:DimensionSwitchLaw}, $\abs{s_i}=0$. Denote the switching time between $s_{i-1}$ and $s_i$ as $T_i$. If $\sgn\left(s_{i-1}\right)=+1$, then $\forall k<\abs{s_{i-1}}$, $x_{k}\left(T_i\right)=0$ and $x_{\abs{s_{i-1}}}\left(T_i\right)=M_{\abs{s_{i-1}}}$. Assume that $\sgn\left(s_i\right)=+1$, i.e., $u\equiv u_i=M_0$ during $s_i$. Then, $\exists \delta>0$, $\forall t\in\left(0,\delta\right)$, $x_{\abs{s_{i-1}}}\left(T_i+t\right)=M_{\abs{s_{i-1}}}+\frac{1}{\abs{s_{i-1}}!}M_0t^{\abs{s_{i-1}}}>M_{\abs{s_{i-1}}}$ since $u\equiv M_0$, which leads to a contradiction. Therefore, $\sgn\left(s_{i-1}\right)=+1$ if $\sgn\left(s_i\right)=-1$. For the similar analysis, $\sgn\left(s_{i-1}\right)=-1$ if $\sgn\left(s_i\right)=+1$.

                In the case where $\abs{s_{i}}\not=0$, according to Theorem \ref{thm:DimensionSwitchLaw}, $\abs{s_{i-1}}=0$. Denote the switching time between $s_{i-1}$ and $s_i$ as $T_i$. $x_{k}\left(T_i\right)=0$ for $k<\abs{s_i}$ and $x_{\abs{s_i}}\left(T_i\right)=M_{\abs{s_i}}$. Then, $\exists \delta>0$, $\forall t\in\left(0,\delta\right)$, $u\left(T_i-t\right)\equiv\sgn\left(s_{i}\right)M_0$; hence, $x_{\abs{s_{i}}}\left(T_i-t\right)=\sgn\left(s_i\right)M_{\abs{s_{i}}}+\frac{1}{\abs{s_{i}}!}M_0t^{\abs{s_{i}}}\cdot\left(-1\right)^{\abs{s_i}}\sgn\left(s_{i-1}\right)$. The constraint where $\abs{x_{\abs{s_i}}\left(T_i-t\right)}\leq M_{\abs{s_i}}$ resulting that $\left(-1\right)^{\abs{s_i}}\sgn\left(s_{i-1}\right)=-\sgn\left(s_{i}\right)$.
            \end{proof}

            Theorem \ref{thm:AnalysisOfSign} indicates that signs of all system behaviors in a switching law \modify{are} uniquely determined by the sign of the last system behavior. For a switching law $S$ \modify{of} length $N$, Theorem \ref{thm:AnalysisOfSign} reduces the possible signs of all system behaviors from $2^N$ to $2$.

    \subsection{Augmented Switching Law and Tangent Marker}\label{sub2sec:TangentMarker}
        For a given initial state vector $\vx_0\in\R^n$ and the switching law $S=\mathcal{S}\left(\problem\left(\vx_0,\vx_\f;\vM\right)\right)$, Section \ref{sub2sec:AnalysisOfDimension} has pointed out that the optimal control can be solved only if $\dim S=n$. For example, in Fig. \ref{fig:demos_manifold}(a), $\vx_0\in\R^2$ is given. If $S=\overline{0}\underline{0}$, then the full definite conditions are provided by \eqref{eq:DimensionSwitchLaw_u}, \eqref{eq:constraints_state_transformation}, and \eqref{eq:constraints_state_system_behavior}. However, $S=\overline{0}\underline{0}\overline{0}$ is underdefined, while $S=\overline{0}$ is overdefined\modify{; hence, the optimal control can be solved directly from neither $S=\overline{0}\underline{0}\overline{0}$ nor $S=\overline{0}$ by \eqref{eq:DimensionSwitchLaw_u}, \eqref{eq:constraints_state_transformation}, and \eqref{eq:constraints_state_system_behavior}}. \modify{The augmented switching law should be proposed to solve the problem.} 

        \begin{definition}\label{def:AugmentedSwitchingLaw}
            Given a problem $\problem$, \modify{an} \textbf{augmented switching law} w.r.t. $\problem$, denoted as $S\modify{\in}\mathcal{AS}\left(\problem\right)$, is the switching law of $\problem$ attached full definite conditions.
        \end{definition}

        \modify{Definite conditions attached to an augmented switching law $S$ should guarantee that $\dim S=n$ for an $n$-th order problem.} In the above example \modify{where $n=2$}, \modify{$\dim\overline{0}\underline{0}=2$, $\dim\overline{0}\underline{0}\overline{0}=3>2$, and $\dim\overline{0}<2$. Therefore, $\overline{0}\underline{0}$ provides full definite conditions and} is an augmented switching law, while $\overline{0}\underline{0}\overline{0}$ and $\overline{0}$ are not. As a trick, the switching law $\overline{0}$ can be represented by an augmented switching law $\underline{0}\overline{0}$ \modify{or $\overline{0}\underline{0}$}, where the motion time of $\underline{0}$ is 0. \modify{This example also shows that the augmented switching law of a problem can be not unique.} 


        \begin{figure}[!t]
            \centering
            \includegraphics[width=\columnwidth]{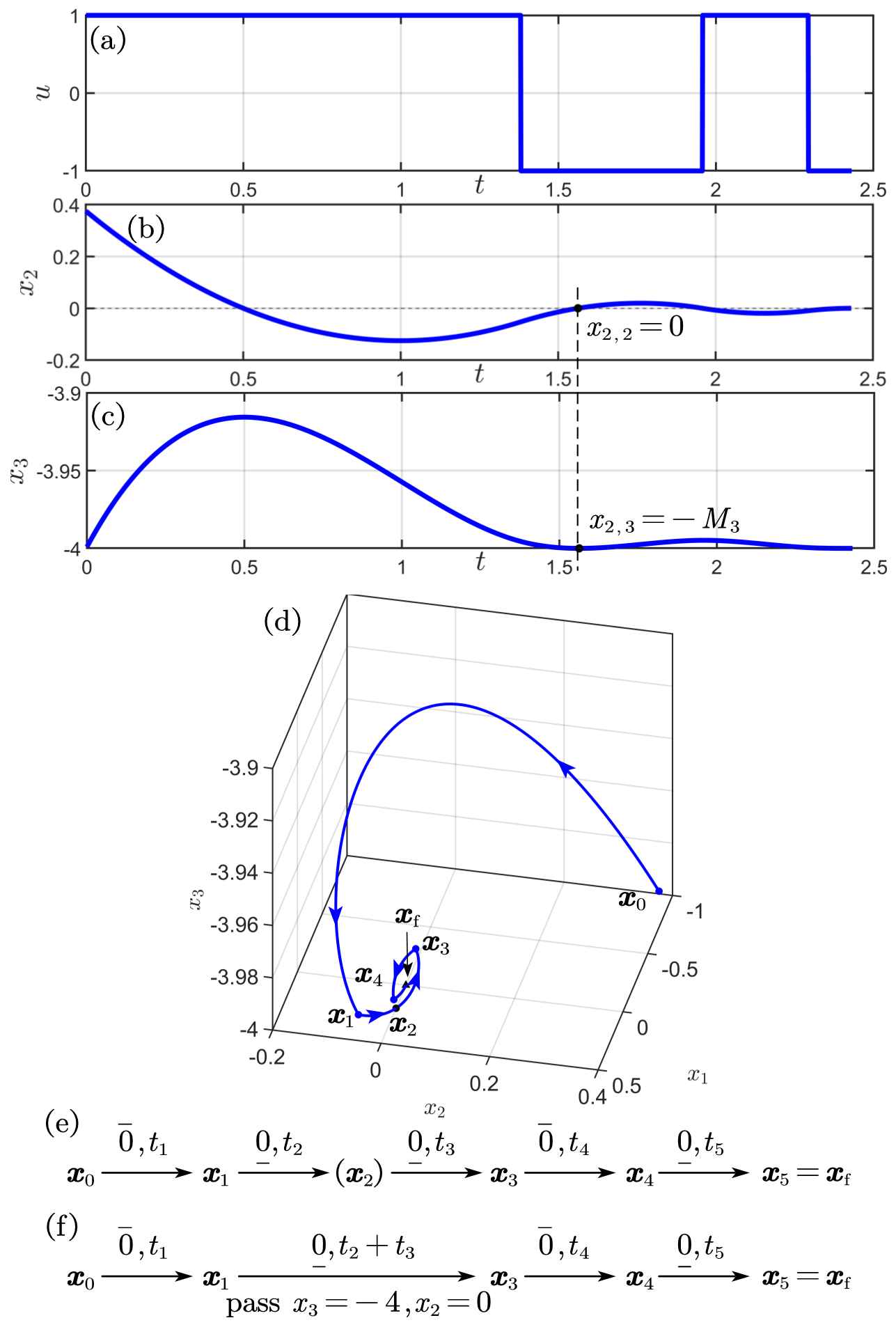}
            \caption{Time-optimal trajectory with a tangent marker. $\problem=\problem\left(\vx_0,\vx_\f;\vM\right)$, where $\vM=\left(1,1,1.5,4\right)$. \modify{An} augmented switching law \modify{for $\problem$} is $\modify{S'=}\overline{0}\underline{0}\left(\underline{3},2\right)\underline{0}\overline{0}\underline{0}\modify{\in\mathcal{AS}\left(\problem\right)}$, where the tangent marker $\left(\modify{\underline{3}},2\right)$ means $x_{2,\modify{3}}=-4,x_{2,\modify{2}}=0$, and $x_{2,\modify{1}}>0$. (a), (b) and (c) are the jerk, velocity, and position plots, respectively. (d) is the trajectories of system states $\vx$. (e) and (f) are the flow charts for $\modify{S'\in}\mathcal{AS}\left(\problem\right)$ and $\modify{S=}\mathcal{S}\left(\problem\right)\modify{\overline{0}\underline{0}\overline{0}\underline{0}}$.}  
            \label{fig:demo_second_bracket}
        \end{figure}


        However, the above trick of adding zero\modify{s} is not enough to provide all augmented switching laws. According to \modify{the} discussion on Case 2 in Section \ref{sub2sec:junction_TouchMaxTimePoint}, the tangent marker is necessary to \modify{be} define\modify{d as a kind of definite condition} and precisely exists in the time-optimal problem \eqref{eq:optimalproblem}. 

        \begin{definition}\label{def:TangentMarker}
            For a time-optimal trajectory, assume \modify{$\abs{x_k}$} touches \modify{$M_k$} at $t_1$ with \modify{a degree $h=k$} or an even degree \modify{$h<k$,} as described in Theorem \modify{\ref{thm:costate}-}\ref{subthm:costate_junction_secondbracket}. Then, the \textbf{tangent marker} is denoted as $\left(s,\modify{h}\right)$, where $\abs{s}=k$ and $\sgn\left(s\right)=\sgn\left(x_k\left(t_1\right)\right)$.
        \end{definition} 

        
        \begin{theorem}\label{thm:TangentMarkerBracket}
             $S=S_1\left(s,\modify{h}\right)S_2$ where $S_i=s_1^{\left(i\right)}s_2^{\left(i\right)}\dots s_{N_i}^{\left(i\right)}$ are augmented switching laws, $i=1,2$, then,
            \begin{enumerate}
                \item $\abs{s_{N_1}^{\left(1\right)}}=\abs{s_{1}^{\left(2\right)}}=0$.
                \item $\left(s,\modify{h}\right)$ contributes $-\modify{h}<0$ dimension.
            \end{enumerate}
        \end{theorem}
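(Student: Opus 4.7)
The plan is to prove the two parts separately: Part 1 by case analysis on $\abs{s_{N_1}^{\left(1\right)}}$ using the one-sided Taylor expansions from Case 2 in Section \ref{sub2sec:junction_TouchMaxTimePoint}, and Part 2 by extending the variable/constraint counting scheme of Theorem \ref{thm:DimensionSwitchLaw}. The argument for $\abs{s_1^{\left(2\right)}}$ is entirely symmetric to that for $\abs{s_{N_1}^{\left(1\right)}}$, so I would focus on the latter.

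For Part 1, I would suppose toward contradiction that $\abs{s_{N_1}^{\left(1\right)}}=m>0$. Then during this stage $u\equiv 0$, $x_j\equiv 0$ for $j<m$, and $x_m\equiv\sgn\left(s_{N_1}^{\left(1\right)}\right)M_m$. I would split into three sub-cases. If $m=k$, then $x_k\equiv\pm M_k$ on a full left neighborhood of $t_1$, contradicting the single-point tangent property defining $\left(s,2l\right)$. If $m>k$, then $x_k\equiv 0$ on that left neighborhood, contradicting $\abs{x_k\left(t_1\right)}=M_k\ne 0$. If $0<m<k$, the tangent-marker conditions force $x_{k-1}\left(t_1\right)=\dots=x_{k-2l+1}\left(t_1\right)=0$ while $x_m\left(t_1\right)=\pm M_m\ne 0$, so the case $m\in\left\{k-2l+1,\dots,k-1\right\}$ yields an immediate value clash. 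In the remaining range $1\leq m\leq k-2l$, I would exploit the fact that under $u\equiv 0$ with $x_m$ pinned at $\pm M_m$, the function $x_k$ is a polynomial of degree $k-m$ in local time $\tau=t-t_0$ whose top coefficient is $\pm M_m/\left(k-m\right)!\ne 0$; requiring its values and first $2l-1$ derivatives at $t_1^-$ to match the tangent expansion either over-determines the polynomial (forcing $x_k\equiv\pm M_k$ on the left, which returns to sub-case $m=k$) or violates the punctured-neighborhood condition $\abs{x_k\left(t\right)}<M_k$ needed by Definition \ref{def:TangentMarker}.

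For Part 2, I would extend the counting from Theorem \ref{thm:DimensionSwitchLaw}. Concatenating $S_1$ and $S_2$ alone yields $\left(n+1\right)\left(N_1+N_2\right)$ variables (state vectors $\vx_i\in\R^n$ at each junction and one duration $t_i$ per stage) against $\left(N_1+N_2\right)n+\sum_i\abs{s_i^{\left(1\right)}}+\sum_j\abs{s_j^{\left(2\right)}}$ constraints. Inserting $\left(s,2l\right)$ at the junction between $S_1$ and $S_2$ introduces no new variables, since the shared state vector and time are already present, but it imposes $2l$ additional scalar equations on that shared state: the defining identity $x_{\abs{s}}\left(t_1\right)=\sgn\left(s\right)M_{\abs{s}}$, and the $2l-1$ vanishing conditions $x_{\abs{s}-1}\left(t_1\right)=\dots=x_{\abs{s}-2l+1}\left(t_1\right)=0$ guaranteed by Theorem \ref{subthm:costate_junction_secondbracket}. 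The degree of freedom therefore drops by exactly $2l$, and $-2l<0$ follows from $2l\geq 2$ since $2l$ is an even positive integer.

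The main obstacle will be the sub-case $0<m\leq k-2l$ in Part 1, where neither raw value conflicts nor dimension mismatches produce an immediate contradiction. Resolving it requires pinning down the precise Taylor structure of $x_k$ induced by the zero-control boundary stage: the derivative $x_k^{\left(k-m\right)}$ is constant and equal to $\pm M_m\ne 0$, so forcing $x_k^{\left(j\right)}\left(t_1^-\right)=0$ for $j=1,\dots,2l-1$ together with $x_k\left(t_1\right)=\pm M_k$ either kills the leading coefficient or makes the resulting polynomial identically $\pm M_k$ on a nontrivial interval. Translating this rigidity into a clean contradiction, in combination with the sign data supplied by Theorem \ref{thm:AnalysisOfSign}, is the delicate step of the proof.
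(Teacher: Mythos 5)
Your Part 2 is correct and is essentially the paper's own argument: the tangent marker introduces no new duration variable, but imposes the $2l$ scalar constraints $x_{\abs{s}}=\sgn\left(s\right)M_{\abs{s}}$ and $x_{\abs{s}-j}=0$ for $j=1,\dots,2l-1$ on the already-present junction state, so the dimension drops by exactly $2l$.

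Part 1 is where the gap is, and you have located it correctly but not closed it. Your sub-cases $m=k$, $m>k$, and $k-2l+1\leq m\leq k-1$ are fine. In the remaining range $1\leq m\leq k-2l$, however, the dichotomy you propose (``either kills the leading coefficient or makes the resulting polynomial identically $\pm M_k$'') is false. On a boundary stage $\overline{m}$ the function $x_k$ is a polynomial of degree $k-m\geq 2l$ in local time, with $k-m+1\geq 2l+1$ coefficients; imposing the $2l$ conditions $x_k\left(t_1\right)=\pm M_k$ and $x_k^{\left(j\right)}\left(t_1\right)=0$, $j=1,\dots,2l-1$, leaves at least one degree of freedom, does not touch the leading coefficient $\pm M_m/\left(k-m\right)!$, and is compatible with $x_k<M_k$ on the whole stage. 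Concretely, take $k=4$, $m=1$, $l=1$, $x_1\equiv M_1$, $x_3\left(t_1\right)=0$, $x_2\left(t_1\right)<0$: then $x_4\left(t_1-t\right)=M_4+\tfrac{1}{2}x_2\left(t_1\right)t^2-\tfrac{1}{6}M_1t^3$ is strictly below $M_4$ for all $t>0$ and touches it only at $t=0$. So state feasibility alone produces no contradiction in this sub-case, and ruling it out would need either the costate/jump machinery of Section \ref{subsec:Jump_Condition_of_Costates} or the sign constraints of Theorem \ref{thm:AnalysisOfSign}; your sketch supplies neither.

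The paper does not attempt your case analysis at all. It reads Part 1 directly off the derivation in Section \ref{sub2sec:junction_TouchMaxTimePoint}: the tangency degree $2l$ and the conditions $x_{k-1}=\dots=x_{k-2l+1}=0$ are obtained from the one-sided expansions \eqref{eq:junction_TouchMaxTimePoint}, which carry the term $u^{\pm}t^k/k!$ and are therefore written under the standing assumption that the entire chain evolves freely under a locally constant control on both sides of $t_1$, i.e., that no constraint $\abs{x_m}=M_m$ with $m<k$ is active in a punctured neighborhood of $t_1$. Adjacency of the tangent marker to bang stages is thus built into how the marker is derived and defined (Definition \ref{def:TangentMarker} refers back to that derivation), and the paper's ``proof'' of Part 1 is a pointer to that setup rather than an independent exclusion argument. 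If you want a self-contained proof along your lines, you must either adopt that convention explicitly or genuinely dispose of the sub-case $1\leq m\leq k-2l$.
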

        \begin{proof}
            According to Section \ref{sub2sec:junction_TouchMaxTimePoint}, $\abs{s_{N_1}^{\left(1\right)}}=\abs{s_{1}^{\left(2\right)}}=0$, and $\left(s,\modify{h}\right)$ does not provide extra motion time. Assume the state vector at $\left(s,\modify{h}\right)$ is $\vx_1$. Then, $\left(s,\modify{h}\right)$ contributes $\modify{h}$ extra constraints, i.e., $x_{1,\abs{s}}=\sgn\left(s\right)M_{\abs{s}}$, and $x_{1,\abs{s}-k}=0$, $k=1,2,\dots,\modify{h}-1$. Hence, $\left(s,\modify{h}\right)$ contributes $-\modify{h}$ dimension.
        \end{proof}
        
        Fig. \ref{fig:demo_second_bracket} shows an example of the tangent marker. From Fig. \ref{fig:demo_second_bracket}(a), it is observed that $\mathcal{S}\left(\problem\right)=\overline{0}\underline{0}\overline{0}\underline{0}$. Therefore, $\dim\mathcal{S}\left(\problem\right)=4>3$. Hence, the problem $\problem$ is not determined by $\mathcal{S}\left(\problem\right)$. Opportunely,  $\modify{S'=}\overline{0}\underline{0}\left(\modify{\underline{3}},2\right)\underline{0}\overline{0}\underline{0}\modify{\in\mathcal{AS}\left(\problem\right)}$, where the tangent marker $\left(\modify{\underline{3}},2\right)$ induced 2 constraints, i.e., $x_{2,\modify{3}}=-M_2$ and $x_{2,\modify{2}}=0$. Therefore, $\dim\modify{S'}=5-2=3$, and $\problem$ is determined by $\modify{S'}$. As shown in Fig. \ref{fig:demo_second_bracket}(f), $\underline{0}\left(\modify{\underline{3}},2\right)\underline{0}$ with time $t_2,t_3$ performs the same as $\underline{0}$ with time $t_2+t_3$ on the surface, and $\left(\modify{\underline{3}},2\right)$ means the system state vector passes $x_{3}=-4$, $x_{2}=0$ during the stage $\underline{0}$. 

        \modify{The physical meaning of the tangent marker $\left(\underline{3},2\right)$ is clear. Without consideration of the constraint $x_3\geq-M_3$, the switching law should be $\underline{0}\overline{0}\underline{0}$, where $x_3<-M_3$ occurs. To guarantee $x_3\geq -M_3$, an accelerating stage $\overline{0}$ is introduced first, then $\underline{0}\overline{0}\underline{0}$ is applied for minimum motion time. The tangent marker $\left(\underline{3},2\right)$ shows that the first stage $\overline{0}$ lasts as short as possible. In other words, once the constraint $x_3\geq-M_3$ holds in the future, an optimal switching law $\underline{0}\overline{0}\underline{0}$ is applied. Therefore, the constraint $x_3\geq-M_3$ is active, and $\left(\underline{3},2\right)$ reflects some conditions for extremum.} 

        Considering the full discussion in Section \ref{subsec:Jump_Condition_of_Costates}, a conjecture is proposed that system behaviors and tangent markers can provide full definite condition\modify{s} for time-optimal trajectories.

        \begin{conjecture}\label{conjecture:ASL_systembehavior_tangentmarker}
            $\forall S$ is an augmented switching law of the time-optimal problem \eqref{eq:optimalproblem} \modify{where chattering phenomena do not occur}, $S$ consists of system behaviors in Definition \ref{def:SystemBehavior} and tangent markers in Definition \ref{def:TangentMarker}.
        \end{conjecture}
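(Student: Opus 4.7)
My plan is to start from any time-optimal trajectory of problem \eqref{eq:optimalproblem} and exhibit a labelled sequence of system behaviors interspersed with tangent markers that reconstructs it. I would first invoke Proposition \ref{prop:bang_zero_bang_law} to partition $[0,\tf]$ into maximal sub-intervals on each of which $u$ takes a single value in $\{M_0,-M_0,0\}$. On an interval where $u\equiv\pm M_0$, Theorem \ref{thm:costate} forces $\abs{x_k}<M_k$ for every $k$ (except at boundary-touch instants, which are treated separately below), so the interval fits exactly the behavior $\overline{0}$ or $\underline{0}$. On an interval where $u\equiv 0$, part 5 of Theorem \ref{thm:costate} forces the existence of a unique $k$ with $x_k\equiv\sgn(x_k)M_k$ and $x_i\equiv 0$ for all $i<k$, yielding the behavior $\overline{k}$ or $\underline{k}$. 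Thus each sub-interval maps bijectively onto one system behavior from Definition \ref{def:SystemBehavior}.

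Next I would classify the events at the junctions between consecutive sub-intervals. A junction where both neighbouring sub-intervals lie strictly inside all constraints is already a legal adjacency in the switching-law alphabet, governed by Theorem \ref{thm:DimensionSwitchLaw} and Theorem \ref{thm:AnalysisOfSign}. The remaining candidates are interior points $t_1\in(0,\tf)$ at which $\abs{x_k(t_1)}=M_k$ yet $\abs{x_k(t)}<M_k$ in a deleted neighbourhood of $t_1$. Theorem \ref{subthm:costate_junction_secondbracket} implies that any such touch has even degree $2l\leq k$; this is precisely the data $(s,2l)$ recorded by Definition \ref{def:TangentMarker}. Hence every structural event along the time-optimal trajectory is either a system behavior or a tangent marker.

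It remains to verify that the collected labels actually constitute full definite conditions, i.e. that the dimension count closes. By Theorem \ref{thm:DimensionSwitchLaw} the bare switching law contributes $N-\sum_i\abs{s_i}$ free parameters, each tangent marker $(s,2l)$ removes $2l$ further parameters by Theorem \ref{thm:TangentMarkerBracket}, and the terminal condition $\vx(\tf)=\vxf$ removes $n$ more. I would argue that the remaining degree of freedom equals the dimension $n$ of the initial state vector, which is what ``full definite'' means in Definition \ref{def:AugmentedSwitchingLaw}; in particular, no extra labelled object is needed to lock down the trajectory.

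The principal obstacle is the finiteness and completeness of this decomposition. Specifically I expect difficulty in: (a) proving that the number of switching points is finite, which for the high-order chain-of-integrators cannot be obtained merely from analyticity of $\lambda_1$ because of the jumps of $\vlambda$ at junctions and the singular arcs along the constraint boundaries; (b) ruling out degenerate touches that are not of even degree, which requires sharpening the Taylor-expansion argument of Section \ref{sub2sec:junction_TouchMaxTimePoint} in the presence of possible costate jumps and singular segments; and (c) showing that no additional constraint beyond those coming from system behaviors and tangent markers arises from the Hamiltonian identity \eqref{eq:hamilton_equiv_0}. Step (a) is likely the hardest, as it amounts to a global regularity statement for the whole high-order problem and is presumably the reason the statement is offered as a conjecture rather than a theorem.
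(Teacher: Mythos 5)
There is a genuine gap here, and it is worth being explicit about its nature: the statement you are proving is stated in the paper as Conjecture~\ref{conjecture:ASL_systembehavior_tangentmarker} precisely because no proof is known, and the paper itself notes that a constructive proof would completely settle the open time-optimal control problem. Your argument does not close this gap; it essentially re-derives the heuristic motivation for the conjecture and then, in your final paragraph, concedes the very points on which a proof must turn. The most serious one is the first step of your plan: partitioning $\left[0,\tf\right]$ into \emph{maximal} sub-intervals on which $u$ is constant already presupposes that the optimal control is piecewise constant with \emph{finitely many} pieces, i.e.\ that the trajectory admits a finite word $s_1s_2\cdots s_N$ at all. Neither Proposition~\ref{prop:bang_zero_bang_law} nor Theorem~\ref{thm:costate} rules out an accumulation of switching instants or boundary-touch instants (Fuller-type chattering near entry to or exit from a constrained arc): the polynomial-root bound coming from \eqref{eq:dotlambda_without_constraint} is valid only on intervals free of junctions, and $\vlambda$ may jump at every junction by \eqref{eq:junction_condition}, so no global bound on the number of roots of $\lambda_1$ follows. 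Without such a finiteness (regularity) result the decomposition you build on is not available, and this is exactly the obstruction you label (a).

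The subsequent steps inherit the same circularity. The even-degree classification of isolated touches in Section~\ref{sub2sec:junction_TouchMaxTimePoint}, which you invoke via Theorem~\ref{thm:costate}-\ref{subthm:costate_junction_secondbracket} to identify each touch with a tangent marker $\left(s,2l\right)$, is itself derived under the assumption that $u$ is constant on each side of the touch instant in some one-sided neighbourhoods; that assumption is again a local finiteness statement that must be proved, not assumed, in the presence of costate jumps and singular arcs. Finally, the dimension bookkeeping in your third paragraph (Theorem~\ref{thm:DimensionSwitchLaw} plus Theorem~\ref{thm:TangentMarkerBracket}) shows at best that the collected labels are \emph{consistent} with an $n$-dimensional family of definite conditions; it does not show that no further constraints arise from \eqref{eq:hamilton_equiv_0} or the junction conditions, nor that the resulting system of equations \eqref{eq:constraints_state_transformation}--\eqref{eq:constraints_state_system_behavior} determines the motion times, which is what ``full definite conditions'' in Definition~\ref{def:AugmentedSwitchingLaw} requires. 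So the proposal is a reasonable research programme, but as a proof it is incomplete at exactly the points that make the statement a conjecture rather than a theorem in the paper.
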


        Obviously, if Conjecture \ref{conjecture:ASL_systembehavior_tangentmarker} can be proven constructively, then \modify{there is a prospect of} the time-optimal problem for high-order chain-of-integrator systems \modify{which is} an open problem in the optimal control theory.


\section{Manifold-Intercept Method}\label{sec:ManifoldInterceptMethod}
    Section \ref{sec:CostateSystemBehaviorAnalysis} studies the properties of time-optimal control and builds a novel notation system of the time-optimal problem for chain-of-integrators. However, it is a daunting task to \modify{solve the optimal control} for arbitrary given problems $\problem\left(\vx_0,\vx_\f;\vM\right)$\modify{, especially when a chattering phenomenon occurs}. Based on conclusions of optimal control reasoned in Section \ref{sec:CostateSystemBehaviorAnalysis}, this section propose\modify{s} a trajectory planning method for high-order chain-of-integrators systems, named the manifold-intercept method (MIM).

    It should be pointed out that MIM is an efficient and quasi-optimal method. \modify{As is pointed out in our related work \cite{wang2024part2}, chattering phenomena exist in 4th order or higher-order problems, which impedes the computation of optimal control.} Section \ref{sec:results} will indicate that MIM is near-optimal for 4th or higher-order problems\modify{, and is able to avoid chattering}. Furthermore, for 3rd or lower-order problems, MIM achieves time-optimality. In this section, the switching law and the optimal-trajectory manifold are corresponding to those induced by MIM. Propositions in Section \ref{sec:CostateSystemBehaviorAnalysis} except \modify{for} Conjecture \ref{conjecture:ASL_systembehavior_tangentmarker} still hold true, while theorems in this section might not be true for time-optimal control unless emphasized.

    \subsection{Manifold-Intercept Method}\label{subsec:MIM}
        The key idea of MIM is \modify{to follow a greedy and conservative principle. W}hen the current state vector is ``higher'' (or ``lower'') than the lower-order optimal-trajectory manifold of the terminal state vector, the control \modify{greedily} choose\modify{s} the minimum (or maximum) value to \modify{drive} the system \modify{to} enter the constant velocity phase of minimum (or maximum) velocity, i.e., $\underline{n-1}$ (or $\overline{n-1}$) as quick as possible\modify{, conservatively considering state constraints}. \modify{Once the states enter the lower order optimal-trajectory manifold, the states move along lower order trajectories to reach the terminal states, where the Bellman's principle of optimality \cite{bellman1952theory} is applied.} 

        \begin{definition}\label{def:proper_position_function}
            The \textbf{proper-position function} is defined as $p^*:\dom p^*\to\R$, $\left(\vx_0,\vxf;\vM\right)\mapsto \hat{x}_n$, where
            \begin{align}
                \dom p^*=\left\{\left(\vx_0;\vxf,\vM\right):\problem\left(\vx_0,\vxf;\vM\right)\text{ is feasible},\right.\nonumber\\
                \left.\vx_0,\vxf\in\R^n,\vM\in\R_{++}\times\overline{\R}_{++}^n\right\},
            \end{align}
            \modify{s.t.} $\hat{\vx}_0\modify{\triangleq}\left(x_{0,1},x_{0,2},\dots,x_{0,n-1},\hat{x}_n\right)\in\manifold_{n-1}\left(\vxf,\vM\right)$.
        \end{definition} 

        \begin{figure}[!t]
            \centering
            \includegraphics[width=0.7\columnwidth]{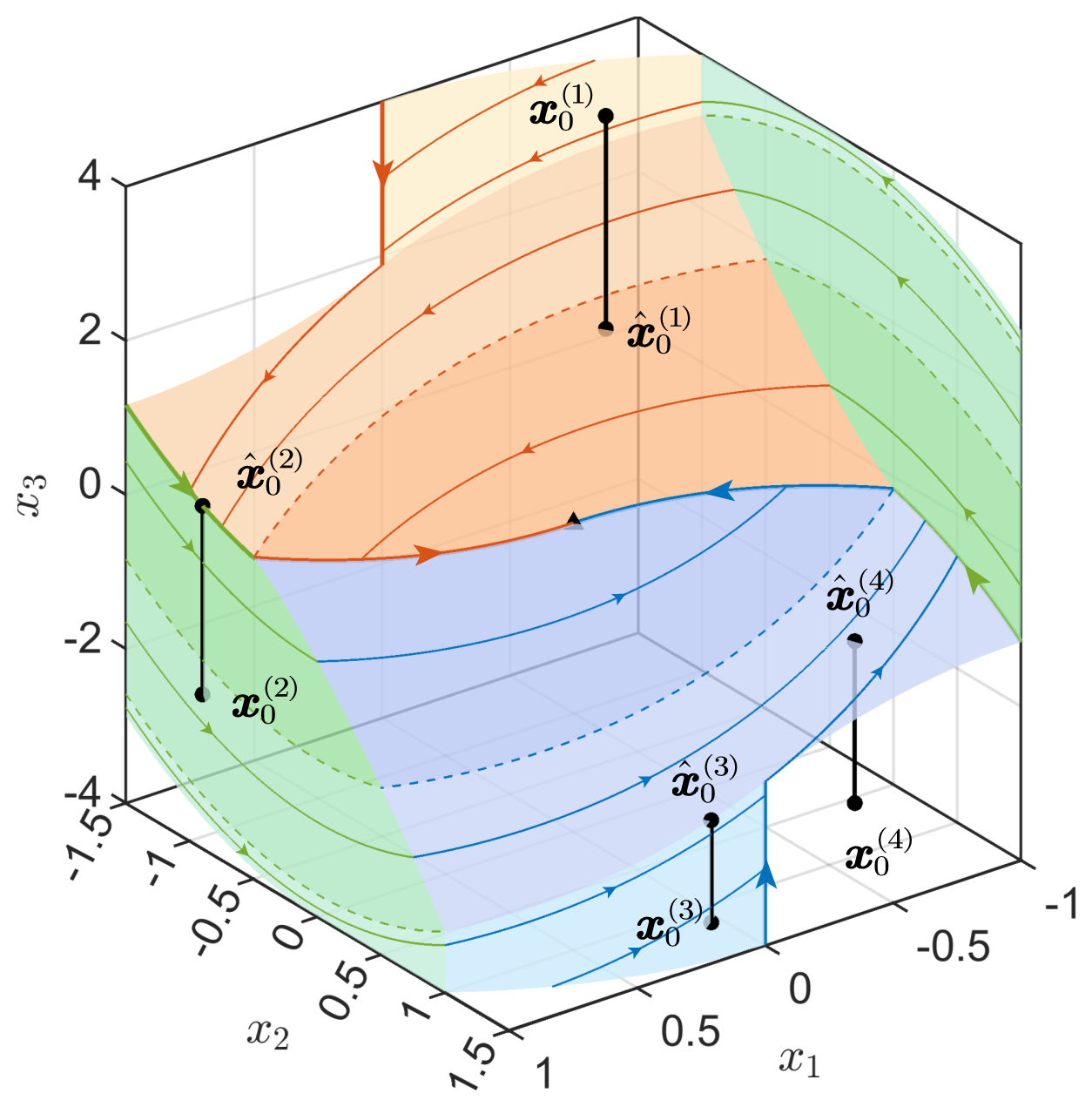}
            \caption{Some examples of proper positions. The parameters and optimal-trajectory manifold is the same as Fig. \ref{fig:demos_manifold}(b). $\forall i=1,2,3,4,\hat{\vx}_0^{\left(i\right)}$ has a proper position and the same $x_1,x_2,x_3$ as th\modify{ose} of $\vx_0^{\left(i\right)}$.}
            \label{fig:properposition_demo}
        \end{figure}

        \begin{definition}\label{def:proper_position}
            For a feasible problem $\problem\left(\vx_0,\vxf;\vM\right)$ of $n$-th order, $\vx_0$ is called \textbf{higher} (or \textbf{lower}) than $\manifold_{n-1}\left(\vxf,\vM\right)$, if $x_{0,n}>$ (or $<$) $p^*\left(\vx_0;\vxf,\vM\right)$. If $x_{0,n}=p^*\left(\vx_0;\vxf,\vM\right)$, then $\vx_0$ is called to \textbf{have a proper position}.
        \end{definition}

        The proper-position function is well-defined according to Proposition \ref{thm:dim_manifold}. In Definition \ref{def:proper_position_function}, $\hat{\vx}_0$ can be regarded as the projection of $\vx_0$ along the $x_n$-axis onto $\manifold_{n-1}\left(\vxf,\vM\right)$. In Definition \ref{def:proper_position}, the relationship between $\vx_0$ and $\manifold_{n-1}\left(\vxf,\vM\right)$ is represented by the positional relationship between $\vx_0$ and its projection, i.e., $\hat{\vx}_0$.

        A 3rd order example is shown in Fig. \ref{fig:properposition_demo}. $\vx_0^{\left(1\right)}$ is higher than $\manifold_{2}\left(\vxf,\vM\right)$, while $\vx_0^{\left(2\right)}$, $\vx_0^{\left(3\right)}$, and $\vx_0^{\left(4\right)}$ are lower than $\manifold_{2}\left(\vxf,\vM\right)$. $\forall i$, $\hat{\vx}_0^{\left(i\right)}\in\manifold_{2}\left(\vxf,\vM\right)$ has a proper position.

        Now MIM is introduced by mathematical induction.

        For the base case, i.e., $n=1$, the optimal control has a trivial analytic expression, i.e.,
        \begin{equation}\label{eq:1order_optimal_control}
            u^*\left(t\right)=M_0\,\sgn\left(x_\f-x_0\right),\,\forall 0\leq t\leq \tf=\frac{\abs{x_\f-x_0}}{M_0}.
        \end{equation}

        Assume that $\forall 1\leq k<n$, the $k$-th order trajectories with arbitrary $\vx_0,\vxf,\vM$ can be planned by MIM. In the case where $n\geq2$ and $M_n=\infty$, $\hat{x}_n=p^*\left(\vx_0;\vxf,\vM\right)$ can be calculated by solving $\problem\left(\vx_0^{1:\left(n-1\right)},\vxf^{1:\left(n-1\right)};\vM^{0:\left(n-1\right)}\right)$. By Definition \ref{def:proper_position_function}, \modify{one} can judge whether $\vx_0$ is higher or lower than $\mathcal{F}_{n-1}\left(\vxf,\vM\right)$ by comparing $x_{0,n}$ and $\hat{x}_n$.
        \begin{enumerate}
            \item If $\vx_0$ has a proper position, according to \eqref{eq:definition_state_optimal_manifold}, $\problem\triangleq\problem\left(\vx_0,\vxf;\vM\right)$ has the same solution as the $\left(n-1\right)$-th order problem $\problem\left(\vx_0^{1:\left(n-1\right)},\vxf^{1:\left(n-1\right)};\vM^{0:\left(n-1\right)}\right)$.
            \item\label{case:MIM_infty_higher} If $x_0$ is higher than $\mathcal{F}_{n-1}\left(\vxf,\vM\right)$, the system tends to obtain a state vector in minimum uniform speed $-M_{n-1}$, and keeps the velocity as $-M_{n-1}$ until entering $\mathcal{F}_{n-1}\left(\vxf,\vM\right)$.
            \begin{enumerate}
                \item Firstly, the system state vector moves along the MIM-trajectory of the $\left(n-1\right)$-th order problem $\problem_1=\problem\left(\vx_0^{1:\left(n-1\right)},-M_{n-1}\ve_{n-1};\vM^{0:\left(n-1\right)}\right)$, i.e., $\vx\left(t\right)=\vx^*\left(t;\problem_1\right)$.
                \item\label{case:MIM_infty_higher_reach_maxv} If $\vx\left(t_\f^*\left(\problem_1\right)\right)$ is still higher than $\mathcal{F}_{n-1}\left(\vxf,\vM\right)$, denote $\hat{x}_{n}=p^*\left(-M_{n-1}\ve_{n-1};\vxf,\vM\right)$. Then, $u\left(t\right)\equiv0$ for $t\in\left[t_\f^*\left(\problem_1\right),t_1+t_\f^*\left(\problem_1\right)\right]$, where $t_1=\dfrac{x_n\left(t_\f^*\left(\problem_1\right)\right)-\hat{x}_{n}}{M_{n-1}}$. Therefore, $\vx\left(t_1\right)=-M_{n-1}\ve_{n-1}+\hat{x}_n\ve_n$ has a proper position. Denote $\problem_2=\problem\left(-M_{n-1}\ve_{n-1},\vxf^{1:\left(n-1\right)};\vM^{0:\left(n-1\right)}\right)$. Let $\vx$ moves along the MIM-trajectory of $\problem_2$, i.e., $\vx\left(t_\f^*\left(\problem_1\right)+t_1+t\right)=\vx^*\left(t;\problem_2\right),0\leq t\leq t_\f^*\left(\problem_2\right)$. Finally, $\vx\left(t_\f^*\left(\problem\right)\right)$ reach\modify{es} $\vxf$, i.e.,
                \begin{equation}
                    t_\f^*\left(\problem\right)=t_\f^*\left(\problem_1\right)+t_1+t_\f^*\left(\problem_2\right).
                \end{equation}
                \item\label{case:MIM_infty_higher_intercept} If $\vx\left(t_\f^*\left(\problem_1\right)\right)$ is lower than $\mathcal{F}_{n-1}\left(\vxf,\vM\right)$, i.e., $\vx\left(t\right)$ enters $\mathcal{F}_{n-1}\left(\vxf,\vM\right)$ at some time $t_2\in\left(0,t_\f^*\left(\problem_1\right)\right)$, then $\vx$ can move along the MIM-trajectory of the $\left(n-1\right)$-th order problem $\problem_3=\problem\left(\vx^{1:\left(n-1\right)}\left(t_2\right),\vxf^{1:\left(n-1\right)};\vM^{0:\left(n-1\right)}\right)$. Finally, $\vx\left(t_\f^*\left(\problem\right)\right)$ reach\modify{es} $\vxf$, i.e.,
                \begin{equation}\label{eq:firstbracket_occur}
                    t_\f^*\left(\problem\right)=t_2+t_\f^*\left(\problem_3\right).
                \end{equation}
            \end{enumerate}
            \item If $x_0$ is lower than $\mathcal{F}_{n-1}\left(\vxf,\vM\right)$, a similar analysis as above applies to this case.
        \end{enumerate}

        If $M_{n-1}=\infty$ above, then the first terminal state vector in Case \ref{case:MIM_infty_higher} is modified as $-M_k\ve_k$ where $k<n-1$ is \modify{the} maximum index satisfying $M_k\not=\infty$. Specifically, if $k=0$, the problem degenerates into a time-optimal problem without state constraints, which can be easily solved by an $n$-th order polynomial system \cite{bartolini2002time}.

        An example of Case \ref{case:MIM_infty_higher_reach_maxv} is $\vx^{\left(1\right)}$ in Fig. \ref{fig:2order_demo_switchinglaw}, where $\vx^{\left(1\right)}$ slides uniformly under a minimum velocity $-M_1$ until $\vx$ enters $\mathcal{F}_{1}\left(\vxf,\vM\right)$. An example of Case \ref{case:MIM_infty_higher_intercept} is $\vx^{\left(2\right)}$ in Fig. \ref{fig:2order_demo_switchinglaw}, where $\vx^{\left(2\right)}$ is intercepted by $\mathcal{F}_{1}\left(\vxf,\vM\right)$ before $\vx^{\left(2\right)}$ reaches $x_1=-M_1$.
        


        The above $n$-th order process meets the Bang-Singular-Bang control law and can be solved by sequential $\left(n-1\right)$-th order problems. Furthermore, the trajectory $\vx\left(t\right)=\vx^*\left(t,\problem\right)$ meets the constraints $\forall k=1,2,\dots,n$, $\abs{x_k}\leq M_k$. Specifically, $\abs{x_n}\leq M_n=\infty$.

        In the case where $n\geq2$ and $M_n<\infty$, the optimal trajectory $\vx\left(t\right)=\vx^*\left(t;\problem_\infty\right)$ is generated based on the above process, where $\problem_\infty=\problem\left(\vx_0,\vxf;\vM^{0:\left(n-1\right)}\right)$.
        \begin{enumerate}
            \item\label{case:MIM_finite_deactivated} If the trajectory $\vx\left(t\right)=\vx^*\left(t;\problem_\infty\right)$ meets the constraint $\abs{x_n}\leq M_n$, then the constraint $\abs{x_n}\leq M_n$ is deactivated. Hence, $\vx\left(t\right)=\vx^*\left(t;\problem_\infty\right)$ is also the optimal trajectory of $\problem=\problem\left(\vx_0,\vxf;\vM\right)$.
            \item\label{case:MIM_finite_activated} If the trajectory $\vx\left(t\right)=\vx^*\left(t;\problem_\infty\right)$ does not meet the constraint, i.e., $\abs{x_n}>M_n$ at some time, then a tangent marker w.r.t. $x_n$ occurs in the optimal problem $\problem$. Considering the dimension of the augmented switching law and according to Theorem \ref{thm:costate}, $\exists \modify{h}\leq \modify{n}$, s.t. $\vx$ reaches a tangent marker $\left(n,\modify{h}\right)$ through an augmented switching law of $\modify{h}$ dimension in optimal time, i.e., by solving \modify{an} $\modify{h}$-th order problem. Assume the system state vector is $\hat{\vx}$ when reaching the tangent marker. Then, the system state vector can move from $\hat{\vx}$ to $\vxf$ in optimal time by solving $\problem\left(\hat{\vx},\vxf;\vM^{0:\left(n-1\right)}\right)$.
        \end{enumerate}


        By mathematical induction, $\forall n\in\N\modify{{}^*}$, trajectories for high-order chain-of-integrators systems with full state constraints $\vM\in\R_{++}\times\overline{\R}_{++}^{n}$ and arbitrary terminal state vector $\vxf\in\R^n$ can be planned by MIM.


    \subsection{Virtual System Behavior}
        The augmented switching law in this section refers to the switching law in MIM with full definite conditions. The virtual system behavior is defined in the case where \eqref{eq:firstbracket_occur} occurs.

        \begin{definition}\label{def:TheFirstBracket}
            In MIM, assume \modify{that} an augmented switching law is $S=\modify{S_1s\left(S_2\right)S_3\in}\mathcal{AS}\left(\problem\left(\vx_0,\vx_\f;\vM\right)\right)$, where $s$ is a system behavior, and $S_i=s_1^{\left(i\right)}s_2^{\left(i\right)}\dots s_{N_i}^{\left(i\right)},i=1,2,3$ are augmented switching laws. $\left(S_2\right)$ is a \textbf{virtual system behavior}, if $\exists \vt^{\left(i\right)}=\left(t_j^{\left(i\right)}\right)_{j=1}^{N_i}\geq\vzero,i=1,2,3$ and $t_1,t_2\geq0$, s.t. 
            \begin{enumerate}
                \item $\vx_0$ moves to $\vxf$ successively passing through $S_1$ by time $\vt^{\left(1\right)}$, $s$ by time $t_1$, and $S_3$ by time $\vt^{\left(3\right)}$. In other words, $S_1sS_3$ with time $\left(\vt^{\left(1\right)},t_1,\vt^{\left(3\right)}\right)$ is the solution of $\problem$.
                \item $\vx_0$ can move successively passing through $S_1$ by time $\vt^{\left(1\right)}$, $s$ by time $t_2$, and $S_2$ by time $\vt^{\left(2\right)}$. In other words, $S_1sS_2$ with time $\vt^{\left(1\right)},t_2,\vt^{\left(2\right)}$ is a feasible solution under constraints \eqref{eq:constraints_state_transformation} and \eqref{eq:constraints_state_system_behavior}.
            \end{enumerate}
        \end{definition}        


        An example is shown in Fig. \ref{fig:demo_first_bracket}, where $\mathcal{S}\left(\problem\right)=\underline{01}\overline{0}\underline{2}\overline{01}\underline{0}\overline{01}\underline{0}\overline{2}\underline{01}\overline{0}$ \modify{in MIM}; hence, $\dim\mathcal{S}\left(\modify{\problem}\right)=6>4$. $\modify{S}=\underline{01}\overline{0}\underline{2}\overline{01}\underline{0}\left(\underline{3}\right)\overline{01}\underline{0}\overline{2}\underline{01}\overline{0}\modify{\in}\mathcal{AS}\left(\problem\right)$; hence, $\dim\modify{S}=4$. As shown in Fig. \ref{fig:demo_first_bracket}(d), $\vx_{i-1}$ moves to $\vx_{i}$ under the corresponding system state vector for time $t_i$, $i\not=8,9$, while $\vx_{7}$ moves to $\vx_{9}$ under $\overline{0}$ for time $t_9$. Furthermore, the ability of $\vx_7$ \modify{to} move to $\vx_{8}$ under $\overline{0}$ for a time period $t_9$ provides definite condition for $\problem_2$, where $\vx_8$ is determined by the virtual system behavior $\left(\underline{3}\right)$, i.e., $x_{8,1}=x_{8,2}=0$, \modify{and} $x_{8,3}=-M_3=-4$. $\vx$ is intercepted by $\mathcal{F}_3\left(\vx_\f,\vM\right)$ at $\vx_7$. 

        \begin{figure}[!t]
            \centering
            \includegraphics[width=\columnwidth]{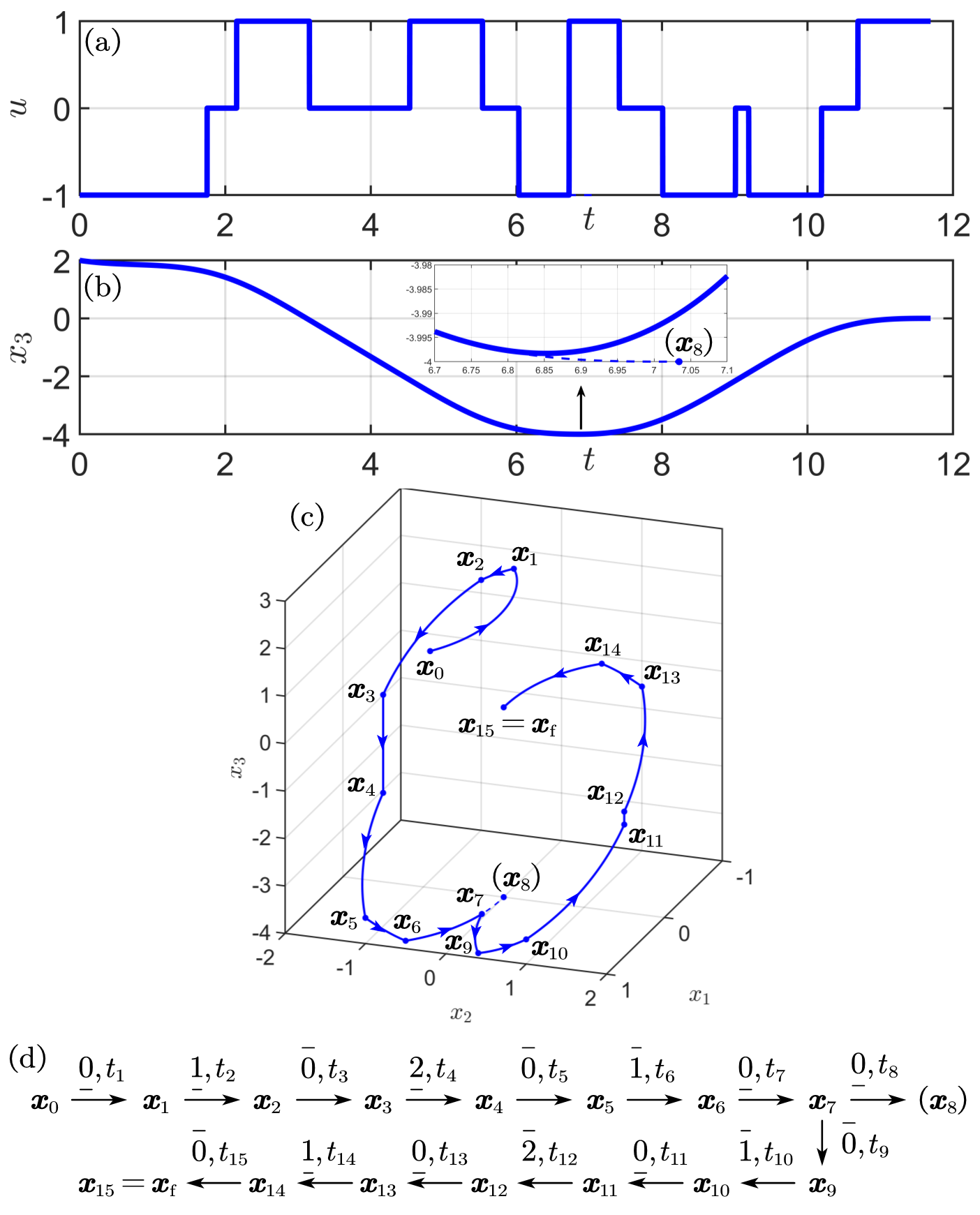}
            \caption{An example of the virtual system behavior. $\vM=\left(1,1,1.5,4,20\right)$\modify{.} $\modify{S}=\underline{01}\overline{0}\underline{2}\overline{01}\underline{0}\left(\underline{3}\right)\overline{01}\underline{0}\overline{2}\underline{01}\overline{0}\modify{\in}\mathcal{AS}\left(\problem\right)$, where $\left(\underline{3}\right)$ is a virtual system behavior. (a) and (b) are the snap and velocity plots\modify{.} (c) is the trajectory of the first three components of system states $\vx^{1:3}$. (d) is the flow charts for $\modify{S}$.}
            \label{fig:demo_first_bracket}
        \end{figure}

        \begin{theorem}\label{thm:TheFirstBracket}
            In MIM, apply \modify{the} notations \modify{from} Definition \ref{def:TheFirstBracket}.
            \begin{enumerate}
                \item\label{thm:TheFirstBracket_dimension} The dimension of $S$ can be calculated like \eqref{eq:DimensionSwitchLaw}, i.e., $\left(S_2\right)$ contributes $\sum_{i=1}^{N_2}\left(1-\abs{s_i^{\left(2\right)}}\right)<0$ dimension to $S$.
                \item\label{thm:TheFirstBracket_s13_sign} $\abs{s_1^{\left(3\right)}}=0$, and $\sgn\left(s\right)=-\sgn\left(s_1^{\left(3\right)}\right)$.
                \item\label{thm:TheFirstBracket_s2end_sign} $\abs{s_{N_2}^{\left(2\right)}}\not=0$, and $\forall k<N_2$, $\abs{s_{N_2}^{\left(2\right)}}>\abs{s_{k}^{\left(2\right)}}$. Furthermore, $\sgn\left(s_{N_2}^{\left(2\right)}\right)=-\sgn\left(s_1^{\left(3\right)}\right)$.
                \item\label{thm:TheFirstBracket_evennumber} Except the tangent marker, $S_2$ has an even number of even numbers, and \eqref{eq:AnalysisOfSign} holds.
                \item\label{thm:TheFirstBracket_simplify} If $\exists N_2'<N_2$, s.t. $\dim S_{2,\left(N_2'+1\right):N_2}=0$, then $\left(S_2\right)$ is equivalent to $\left(S_{2,1:N_2'}\right)$, i.e., $S$ is equivalent to $S_1s\left(S_{2,1:N_2'}\right)S_3$.
            \end{enumerate}
        \end{theorem}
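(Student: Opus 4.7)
The plan is to read $(S_2)$ as a hypothetical continuation of the behavior $s$ past the interception time, driving the state along $S_2$ to the virtual target prescribed by the MIM recursion (namely $-M_{n-1}\ve_{n-1}$ in Case \ref{case:MIM_infty_higher_intercept}, or its $-M_k\ve_k$ analogue when $M_{n-1}=\infty$). Under this reading every one of the five items becomes a statement about the feasible auxiliary trajectory $S_1 s S_2$ whose existence is guaranteed by Definition \ref{def:TheFirstBracket}, and the tools already developed for Theorems \ref{thm:DimensionSwitchLaw} and \ref{thm:AnalysisOfSign} transfer almost verbatim to the virtual branch.

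For item \ref{thm:TheFirstBracket_dimension}, I would repeat the variable-and-constraint tally of Theorem \ref{thm:DimensionSwitchLaw} restricted to the virtual branch: each $s_i^{(2)}$ introduces one time variable $t_i^{(2)}\geq 0$, one state vector, $n$ transformation equations \eqref{eq:constraints_state_transformation} with $u$ set by \eqref{eq:DimensionSwitchLaw_u}, and $|s_i^{(2)}|$ boundary equations \eqref{eq:constraints_state_system_behavior}; the virtual branch shares its initial state with the endpoint of $s$, and its terminal state is pinned to the virtual target. Summing yields exactly $\sum_{i=1}^{N_2}(1-|s_i^{(2)}|)$, and the strict negativity follows from item \ref{thm:TheFirstBracket_s2end_sign}, which forces a large terminal label alongside the alternation rule. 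For items \ref{thm:TheFirstBracket_s13_sign} and \ref{thm:TheFirstBracket_s2end_sign}, I would use the MIM construction directly: the interception with $\manifold_{n-1}(\vxf,\vM)$ occurs strictly inside the constraint set, so by Proposition \ref{prop:bang_zero_bang_law} the first behavior of $S_3$ is a full bang, i.e.\ $|s_1^{(3)}|=0$, with sign opposite to $s$ because the post-interception control must reverse the direction in which $s$ was driving $x_n$; the virtual branch ends at $\pm M_{n-1}\ve_{n-1}$, forcing $|s_{N_2}^{(2)}|=n-1$, and the neighbor rule of Theorem \ref{thm:DimensionSwitchLaw} together with the geometric fact that level-$k$ saturation requires first sweeping through every lower level rules out any earlier $s_k^{(2)}$ from matching $n-1$, while the sign of the terminal target fixes $\sgn(s_{N_2}^{(2)})=-\sgn(s_1^{(3)})$.

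Item \ref{thm:TheFirstBracket_evennumber} then follows by applying Theorem \ref{thm:AnalysisOfSign} to the feasible hypothetical trajectory $S_1 s S_2$ and tracking parity of even-labeled behaviors, noting that tangent markers contribute no sign flip of their own and therefore do not disturb the chain from $\sgn(s)$ to $\sgn(s_{N_2}^{(2)})$; the even count is then forced by closure of the parity. For item \ref{thm:TheFirstBracket_simplify}, a zero-dimensional suffix $S_{2,(N_2'+1):N_2}$ contributes as many independent constraints as unknowns, so it is uniquely solvable from the state at the end of $S_{2,1:N_2'}$ and may be deleted without disturbing the definite conditions on $(\vt^{(1)},t_1,\vt^{(3)})$. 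The main obstacle will be the dimension bookkeeping in item \ref{thm:TheFirstBracket_dimension}: one must avoid double-counting the junction state shared between the real and virtual branches, and correctly treat the virtual target as fully fixed rather than free as in Theorem \ref{thm:DimensionSwitchLaw}; a secondary subtlety is excluding repeated high-label behaviors inside $S_2$, which I would close by combining the sign rule \eqref{eq:AnalysisOfSign} with the monotonicity of $x_k$ between consecutive boundary hits under the bang-zero control dictated by \eqref{eq:DimensionSwitchLaw_u}.
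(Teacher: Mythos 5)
Your proposal is correct and follows essentially the same route as the paper: treating $\left(S_2\right)$ as the virtual continuation toward the lower-order target $\pm M_{n-1}\ve_{n-1}$, repeating the variable-and-constraint tally of Theorem~\ref{thm:DimensionSwitchLaw} for item~\ref{thm:TheFirstBracket_dimension}, using the MIM recursion to pin $\abs{s_1^{\left(3\right)}}=0$ and $\abs{s_{N_2}^{\left(2\right)}}=n-1$ as the maximal label, invoking the sign rule \eqref{eq:AnalysisOfSign} on the real and virtual branches for the parity claim, and dismissing item~\ref{thm:TheFirstBracket_simplify} as immediate. The only local deviation is that you justify $\sgn\left(s\right)=-\sgn\left(s_1^{\left(3\right)}\right)$ by an informal ``the post-interception control must reverse'' argument, whereas the paper gets it more cleanly by applying \eqref{eq:AnalysisOfSign} to the adjacent behaviors $s$ and $s_1^{\left(3\right)}$ in the actual switching law $S_1sS_3$ (with $\abs{s_1^{\left(3\right)}}=0$ even); this is a cosmetic difference, not a different approach.
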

        
        \begin{proof}
            Theorem \ref{thm:TheFirstBracket}-\ref{thm:TheFirstBracket_dimension} holds by the same analysis as \modify{that of} Theorem \ref{thm:DimensionSwitchLaw}.
            Theorem \ref{thm:TheFirstBracket}-\ref{thm:TheFirstBracket_simplify} holds evidently.

            Denote $\problem=\problem\left(\vx_0,\vxf;\vM\right)$. According to Section \ref{subsec:MIM}, the virtual system behavior $\left(S_2\right)$ occurs when the optimal trajectory represented by $S_1sS_2$ is intercepted by a low-dimensional optimal-trajectory manifold $\mathcal{F}$, and $\vx$ moves in $\mathcal{F}$ along an optimal trajectory represented by $S_3$ after that. By induction, we only need to prove Theorem \ref{thm:TheFirstBracket} in the case where the terminal state vector of $S_1sS_2$ has a maximum velocity, i.e., $\abs{s_{N_2}^{\left(2\right)}}=n-1$ for an $n$-th order problem.

            According to \modify{the} discussion above, $S_3\in\mathcal{AF}_{n-1}$; hence, $\abs{s_1^{\left(3\right)}}=0$. Note that $S_1sS_3$ induces $\mathcal{S}\left(\problem\right)$ by removing all brackets, and $s$ as well as $s_1^{\left(3\right)}$ are system behaviors. According to Theorem \ref{thm:AnalysisOfSign}, $\sgn\left(s\right)=-\sgn\left(s_1^{\left(3\right)}\right)$. Theorem \ref{thm:TheFirstBracket}-\ref{thm:TheFirstBracket_s13_sign} holds.

            Note that $S_1sS_2^{1:\left(N_2-1\right)}$ is an augmented switching law with the terminal states $\pm M_{n-1}\ve_{n-1}$. Therefore, $\abs{s_{N_2}^{\left(2\right)}}=n-1\not=0$, and $\abs{s_{N_2}^{\left(2\right)}}>\abs{s_{k}^{\left(2\right)}},\forall k<N_2$. Furthermore, $\abs{s_{1}^{\left(1\right)}}=\abs{s_{N_3}^{\left(3\right)}}=0$ and $\sgn\left(s_{N_3}^{\left(3\right)}\right)=\left(-1\right)^{n-1}\sgn\left(s_{1}^{\left(1\right)}\right)$. Note that $S_3\in\mathcal{AF}_{n-1}$; hence, $\sgn\left(s_{N_3}^{\left(3\right)}\right)=\left(-1\right)^{n-2}\sgn\left(s_{1}^{\left(3\right)}\right)$. Assume $\vx_0$ is higher than $\mathcal{F}_{n-1}\left(\vxf,\vM\right)$. Then, $\sgn\left(s_{N_2}^{\left(2\right)}\right)=\sgn\left(s_{1}^{\left(1\right)}\right)=-1$; hence, $\sgn\left(s_{N_2}^{\left(2\right)}\right)=-\sgn\left(s_{N_3}^{\left(3\right)}\right)$. If $\vx_0$ is lower than $\mathcal{F}_{n-1}\left(\vxf,\vM\right)$, then $\sgn\left(s_{N_2}^{\left(2\right)}\right)=-\sgn\left(s_{N_3}^{\left(3\right)}\right)=1$. Therefore, Theorem \ref{thm:TheFirstBracket}-\ref{thm:TheFirstBracket_s2end_sign} holds.

            According to \modify{the} discussion above, \eqref{eq:AnalysisOfSign} holds for $S_1sS_2$, $S_1sS_3$, and $s_{N_2}^{\left(2\right)}S_3$. In other words, the sign of $S_1s\left(S_2\right)S_3$ is the same as $S_1sS_2S_3$ and satisfies \eqref{eq:AnalysisOfSign}. Note that $\sgn\left(s\right)=-\sgn\left(s_1^{\left(3\right)}\right)=\sgn\left(s_{N_2}^{\left(2\right)}\right)$; hence, $S_2$ has an even number of even numbers. Therefore, Theorem \ref{thm:TheFirstBracket}-\ref{thm:TheFirstBracket_evennumber} holds.
        \end{proof}

        As an example for Theorem \ref{thm:TheFirstBracket}-\ref{thm:TheFirstBracket_simplify}, $\underline{01}\overline{0}\underline{2}\left(\overline{01}\underline{03}\right)\overline{01}\underline{0}\overline{2}\underline{01}\overline{0}$ is equivalent to $\underline{01}\overline{0}\underline{2}\overline{01}\underline{0}\overline{2}\underline{01}\overline{0}$. The tangent marker also \modify{exhibits certain} properties in MIM.

        
        \begin{theorem}\label{thm:TangentMarkerBracket_MIM}
            Apply \modify{the} notations \modify{from} Theorem \ref{thm:TangentMarkerBracket}. In MIM,
            \begin{enumerate}
                \item\label{thm:TangentMarker_MIM_sign} $\sgn\left(s_{N_1}^{\left(1\right)}\right)=\sgn\left(s_{1}^{\left(2\right)}\right)=\sgn\left(s\right)$.
                \item\label{thm:TangentMarkerBracket_less} $\modify{h}<\abs{s}$. \modify{Moreover, $h$ is even.}
            \end{enumerate}
        \end{theorem}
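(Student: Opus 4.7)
The plan is to derive Part 2 as a corollary of Part 1, so most of the work lies in Part 1. I would attack Part 1 by unfolding the MIM construction around the tangent marker as described in Case \ref{case:MIM_finite_activated} of Section \ref{subsec:MIM}. By the $\vx\mapsto-\vx$ symmetry we may take $\sgn(s)=+1$, so the tangent condition reads $x_{\abs{s}}(t_1)=+M_{\abs{s}}$, $x_{\abs{s}-1}(t_1)=\cdots=x_{\abs{s}-2l+1}(t_1)=0$, and (in case (a) of Section \ref{sub2sec:junction_TouchMaxTimePoint}, anticipating Part 2) $x_{\abs{s}-2l}(t_1)<0$.

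For $s_1^{(2)}$, the post-tangent sub-problem is $\problem(\hat{\vx},\vxf;\vM^{0:(\abs{s}-1)})$ by MIM's construction, and I would feed this back into MIM at the same order. The key claim to establish is that $\hat{\vx}$ is \emph{lower} than $\manifold_{\abs{s}-1}(\vxf,\vM^{0:(\abs{s}-1)})$: MIM inserted the tangent precisely because the uncapped trajectory triggering it has $x_{\abs{s}}>+M_{\abs{s}}$ at the peak, whereas $\hat{\vx}$ only has $x_{\abs{s}}=+M_{\abs{s}}$; pinning this down should give $x_{\abs{s}}(\hat{\vx})<p^*(\hat{\vx};\vxf,\vM^{0:(\abs{s}-1)})$, after which the ``lower'' branch of MIM opens with $u=+M_0=\sgn(s)\,M_0$, supplying $\sgn(s_1^{(2)})=\sgn(s)$. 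A mirror-image argument applied to the pre-tangent $2l$-th order sub-problem yields $\sgn(s_{N_1}^{(1)})=\sgn(s)$. Part 2 then follows by contradiction: if $2l=\abs{s}$, Case (b) of Section \ref{sub2sec:junction_TouchMaxTimePoint} forces $u(t_1^+)=-M_0\,x_{\abs{s}}(t_1)/M_{\abs{s}}=-\sgn(s)\,M_0$, so $\sgn(s_1^{(2)})=-\sgn(s)$, contradicting Part 1; since Definition \ref{def:TangentMarker} already gives $2l\le\abs{s}$, we conclude $2l<\abs{s}$.

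The hardest step will be the ``$\hat{\vx}$ lies strictly on one side of the manifold'' claim: the uncapped trajectory is not literally rooted at $\hat{\vx}$, so one must translate the MIM tangent-insertion criterion (uncapped $\abs{x_{\abs{s}}}>M_{\abs{s}}$) into a comparison with $p^*$, likely via a monotonicity-in-$x_{\abs{s}}$ argument along the $(\abs{s}-1)$-th-order manifold. The pre-tangent side introduces a second subtlety: the paper only describes the $2l$-th order sub-problem algorithmically rather than recursively, so the sign of its last bang will have to be pinned down by a parallel ``higher/lower than manifold'' dichotomy tailored to that sub-problem.
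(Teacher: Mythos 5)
Your overall architecture coincides with the paper's: reduce to the top-order tangent $\abs{s}=n$, obtain Part 1 by locating the tangent state relative to the lower-order optimal-trajectory manifold and reading off the sign of the adjacent bangs from the MIM branch taken on that side, and then derive Part 2 exactly as the paper does --- if $2l=\abs{s}$, Case (b) of Section \ref{sub2sec:junction_TouchMaxTimePoint} forces $u\left(t_1^+\right)=-\sgn\left(s\right)M_0$, contradicting Part 1, and $2l\leq\abs{s}$ then gives $2l<\abs{s}$. The one substantive divergence is the orientation of your key claim: you place $\hat{\vx}$ \emph{below} $\manifold_{\abs{s}-1}\left(\vxf,\vM\right)$ when the tangent is at $+M_{\abs{s}}$, whereas the paper's proof asserts that the tangent state is ``still higher'' (with the tangent at $-M_{\abs{s}}$ when $\vx_0$ is higher). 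Your orientation is in fact the one consistent with the paper's own worked example in Fig.~\ref{fig:compare_trajectory}(b), where $\vx_0$ is higher than the manifold yet the tangent occurs at $+M_3$ and both adjacent bangs are $\overline{0}$; at the tangent state $\left(a,0,M_3\right)$ with $a<0$ the proper position exceeds $M_3$, so that state is lower than $\manifold_2$. So your sign bookkeeping is, if anything, more defensible than the paper's, and your proof of Part 2 is identical to theirs.

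The caveat is that your step ``lower than the manifold $\Rightarrow$ the branch opens with $u=+M_0$'' is a leap, and it is the same leap the paper makes in the opposite orientation. Being lower only tells MIM to target $+M_{\abs{s}-2}\ve_{\abs{s}-2}$ (maximum velocity) in the recursive sub-problem; the sign of the \emph{first} control of that sub-problem depends on the lower-order components of $\hat{\vx}$ and must be extracted by unwinding the recursion level by level, using the tangent conditions $x_{\abs{s}-1}=\dots=x_{\abs{s}-2l+1}=0$ and $\sgn\left(x_{\abs{s}-2l}\right)=-\sgn\left(s\right)$ at each level. You correctly flag this, together with the pre-tangent $2l$-dimensional approach (for which the local Taylor expansion \eqref{eq:junction_TouchMaxTimePoint_minus} does \emph{not} constrain $u^-$, so the sign of $s_{N_1}^{\left(1\right)}$ really must come from the MIM construction rather than from PMP), as the hard steps. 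Neither is closed in the paper's own proof either, so I would grade your proposal as a correct outline at the same level of rigor as the published argument rather than as containing a gap the paper avoids.
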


        \begin{proof}
            Similar to the proof of Theorem \ref{thm:TheFirstBracket}, we only need to prove the case where $\abs{s}=n$ for an $n$-th problem. Assume $\vx_0$ is higher than $\mathcal{F}_{n-1}\left(\vxf,\vM\right)$ and the tangent marker occurs. According to Theorem \ref{thm:TangentMarkerBracket_MIM}, $\abs{s_{N_1}^{\left(1\right)}}=\abs{s_{1}^{\left(2\right)}}=0$. According to Section \ref{subsec:MIM}, $\sgn\left(s_1\right)=\left(-1\right)^{d-2}\sgn\left(s_1^{\left(1\right)}\right)=-1$, and $\sgn\left(s\right)=-1$. When $\vx$ reaches $\left(s,d\right)$ as $\tilde{\vx}$, $\tilde{\vx}$ is still higher than $\mathcal{F}_{n-1}\left(\vxf,\vM\right)$, while $s_2S_2$ is the augmented switching law from $\tilde{\vx}$ to $\vxf$. Therefore, $\abs{s_2}=0$, $\sgn\left(s_1\right)=-1$. Applying a similar analysis to the case where $\vx_0$ is lower than $\mathcal{F}_{n-1}\left(\vxf,\vM\right)$, Theorem \ref{thm:TangentMarkerBracket_MIM}-\ref{thm:TangentMarker_MIM_sign} holds.

            Assume $\modify{h}=\abs{s}$ and the system state vector reaches $\left(s,\modify{h}\right)$ at $t$. Then, $u$ is continuous at $t$ and $u\left(t\right)=\sgn\left(s\right)M_0$ by Theorem \ref{thm:TangentMarkerBracket_MIM}-\ref{thm:TangentMarker_MIM_sign}. However, according to Section \ref{sub2sec:junction_TouchMaxTimePoint}, $\sgn\left(u\left(t\right)\right)=-\sgn\left(s\right)$ \modify{since $\abs{x_{\abs{s}}}\leq M_{\abs{s}}$}, which causes contradiction. Therefore, $\modify{h}<\abs{s}$\modify{}{; hence, $h$ is even}.
        \end{proof}

        Theorems \ref{thm:TheFirstBracket} and \ref{thm:TangentMarkerBracket_MIM} indicate that signs of elements of an augmented switching law \modify{of} length $N$ \modify{are} determined by the sign of the last element, which also reduces the possible signs of all elements from $2^N$ to $2$. Then, augmented switching laws can be enumerated fully, as reasoned in Section \ref{subsec:FullEnumerationASL}.

        

    \subsection{Full Enumeration of Augmented Switching Laws in MIM}\label{subsec:FullEnumerationASL}
        
        \begin{definition}\label{def:ASL_representation}
            In MIM, assume $\manifold_{k}\left(\vxf,\vM\right)\not=\varnothing$ in \eqref{eq:definition_state_optimal_manifold}. The \textbf{augmented-switching-law representation} of $\manifold_{k}\left(\vxf,\vM\right)$ is
            \begin{align}
                \mathcal{AF}_k\left(\vxf,\vM\right)=\bigcup\left\{S\modify{\in}\mathcal{AS}\left(\problem\left(\vx_0,\vxf;\vM\right)\right):\right.\nonumber\\
                \left.\vx_0\in\manifold_{k}\left(\vxf,\vM\right),\dim S=k\right\}.
            \end{align}
        \end{definition}

        
        $\forall \vx_0\in\manifold_{k}\left(\vxf,\vM\right),\problem=\problem\left(\vx_0,\vxf;\vM\right),S\modify{\in}\mathcal{AS}\left(\problem\right)$, if $\dim S=k'<k$, $\left(k-k'\right)$ zero can be added before $S$ to increase the dimension to $k$. For example, if $S=\overline{0}\underline{0}$ in Fig. \ref{fig:demos_manifold}(b), the switching law can be seen as $\underline{0}\overline{0}\underline{0}$, and the motion time solved \modify{for} the first $\underline{0}$ is 0. If some augmented switching laws do not end with 0, then some \modify{zeros} can also be added after the augmented switching law with motion time 0.

        
        
        \begin{definition}\label{def:AF}
            In MIM,
            \begin{equation}
                \mathcal{AF}_n\triangleq\bigcup_{\substack{\vxf\in\R^n\\\vM\in\R_{++}\times\overline{\R}_{++}^{n}}}\mathcal{AF}_n\left(\vxf,\vM\right)
            \end{equation}
            is called the $n$\textbf{-th order set of augmented switching laws}.
        \end{definition}
        
        \modify{By} Section \ref{subsec:MIM}, $\mathcal{AF}_n$ can be constructed as follows:
        \begin{theorem}\label{thm:AF_constructed}
            In MIM, $\mathcal{AF}_1=\{0\}$. $\forall n\geq 1,\mathcal{AF}_{n+1}\supset F_{n,1}\cup F_{n,2}\cup F_{n,3}$, where
            \begin{align}
                F_{n,1}=\{&S_1sS_2:S_1,S_2\in\mathcal{AF}_{n},\abs{s}=n\},\\
                F_{n,2}=\{&S_1\left(S_2s\right)S_3:S_1S_2,S_3\in\mathcal{AF}_{n},\abs{s}=n,\nonumber\\
                &S_2s\text{ satisfies Theorem \ref{thm:TheFirstBracket}-\ref{thm:TheFirstBracket_evennumber}}\},\\
                F_{n,3}=\{&S_1\left(s,d\right)S_2:S_1\in\mathcal{AF}_{d},S_2\in F_{n,1}\cup F_{n,2},\nonumber\\
                &\abs{s}=n+1,d<\abs{s}\text{ is even}\}.
            \end{align}
            The signs of system behaviors, tangent markers, and virtual system behaviors satisfy Theorems \ref{thm:AnalysisOfSign}, \ref{thm:TangentMarkerBracket}, \ref{thm:TheFirstBracket}, and \ref{thm:TangentMarkerBracket_MIM}.
        \end{theorem}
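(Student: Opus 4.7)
The plan is to prove the theorem by induction on $n$, where the inductive step is organized around the three branches of the MIM construction in Section \ref{subsec:MIM}. The base case $\mathcal{AF}_1=\{0\}$ is immediate from the closed-form optimal control \eqref{eq:1order_optimal_control}, since any 1st-order problem is solved by a single bang and hence is represented by a single system behavior of dimension $1$. Assuming the description of $\mathcal{AF}_n$ holds, I will show that each element of $F_{n,1}\cup F_{n,2}\cup F_{n,3}$ arises as $\mathcal{AS}(\problem(\vx_0,\vxf;\vM))$ for some $(n+1)$-th order problem solvable by MIM, so that the inclusion in $\mathcal{AF}_{n+1}$ holds by definition.

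For $F_{n,1}$, given $S_1,S_2\in\mathcal{AF}_n$ and a system behavior $s$ with $\abs{s}=n$, I would construct $\vxf,\vM$ so that $S_2$ is realized as the MIM augmented switching law from $\sgn(s)M_n\ve_n$ to $\vxf^{1:n}$ under $\vM^{0:n}$; by the inductive hypothesis this is possible, and the dimension count from Proposition \ref{thm:dim_manifold} ensures that $\vx_0^{1:n}$ can be chosen so that $S_1$ realizes the MIM trajectory from $\vx_0^{1:n}$ down to $\sgn(s)M_n\ve_n$. Finally, I would choose $x_{0,n+1}$ strictly beyond the proper-position value $p^*(\vx_0;\vxf,\vM)$ by exactly the amount needed so that the constant-velocity phase $s$ has positive duration. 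This matches Case \ref{case:MIM_infty_higher_reach_maxv} of Section \ref{subsec:MIM} and yields $S_1sS_2$ as an augmented switching law. The compatibility of signs at the junctions of $S_1$, $s$, and $S_2$ is guaranteed by Theorem \ref{thm:AnalysisOfSign}.

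For $F_{n,2}$, given $S_1S_2,S_3\in\mathcal{AF}_n$ and $s$ with $\abs{s}=n$ such that $S_2s$ satisfies Theorem \ref{thm:TheFirstBracket}-\ref{thm:TheFirstBracket_evennumber}, I would realize the virtual system behavior by choosing $\vxf,\vM$ so that $S_3$ is the MIM-trajectory from the interception point to $\vxf$, and by selecting $\vx_0$ to be higher/lower than $\manifold_n(\vxf,\vM)$ in precisely the regime of Case \ref{case:MIM_infty_higher_intercept}: MIM first attempts the $n$-th order plan $S_1S_2s$ aimed at $\sgn(s)M_n\ve_n$, but the trajectory is intercepted by $\manifold_n(\vxf,\vM)$ after the $S_1$ portion. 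The sign and prefix conditions are verified via Theorem \ref{thm:TheFirstBracket}-\ref{thm:TheFirstBracket_s13_sign}, \ref{thm:TheFirstBracket_s2end_sign}, and the hypothesis that $S_2s$ obeys the parity condition ensures the constructed $(S_2s)$ is a legal virtual system behavior. For $F_{n,3}$, given $S_1\in\mathcal{AF}_d$, $S_2\in F_{n,1}\cup F_{n,2}$, $\abs{s}=n+1$, and $d<\abs{s}$ even, I would pick $M_{n+1}<\infty$ so that the unconstrained $(n+1)$-th order MIM solution activates the top constraint in the manner of Case \ref{case:MIM_finite_activated} of Section \ref{subsec:MIM}; by Theorem \ref{thm:TangentMarkerBracket_MIM}-\ref{thm:TangentMarkerBracket_less} the tangent marker $(s,d)$ is then attainable via an auxiliary $d$-th order MIM sub-problem whose augmented switching law is $S_1\in\mathcal{AF}_d$, after which $S_2$ drives the system from the tangent-marker state to $\vxf$.

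The main obstacle I expect is the realizability argument, i.e., producing concrete $(\vx_0,\vxf,\vM)$ whose MIM execution precisely reproduces a prescribed element of $F_{n,i}$. The difficulty is that MIM's branch selection is governed by inequalities (whether $\vx$ is higher or lower than $\manifold_{n-1}$, whether $\vx_n$ exceeds $M_n$, etc.), so I need a continuity/openness argument: each sub-case of MIM corresponds to an open region in the parameter space of $(\vx_0,\vxf,\vM)$, and I must verify that the region is nonempty for each choice of $S_1,s,S_2$ or $(s,d)$. The dimension formulae in Theorem \ref{thm:DimensionSwitchLaw} and Theorem \ref{thm:TheFirstBracket}-\ref{thm:TheFirstBracket_dimension}, together with Proposition \ref{thm:dim_manifold}, supply the degree-of-freedom count, but checking that these degrees of freedom are not accidentally consumed by implicit coupling between successive $n$-th order sub-problems will require an explicit induction that tracks, at each branch, both the state reached at the end of the previous segment and the sign orientation, for which Theorems \ref{thm:AnalysisOfSign}, \ref{thm:TheFirstBracket}, and \ref{thm:TangentMarkerBracket_MIM} are the key tools.
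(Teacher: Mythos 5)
Your proposal follows essentially the same route as the paper: the paper's own proof is a one-line appeal to Theorem \ref{thm:TheFirstBracket}, Theorem \ref{thm:TangentMarkerBracket_MIM}, and the branch structure of the MIM construction in Section \ref{subsec:MIM} (constant-velocity phase, interception/virtual system behavior, activated top constraint/tangent marker), which is exactly the case decomposition you organize your induction around. The realizability/openness issue you flag is not treated in the paper either---its proof simply declares the result evident from those ingredients---so your more explicit inductive plan is a refinement of the same argument rather than a different one.
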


        \begin{proof}
            According to Theorems \ref{thm:TheFirstBracket}, \ref{thm:TangentMarkerBracket_MIM}, and discussion in Section \ref{subsec:MIM}, Theorem \ref{thm:AF_constructed} holds evidently.
        \end{proof}

        For example, $\mathcal{AF}_2$=\{00, 010\}. $\mathcal{AF}_3$=\{000, 0010, 0100, 01010, 00200, 002010, 010200, 0102010, 00(3,2)000, 00(3,2)0010, 00(3,2)0100, 00(3,2)01010, 00(3,2)00200, 00(3,2)002010, 00(3,2)010200, 00(3,2)0102010, 010(3,2)000, 010(3,2)0010, 010(3,2)0100, 010(3,2)01010, 010(3,2)00200, 010(3,2)002010, 010(3,2)010200, 010(3,2)0102010\}. $\mathcal{AF}_4$= \{0000, 010(4,2)0102010(3)0102010, $\dots$\}. $\mathcal{AF}_n$ for $n\geq 4$ can be obtained by Theorem \ref{thm:AF_constructed}. \modify{Moreover, previous works on time-optimal jerk-limited trajectories like \cite{he2020time,kroger2011opening,berscheid2021jerk} are all devoted to studying the first 8 switching law through model-based analysis, failing to completely enumerate $\mathcal{AF}_3$.} 

        All augmented switching laws for any trajectory planning problem for high-order chain-of-integrators systems have been fully enumerated so far. In other words, the suboptimal solution of problem \eqref{eq:optimalproblem} with arbitrary initial states, terminal states, and full state constraints can be solved even by enumeration.
        
        
        

            
    \subsection{Calculation of Motion Time and Feasibility Verification}\label{subsec:calculation}
        According to Definition \ref{def:ASL_representation} and \ref{def:AF}, $\forall S\in\mathcal{AF}_n$, $\dim S=n$. Therefore, the motion time can be determined by a given initial state vector $\vx_0$ and an augmented switching law $S$. Denote $\vf:\R^n\times\R\times\R\to\R^n$ whose $k$-th component is
        \begin{equation}\label{eq:polynomial}
            f_k\left(\vx,u,t\right)=\frac{1}{k!}ut^k+\sum_{i=1}^{k}\frac{1}{i!}x_{k-i}t^i.
        \end{equation}
        $\vy=\vf\left(\vx,u,t\right)$ means system state vector moves from $\vx$ to $\vy$ under a constant control $u$ \modify{over} a period of time $t$. Next, describe equations induced by a given $S=s_1s_2\dots s_N\in\mathcal{AF}_n$. Assume the system state vector reaches $\vx_{m-1}$ before $s_m$. The control and time in stages of $s_m$ are $u_m$ and $t_m$, respectively. $u_m$ and $\sgn\left(s_m\right)$ can be calculated by Theorem \ref{thm:AnalysisOfSign} and Theorem \ref{thm:TheFirstBracket}.

        If $s_{m}$ and $s_{m+1}$ are system behaviors, or \modify{if} $s_{m}$ and $s_{m+1}$ are the virtual system behaviors, or \modify{if} $m=N$, then
        \begin{equation}\label{eq:equations_systembehavior}
            \begin{dcases}
                \vx_m=\vf\left(\vx_{m-1},u_m,t_m\right),\\
                x_{m,\abs{s_m}}=\sgn\left(s_m\right)M_{\abs{s_m}},\text{ if }\abs{s_m}\not=0,\\
                x_{m,k}=0,\text{ if }\abs{s_m}\not=0,k<\abs{s_m}.
            \end{dcases}
        \end{equation}

        For $s_l\left(s_{l+1}\dots s_{r-1}\right)s_r$, $s_l$ and $s_r$ are system behaviors, while $\left(s_{l+1}\dots s_{r-1}\right)$ is a virtual system behavior. According to Theorem \ref{thm:TheFirstBracket}-\ref{thm:TheFirstBracket_s13_sign}, $s_r=0$. Then,
        \begin{equation}\label{eq:equations_firstbrackt}
            \begin{dcases}
                \vx_{l}=\vf\left(\vx_{l-1},u_{l},t_{l}\right),\\
                \vx_{r-1}=\vf\left(\vx_{l-1},u_{l},t_{r-1}\right),\\
                \vx_{r}=\vf\left(\vx_{r-1},u_{r},t_{r}\right),\\
                x_{l,\abs{s_l}}=\sgn\left(s_l\right)M_{\abs{s_l}},\text{ if }\abs{s_l}\not=0,\\
                x_{l,k}=0,\text{ if }\abs{s_l}\not=0,k<\abs{s_l}.
            \end{dcases}
        \end{equation}

        For $s_{m-1}\left(s_m,d_m\right)s_{m+1}$, $u_{m-1}=u_{m+1}$, and $u_m$ is not defined. Then,
        \begin{equation}\label{eq:equations_secondbrackt}
            \begin{dcases}
                \vx_{m}=\vf\left(\vx_{m-2},u_{m-1},t_{m-1}\right),\\
                x_{m,\abs{s_m}}=\sgn\left(s_m\right)M_{\abs{s_m}},\\
                x_{m,\abs{s_m}-k}=0,\,k<d_m.
            \end{dcases}
        \end{equation}

        Finally, $\vx_N$ is substituted \modify{with} $\vxf$.

        \begin{algorithm}[!t]
            \caption{Trajectory planning for high-order chain-of-integrators systems by MIM.}
            \label{alg:CalMIM}
            \begin{algorithmic}
                \REQUIRE $n\in\N\modify{{}^*}$, $\vx_0$, $\vxf\in\R^n$, $\vM\in\R_{++}\times\overline{\R}_{++}^{n}$.
                \ENSURE Optimal control $u=u^*\left(t\right)$ of problem \eqref{eq:optimalproblem}.
                \IF{$n=1$}
                    \STATE Solve the problem by \eqref{eq:1order_optimal_control} and \textbf{return}.
                \ENDIF
                \IF{$\vx_0$ is higher than $\mathcal{F}_{n-1}\left(\vx_\f;\vM^{0:\left(n-1\right)}\right)$}
                    \STATE Obtain $u=\hat{u}^*\left(t\right)$ by $\problem\left(-\vx_0,-\vxf,\vM\right)$.
                    \STATE \textbf{return} $u^*\left(t\right)=-\hat{u}^*\left(t\right)$.
                \ENDIF
                \IF{$M_1=M_2=\dots=M_{n-1}=\infty$}
                    \STATE Obtain $\vx^*\left(t\right)$, $u^*\left(t\right)$, $t_\f$ by solving $n$ tandem \eqref{eq:polynomial}.
                \ELSE
                    \STATE $m\leftarrow\arg\max\{k<n:M_k<\infty\}$.
                    \STATE Obtain $\vx^{\left(1\right)}\left(t\right),t_{\f1}$ by $\problem\left(\vx_0^{1:\left(n-1\right)},M_{m}\ve_{m},\vM^{0:m}\right)$.
                    \IF{$x_n^{\left(1\right)}\left(t_{\f1}\right)$ is lower than $\mathcal{F}_{n-1}\left(\vx_\f;\vM^{0:\left(n-1\right)}\right)$}
                        \STATE Obtain $\vx^{\left(2\right)}\left(t\right)$ by $\problem\left(M_{m}\ve_{m},\vx_\f^{1:\left(n-1\right)},\vM^{0:m}\right)$.
                        \STATE Obtain $\vx^*\left(t\right)$ by connecting $\vx^{\left(1\right)}$, $x_m\equiv M_m$, $\vx^{\left(2\right)}$.
                    \ELSE
                        \STATE Solve $t_1$ when $\vx$ enters $\mathcal{F}_{n-1}\left(\vx_\f;\vM^{0:\left(n-1\right)}\right)$.
                        \STATE Obtain $\vx^{\left(3\right)}\left(t\right)$ by \begin{footnotesize}
                            $\problem\left(\vx^{1:\left(n-1\right)}\left(t_1\right),\vx_\f^{1:\left(n-1\right)},\vM^{0:m}\right)$
                        \end{footnotesize}.
                        \STATE Obtain $\vx^*\left(t\right)$ by connecting $\vx^{\left(1\right)}$, $\vx^{\left(3\right)}$.
                    \ENDIF
                \ENDIF
                \IF{$M_n<\infty$ \textbf{and} $\exists t\in\left(0,t_{\f}\right)$, $\abs{x_n^*\left(t\right)}>M_n$}
                    \FOR{$d\leftarrow 2,4,\dots,2\left\lfloor\frac{n-1}{2}\right\rfloor$}
                        \FOR{$S\in \mathcal{AF}_d$}
                            \STATE Obtain $\vx=\vx^{\left(0\right)}\left(t\right),t_{\f0}$ by $S$, where $\vx$ moves from $\vx_0$ to $\left(n,d\right)$;
                            \STATE Obtain $\vx=\vx^{\left(4\right)}\left(t\right)$ by $\problem\left(\vx\left(t_{\f0}\right),\vx_\f,\vM\right)$.
                            \STATE Update the best feasible trajectory $\vx^*\left(t\right)$ by connecting $\vx^{\left(0\right)},\vx^{\left(4\right)}$.
                        \ENDFOR
                    \ENDFOR
                \ENDIF
            \end{algorithmic}
        \end{algorithm}

        Assume $S\in\mathcal{AF}_n$ has $N_0$ system behaviors, $N_1$ virtual system behaviors, and $N_2$ tangent markers. Then, the number of equations equals the number of variables, i.e., $\left(N_0+N_1\right)\left(n+1\right)-n$. $N_2$ satisfies $\dim S=n$.

        Feasibility verification is trivial. On one hand, $\forall m=1,2,\dots,N$, $t_m\geq0$. On the other hand, $\forall m=1,2,\dots,N$, $\forall t\in\left[0,t_m\right]$, $\forall k=1,2,\dots,n$, $\abs{x_{m,k}\left(t\right)}\leq M_k$, where $x_{m,k}\left(t\right)$ is the system state after entering $s_m$ for a period of time $t$. The latter feasibility condition can be verified by $x_k\left(t\right)=f_k\left(\vx_{p},u_m,t\right)$, where $\vx_p$ is the previous system state vector of $s_m$ according to \modify{the} discussion above. Furthermore, checking $\abs{x_{m,k}\left(t\right)}\leq M_k$ at stationary points is enough \cite{doeser2020invariant}.

        Based on MIM in Section \ref{subsec:MIM} and enumeration of $\mathcal{AF}_n$ in Section \ref{subsec:FullEnumerationASL}, a trajectory planning algorithm for high-order chain-of-integrators systems with arbitrary initial states, terminal states, and full state constraints \modify{is} developed in Algorithm \ref{alg:CalMIM}.

\section{Numerical Results and Discussion}\label{sec:results}

    \subsection{Simulation Setup}
        \textbf{Baselines.} To verify the performance of the proposed MIM method, i.e., Algorithm \ref{alg:CalMIM}, simulation experiments for trajectories are conducted. The baselines are as follows.
        \begin{itemize}
            \item \textbf{Ruckig} \cite{berscheid2021jerk} in community version: a jerk-limited time-optimal trajectory solver without position constraints.
            \item \textbf{SOCPs} \cite{leomanni2022time}: a time-optimal control method based on solving sequential convex second-order cone problems. The SOCPs method is achieved by Gurobi \cite{gurobi}.
            \item \textbf{Yop} \cite{leek2016optimal}: a MATLAB toolbox for numerical optimal control problems based on CasADi \cite{andersson2019casadi} by direct methods.
        \end{itemize}
        The control period of Ruckig is set as 1 ms. The number of time points is set as 500 and 150 in SOCPs and Yop, respectively. In practice, if the number of time points is set more than 600 and 200 in SOCPs and Yop, respectively, then the computational time will increase \modify{significantly}.

        \textbf{Metrics.} For trajectory planning methods, the computational efficiency, the computational error, and the trajectory quality are significant performance metrics.
        \begin{itemize}
            \item The computational time $T_\mathrm{c}$. All experiments are conducted \modify{in MATLAB 2021b} on a computer with an AMD Ryzen 7 5800H @ 3.20 GHz processor. 
            \item The error of the terminal states $E_\mathrm{s}$. For a solved final state vector $\hat{\vx}_\f$, $E_\mathrm{s}$ is defined in normalization as
            \begin{equation}
                E_\mathrm{s}\triangleq\sqrt{\sum_{k=1}^{n}\left(\frac
                {x_{\f k}-\hat{x}_{\f k}}{M_k}\right)^2}.
            \end{equation}
            \item The success rate to obtain a feasible solution $R_\mathrm{s}$. A result is determined to be successful if the states along the planned trajectories are feasible and $E_\mathrm{s}\leq 0.1$.
            \item The normalized mean-squared error (MSE) $E_\mathrm{m}$ between the solved $u=u\left(t\right)$ and the Bang-Singular-Bang control law \cite{he2020time} is defined to describe the trajectory quality. According to Proposition \ref{prop:bang_zero_bang_law}, the optimal control $u^*\left(t\right)\in\left\{0,M_0,-M_0\right\}$ almost everywhere. Define
            \begin{equation}
                E_\mathrm{m}\triangleq\sqrt{\dfrac{4}{M_0^2t_\f}\int_{0}^{t_\f}u\left(t\right)^2\wedge\left(\abs{u\left(t\right)}-M_0\right)^2\,\mathrm{d}t}.
            \end{equation}
            Then, the smaller $E_\mathrm{m}$ is, the closer $u=u\left(t\right)$ is to a Bang-Singular-Bang control law, i.e., the better trajectory quality is. Specifically, $E_\mathrm{m}\in\left[0,1\right]$. Furthermore, $E_\mathrm{m}=0$ if and only if $\forall t\in\left[0,t_\f\right]$, $u\left(t\right)\in\left\{0,M_0,-M_0\right\}$ satisfies the Bang-Singular-Bang control law, while $E_\mathrm{m}=1$ if and only if $\forall t\in\left[0,t_\f\right]$, $\abs{u\left(t\right)}=\frac{1}{2}M_0$.
            \item The normalized total variation \cite{stein2009real} $T_\mathrm{v}$ of the solved control is defined to describe the stability of the trajectory. For $u\left(\frac{k}{n}t_\f\right)=u_k$ with $\left(n+1\right)$ waypoints, define
            \begin{equation}
                T_\mathrm{v}\triangleq\frac{1}{2nM_0}\sum_{k=1}^{n}\abs{u_k-u_{k-1}}.
            \end{equation}
            Then, as $T_\mathrm{v}$ decreases, the trajectory exhibits increased oscillations, i.e., the stability of the trajectory decreases. Specifically, $T_\mathrm{v}\in\left[0,1\right]$. Furthermore, $T_\mathrm{v}=0$ if and only if $u_k\equiv\mathrm{const}$, while $T_\mathrm{v}=1$ if and only if $u_k=\left(-1\right)^{k}M_0$ or $u_k=\left(-1\right)^{k-1}M_0$.
        \end{itemize}

\begin{figure*}[!t]
    \centering
    \includegraphics[width=\textwidth]{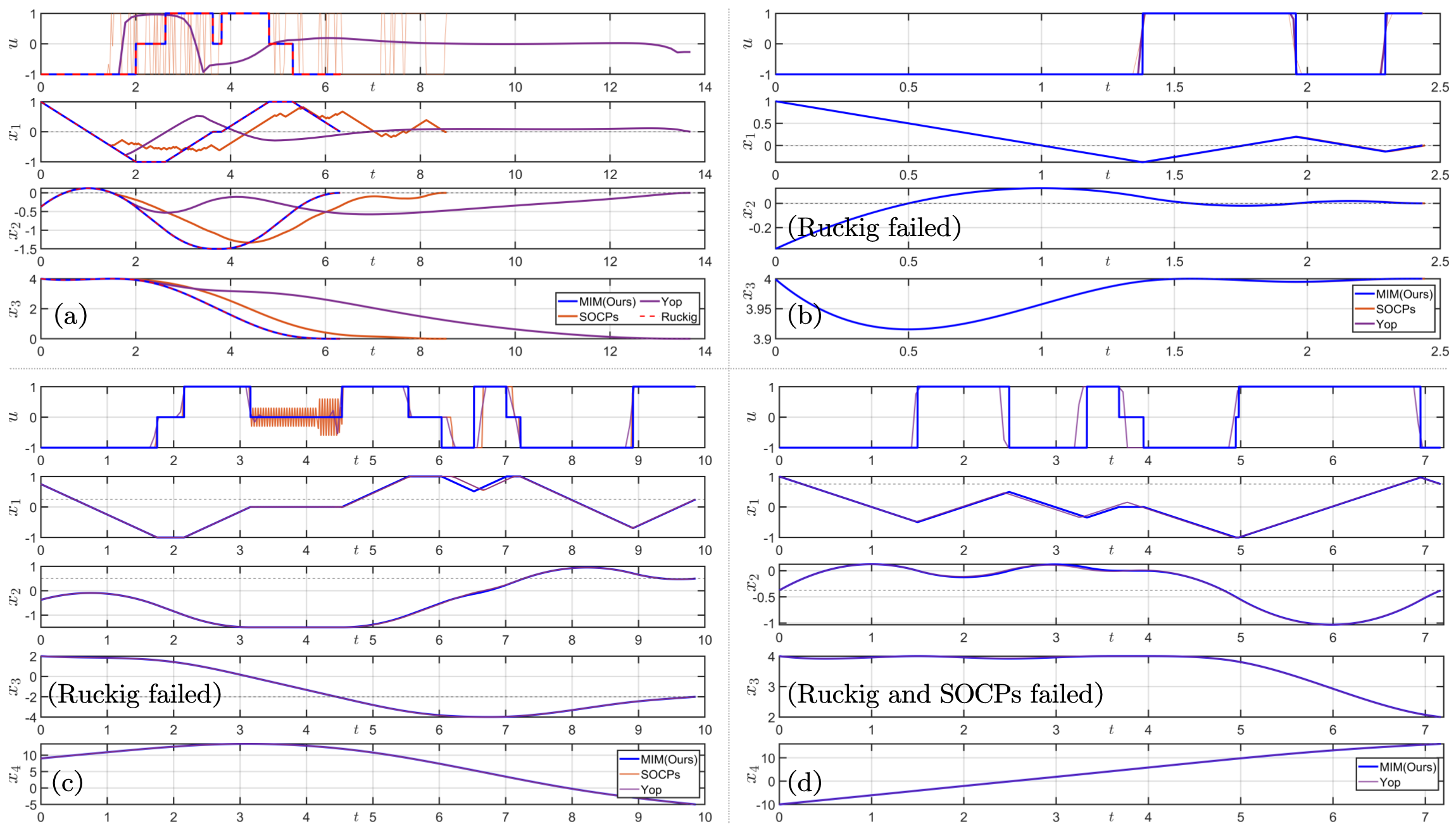}
    \caption{Some trajectories planned by the proposed and baselines. (a) A jerk-limited trajectory represented by $\underline{0}\underline{1}\overline{0}\underline{2}\overline{0}\overline{1}\underline{0}$. $\vx_0=\left(1,-0.375,4\right)$, $\vxf=\left(0,0,0\right)$, $\vM=\left(1,1,1.5,4\right)$. (b) A jerk-limited trajectory represented by $\underline{0}\overline{0}\left(\overline{3},2\right)\overline{0}\underline{0}\overline{0}$. $\vx_0=\left(1,-0.375,3.999\right)$, $\vxf=\left(0,0,4\right)$, $\vM=\left(1,1,1.5,4\right)$. (c) A snap-limited trajectory represented by $\underline{0}\underline{1}\overline{0}\underline{2}\overline{0}\overline{1}\underline{0}(\underline{3})\overline{0}\overline{1}\underline{0}\overline{0}$. $\vx_0=\left(0.75,-0.375,2,9\right)$, $\vxf=\left(0.25,0.5,-2,-5\right)$, $\vM=\left(1,1,1.5,4,20\right)$. (d) A snap-limited trajectory represented by $\underline{0}\overline{0}(\overline{3},2)\overline{0}\underline{0}\overline{0}\overline{3}\underline{0}\underline{1}\overline{0}\underline{0}$. $\vx_0=\left(1,-0.375,4,-10\right)$, $\vxf=\left(0.75,-0.375,2,16\right)$, $\vM=\left(1,1,1.5,4,20\right)$. The trajectories of MIM and Ruckig almost coincide in (a), while $x_3,x_4$ of all methods look to coincide.
    }
    \label{fig:compare_trajectory}
\end{figure*}

\begin{figure*}[!t]
    \centering
    \includegraphics[width=\textwidth]{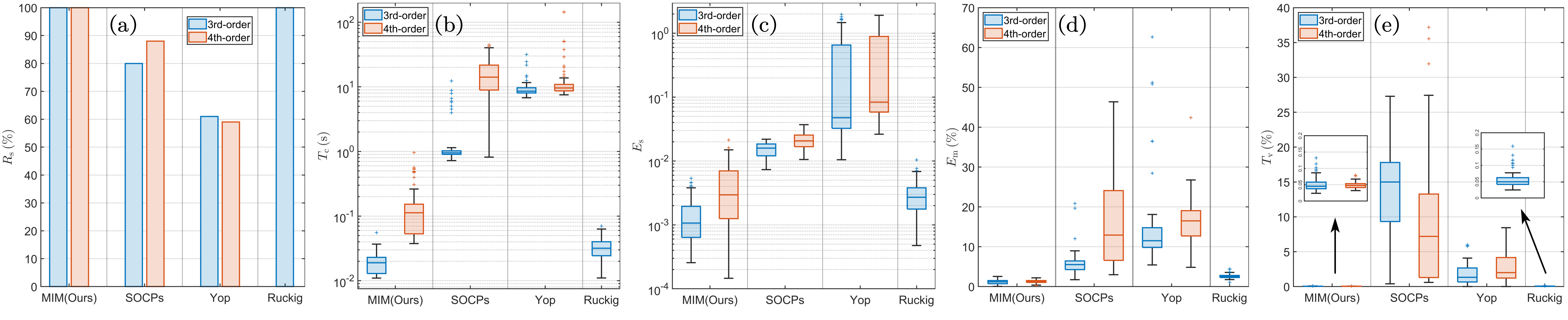}
    \caption{Quantitative results of 100 3rd order trajectories and 100 4th order trajectories with random initial states and terminal states. (a) Success rate $R_\mathrm{s}$. (b) Computational time $T_\mathrm{c}$. (c) Error of terminal states $E_\mathrm{s}$. (d) Normalized MSE $E_\mathrm{m}$ between input and the Bang-Singular-Bang control law. (e) Normalized total variation $T_\mathrm{v}$ of the planned control $u$. Ruckig is only applied in 3rd order trajectories because of disability to plan snap-limited trajectories. $y$-axes in (b) and (c) are in logarithmic scales.}
    \label{fig:Boxplot_34order}
\end{figure*}

    \subsection{Numerical Results}
        Several trajectories planned by the proposed MIM and baselines are shown in Fig. \ref{fig:compare_trajectory}, and the quantitative results of 100 jerk-limited trajectories and 100 snap-limited trajectories are shown in Fig. \ref{fig:Boxplot_34order}. Among them, 3rd order trajectories in Fig. \ref{fig:Boxplot_34order} contain no tangent markers, since Ruckig in community version is not able to plan trajectories with position constraints. Furthermore, tangent markers occur in 3rd order trajectories only if the initial position and the terminal position are both close to boundaries, which occurs with a low probability.
        
        It is noteworthy that the time-optimal problem \eqref{eq:optimalproblem} with an order $n\geq3$ is non-convex if it is solved directly in \modify{discrete time} \cite{debrouwere2013time}. Hence, discrete methods like Yop might fail to obtain an optimal trajectory, as shown in Fig. \ref{fig:compare_trajectory}(a). Though SOCPs successfully transforms problem \eqref{eq:optimalproblem} into sequential convex problems, SOCPs might fail to obtain a feasible solution during iteration in some cases, as shown in Fig. \ref{fig:compare_trajectory}(d). Furthermore, trajectories planned by Yop and SOCPs do not meet constraints \eqref{eq:optimalproblem_x_constraint}, causing failure as well. As shown in Fig. \ref{fig:Boxplot_34order}(a), Yop and SOCPs has a limited success rate, while Ruckig can plan all 3th order trajectories. Supported by \modify{the established} theory, the proposed MIM succeeds in planning all randomly selected trajectories in 3rd and 4th order.

        As shown in Fig. \ref{fig:Boxplot_34order}(c) and (d), the proposed MIM outperforms all baselines on computational time and computational accuracy. It can be observed that the proposed MIM achieves a computational time reduction of approximately 2 orders of magnitude compared to SOCPs and Yop, while improving computational accuracy by 90.8\% (79.0\%), 99.6\% (99.0\%), and 52.1\% compared to SOCPs, Yop, and Ruckig, respectively, for 3rd order (4th order) problem\modify{s} \modify{on} average. Specifically, MIM is not based on discrete time; hence, its computational efficiency keeps constant no matter what the control period is. On the contrary, as a real-time method, Ruckig requires longer computational time with a higher control frequency, but Ruckig is able to achieve a short computational time than the proposed MIM if its control frequency is set lower than 100 Hz, which is not conducive to precise motion control. Furthermore, the upper bound of the computational accuracy of MIM is determined only by the control frequency.

        $E_\mathrm{m}$ and $T_\mathrm{v}$ shown in Fig. \ref{fig:Boxplot_34order}(d) and (e) describe the trajectory quality. $E_\mathrm{m}$ and $T_\mathrm{v}$ of MIM and Ruckig are evidently lower than those of SOCPs and Yop. Therefore, the proposed MIM and Ruckig achieve much higher trajectory quality. Compared with SOCPs and Yop, trajectories generated by MIM are strict Bang-Singular-Bang. In addition to necessary switching, the input $u\left(t\right)$ of the proposed MIM does not exhibit oscillations like those of SOCPs, as shown in Fig. \ref{fig:compare_trajectory}(a) and (c). When the trajectory is short enough like Fig. \ref{fig:compare_trajectory}(b), SOCPs has a short step width and can avoid oscillations. Quantitatively, MIM achieves a normalized MSE of control $E_\mathrm{m}$ that is 80.78\% (92.3\%), 91.9\% (92.2\%), and 57.2\% lower than that of SOCPs, Yop, and Ruckig, respectively, for 3rd order (4th order) problems \modify{on} average. Taking oscillations into consideration, MIM achieves a normalized total variation of control $T_\mathrm{v}$ that is 99.6\% (99.5\%), 97.1\% (99.5\%), and 10.1\% lower than that of SOCPs, Yop, and Ruckig, respectively, for 3rd order (4th order) problems \modify{on} average.



    \subsection{Discussion}
        By comparing Algorithm \ref{alg:CalMIM} of 3rd order and Ruckig in community version \cite{berscheid2021jerk} with 1-DOF, it can be examined that the proposed MIM plans the same trajectory with Ruckig for a given 3rd order problem. \modify{I}n conjunction with discussion in Section \ref{sec:CostateSystemBehaviorAnalysis}, the proposed \modify{MIM} can plan strictly time-optimal trajectories in 3rd order and lower-order with full state constraints. \modify{Furthermore, it can be proved by induction that MIM plans optimal trajectories for $n$-th order problems when only $\abs{u}\leq M_0$ and $\abs{x_1}\leq M_1$ are active.}

        For 4th and higher-order problems, MIM plans quasi-optimal trajectories since the virtual system behavior in Definition \ref{def:TheFirstBracket} does not exist in time-optimal trajectories. \modify{Some more complex behaviors like chattering can occur in 4th and higher-order problems. Although the proposed MIM does not consider the chattering phenomenon in problem \eqref{eq:optimalproblem}, MIM achieves terminal time close to optimal solution, and avoids infinite times of chattering. For example, a set of typical kinematic parameters for the ultra-precision wafer stage is given in \cite{li2015data}, where $M_0=64\,000\,\mathrm{m/s^4}$, $M_1=790\,\mathrm{m/s^3}$, $M_2=10\,\mathrm{m/s^2}$, $M_3=0.25\,\mathrm{m/s}$, and $M_4=0.02\,\mathrm{m}$. Consider a position-to-position trajectory, i.e., $\vx_0=-M_4\ve_4$ and $\vx_\f=M_4\ve_4$. Then, MIM plans a suboptimal trajectory with terminal time $t_{\f,\mathrm{MIM}}\approx0.2100\,\mathrm{s}$. According to \cite{wang2024part2}, the optimal trajectory with chattering achieves terminal time $t_{\f}^*$, where $t_{\f,\mathrm{MIM}}-t_{\f}^*\approx29.2\,\mathrm{ns}$. So MIM achieves a relative error of $0.14\%$ in terminal time compared to the optimal solution and is able to avoid chattering; note that the control period is usually $200\,\mathrm{\mu s}\gg 29.2\,\mathrm{ns}$. Furthermore, if substituting the optimal trajectory in the chattering period by MIM-trajectory, the induced relative error in terminal time is strictly less than $0.12\%$ for any 4th order problems, which is proved in \cite{wang2024part2}. Therefore, the MIM-trajectory is near-optimal and practical for high-order problems, since chattering does not occur in MIM theoretically according to the process of Algorithm \ref{alg:CalMIM}.}

        It is meaningful to strictly solve the time-optimal problem \eqref{eq:optimalproblem} in further study, which would be a landmark achievement in the optimal control theory. To the best of our knowledge, the theoretical framework established in Section \ref{sec:CostateSystemBehaviorAnalysis} provides unprecedented insights into problem \eqref{eq:optimalproblem}, surpassing current literature. Therefore, it is believed that the theoretical framework established in Section \ref{sec:CostateSystemBehaviorAnalysis} would be a noteworthy mathematical tool for the final resolution of the time-optimal problem \eqref{eq:optimalproblem}.



\section{Conclusion}
    This paper has set out to theoretically study a classical and challenging problem in the optimal control theory domain, i.e., the time-optimal control problem for high-order chain-of-integrators systems with full state constraints and arbitrary terminal states. To this end, this paper establishes a novel notation system and theoretical framework, providing the switching manifold for high-order problems in the form of switching laws. The framework derives properties of switching laws \modify{regarding} signs as well as dimension and reasons a definite condition of augmented switching laws. Guided by the developed framework, a trajectory planning method named the manifold-intercept method (MIM) has been proposed, outperforming all baselines by a large gap \modify{i}n computational time, computational accuracy, and trajectory quality. The proposed MIM can achieve time-optimal trajectories for 3rd order or lower-order problems with full state constraints. MIM can also plan near-time-optimal trajectories efficiently and accurately with negligible extra motion time compared to time-optimal trajectories that are \modify{in} lack of mature algorithms currently\modify{, avoiding \modify{the} chattering phenomenon that impedes numerical computation in practice}.

\section*{Acknowledgment}
    \modify{The authors would like to thank the anonymous reviewers for their valuable comments and suggestions, to thank Yujie Lin and Hui Ma for their expertise in differential geometry, and to thank Zongying Shi for her experitse in optimal control. This work was supported by the National Key Research and Development Program of China under Grant 2023YFB4302003.}

\ifCLASSOPTIONcaptionsoff
  \newpage
\fi



\bibliographystyle{IEEEtran}
\bibliography{IEEEabrv,refs/ref}
%



%

\vskip -2\baselineskip plus -1fil

\begin{IEEEbiography}[{\includegraphics[width=1in,keepaspectratio]{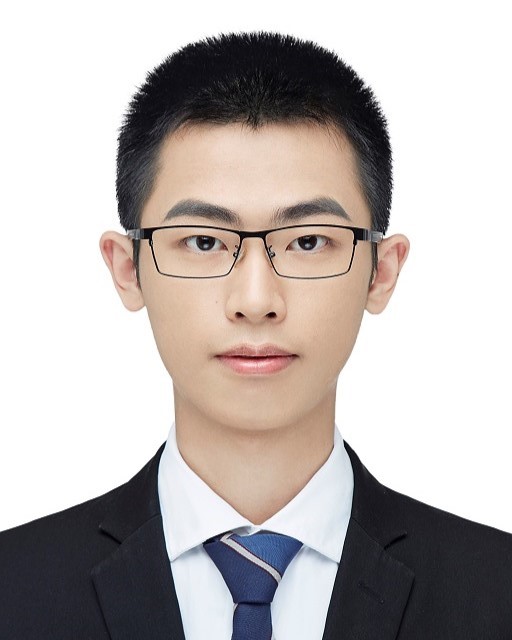}}]{Yunan Wang}
    (S'22) received the B.S. degree in mechanical engineering, in 2022, from the Department of Mechanical Engineering, Tsinghua University, Beijing, China. He is currently working toward the Ph.D. degree in mechanical engineering. 

    His research interests include optimal control, trajectory planning, toolpath planning, and precision motion control. He was the recipient of the Best Conference Paper Finalist at the 2022 International Conference on Advanced Robotics and Mechatronics, and 2021 Top Grade Scholarship for Undergraduate Students of Tsinghua University.
\end{IEEEbiography}

\vskip -2\baselineskip plus -1fil

\begin{IEEEbiography}[{\includegraphics[width=1in,keepaspectratio]{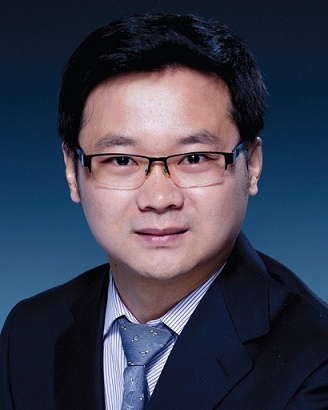}}]{Chuxiong Hu}
    (S'09-M'11-SM'17) received his B.S. and Ph.D. degrees in Mechatronic Control Engineering from Zhejiang University, Hangzhou, China, in 2005 and 2010, respectively.
    
    He is currently an Associate Professor (tenured) at Department of Mechanical Engineering, Tsinghua University, Beijing, China. From 2007 to 2008, he was a Visiting Scholar in mechanical engineering with Purdue University, West Lafayette, USA. In 2018, he was a Visiting Scholar in mechanical engineering with University of California, Berkeley, CA, USA. His research interests include precision motion control, high-performance multiaxis contouring control, precision mechatronic systems, intelligent learning, adaptive robust control, neural networks, iterative learning control, and robot.
    
    Prof. Hu was the recipient of the Best Student Paper Finalist at the 2011 American Control Conference, the 2012 Best Mechatronics Paper Award from the ASME Dynamic Systems and Control Division, the 2013 National 100 Excellent Doctoral Dissertations Nomination Award of China, the 2016 Best Paper in Automation Award, the 2018 Best Paper in AI Award from the IEEE International Conference on Information and Automation, and 2022 Best Paper in Theory from the IEEE/ASME International Conference on Mechatronic, Embedded Systems and Applications. He is now an Associate Editor for the IEEE Transactions on Industrial Informatics and a Technical Editor for the IEEE/ASME Transactions on Mechatronics.
\end{IEEEbiography}

\vskip -2\baselineskip plus -1fil

\begin{IEEEbiography}[{\includegraphics[width=1in,keepaspectratio]{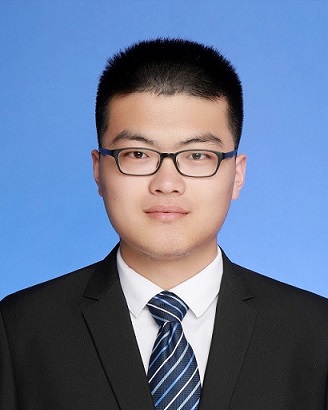}}]{Zeyang Li}
    received the B.S. degree in mechanical engineering in 2021, from the School of Mechanical Engineering, Shanghai Jiao Tong University, Shanghai, China. He is currently working toward the M.S. degree in mechanical engineering with the Department of Mechanical Engineering, Tsinghua University, Beijing, China.

    His current research interests include the areas of optimal control and reinforcement learning.
\end{IEEEbiography}

\vskip -2\baselineskip plus -1fil

\begin{IEEEbiography}[{\includegraphics[width=1in,keepaspectratio]{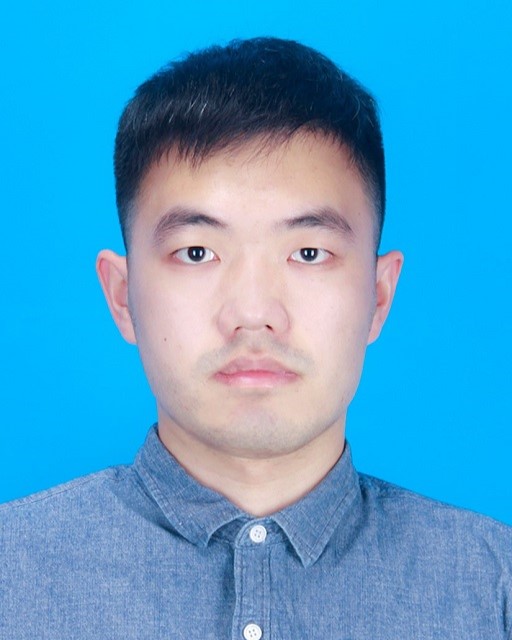}}]{Shize Lin}
    received the B.S. degree in mechanical engineering, in 2020, from the Department of Mechanical Engineering, Tsinghua University, Beijing, China, where he is currently working toward the Ph.D. degree in mechanical engineering. His research interests include robotics, motion planning and precision motion control.
\end{IEEEbiography}

\vskip -2\baselineskip plus -1fil

\begin{IEEEbiography}[{\includegraphics[width=1in,keepaspectratio]{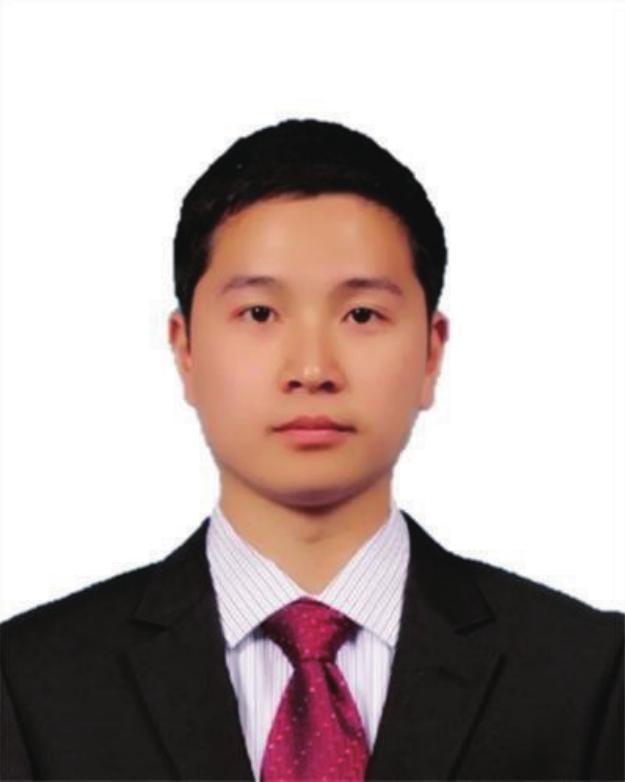}}]{Suqin He}
    received the B.S. degree in mechanical engineering from Department of Mechanical Engineering, Tsinghua University, Beijing, China, in 2016, and the Ph.D. degree in mechanical engineering from the Department of Mechanical Engineering, Tsinghua University, Beijing, China, in 2023.

    His research interests include multi-axis trajectory planning and precision motion control on robotics and CNC machine tools. He was the recipient of the Best Automation Paper from IEEE Internal Conference on Information and Automation in 2016.
\end{IEEEbiography}





\vskip -2\baselineskip plus -1fil

\begin{IEEEbiography}[{\includegraphics[width=1in,keepaspectratio]{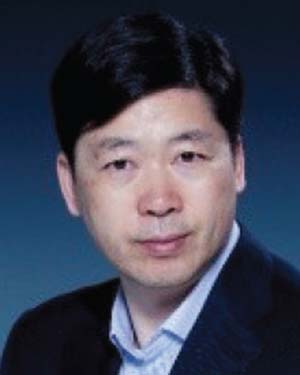}}]{Yu Zhu}
    (Member, IEEE) received the B.S. degree in radio electronics from Beijing Normal University, Beijing, China, in 1983, and the M.S. degree in computer applications and the Ph.D. degree in mechanical design and theory from the China University of Mining and Technology, Beijing, in 1993 and 2001, respectively. He is currently a Professor with the State Key Laboratory of Tribology, Department of Mechanical Engineering, Tsinghua University, China. He has authored or coauthored more than 180 technical papers. He is also the holder of more than 140 warranted invention patents. His research interests include precision measurement and motion control, ultraprecision mechanical design and manufacturing, two-photon microfabrication, and electronics manufacturing technology and equipment.
\end{IEEEbiography}









\newpage

\onecolumn

\appendices
\section{Proof of Independence of Equations in Theorem \ref{thm:DimensionSwitchLaw}}\label{app:prove_independence}

Given the terminal state vector $\vxf\in\R^n$, the constraints $\vM=\left(M_0,M_1,\dots,M_n\right)\in\R_{++}\times\overline{\R}_{++}^{n}$, and a switching law $S=s_1s_2\dots s_N$, we need to prove that if $\exists\vx_0\in\R^n$, s.t. $S\in\problem\left(\vx_0,\vx_\f;\vM\right)$, then the dimension of the manifold represented by $S$ is
\begin{equation}
    \dim S=N-\sum_{i=1}^{N}\abs{s_i}.
\end{equation}

Before proving, we need to prove that $\vJ$ has full row rank. The condition in Theorem \ref{thm:DimensionSwitchLaw} is equivalent to the following condition.

\begin{condition}\label{condition:3_2_switchinglaw}
    $\forall i=1,2,\dots,N$, the following conditions hold:
    \begin{enumerate}
        \item\label{condition:3_2_switchinglaw_1ji} If $\exists 1\leq j<i$, s.t., $\abs{s_j}\geq\abs{s_i}$, then $\sum_{k=j+1}^{i}\abs{s_k}<i-j$.
        \item\label{condition:3_2_switchinglaw_ijN} If $\exists i<j\leq N$, s.t., $\abs{s_j}\geq\abs{s_i}$, then $\sum_{k=i}^{j-1}\abs{s_k}<j-i$.
        \item\label{condition:3_2_switchinglaw_ijN_} If $\abs{s_i}>0$ and $\forall i<j\leq N$, $\abs{s_j}<\abs{s_i}$, then $\sum_{k=i}^{N}\abs{s_k}\leq N-i$.
    \end{enumerate}
\end{condition}

Condition \ref{condition:3_2_switchinglaw} is naturally introduced to avoid some special cases. To understand Condition \ref{condition:3_2_switchinglaw}-\ref{condition:3_2_switchinglaw_1ji}, we consider the trajectory $\vx_{1:\abs{s_i}}\left(t\right)$, $t\in\left[t_j,t_i\right]$. It is evident that $\vx_{1:\abs{s_i}}\left(t_i\right)=M_{\abs{s_i}}\sgn\left(s_i\right)\ve_{\abs{s_i}}$. If $\abs{s_j}=\abs{s_i}$, then $\vx_{1:\abs{s_i}}\left(t_j\right)=M_{\abs{s_i}}\sgn\left(s_j\right)\ve_{\abs{s_i}}$; if $\abs{s_j}>\abs{s_i}$, then $\vx_{1:\abs{s_i}}\left(t_j\right)=\vzero$. Then, the ``degree of freedom'' should not be smaller than $\abs{s_i}$, i.e., the number of variables should not be less than the number of equations. Hence, it is assumed that $\sum_{k=j+1}^{i}\abs{s_k}<i-j$. Similar analysis can be applied to Condition \ref{condition:3_2_switchinglaw}-\ref{condition:3_2_switchinglaw_ijN}. For Condition \ref{condition:3_2_switchinglaw}-\ref{condition:3_2_switchinglaw_ijN_}, it does not hold only when the terminal state vector $\vxf$ is in a zero-measure set. For example, if $\abs{s_N}\not=0$, then $\vxf$ satisfies $\vx_{\f,1:\abs{s_N}}=\sgn\left(s_N\right)M_{\abs{s_N}}$; however, once $\vxf$ is disturbed slightly resulting in $\hat\vx_\f$, then except of a zero-measure set in the neighborhood of $\vxf$, the switching law should with terminal states $\hat\vx_\f$ should end with $s_N$ and $\abs{s_N}$ zeros.

We assume $\vx_i=\left(x_{i,k}\right)_{k=1}^n\in\R^n,i=0,1,\dots,N$ and $t_i\geq 0,i=1,2,\dots,N$. $\vx_{i-1}$ moves to $\vx_i$ after time $t_i$ under the system behavior $s_i$, $\forall i=1,2,\dots,N$. Among them, the terminal state vector $\vx_N=\vxf$ is given, while $\vx_i$, $i=0,1,\dots,N$ and $t_i,i=1,2,\dots,N$ are variables. By Condition \ref{condition:3_2_switchinglaw}-\ref{condition:3_2_switchinglaw_ijN_}, $\abs{s_N}=0$.

Under the system behavior $s_i$, the control $u\equiv u_i$, where
\begin{equation}\label{eq:3.2.2_ui}
    x_{i,0}\triangleq u_i=\begin{dcases}
        \sgn\left(s_i\right)M_0,&\abs{s_i}=0\\
        0,&\abs{s_i}\not=0\\
    \end{dcases}.
\end{equation}

Denote
\begin{equation}
    \vX=\left[\begin{array}[]{c}
        \vx_0\\\vx_1\\\vdots\\\vx_{N-1}\\t_1\\t_2\\\vdots\\t_N
    \end{array}\right]\in\R^{N\left(n+1\right)}.
\end{equation}

Then, we have the following constraints:
\begin{IEEEeqnarray}{l}
    f_{\left(i-1\right)n+k}\left(\vX\right)=\sum_{j=1}^{k}\dfrac{1}{\left(k-j\right)!}x_{i-1,j}t_i^{k-j}+\dfrac{1}{k!}u_it_i^k-x_{i,k}=0,\,i=1,2,\dots,N,\,k=1,2,\dots,n,\IEEEyesnumber\IEEEyessubnumber*\label{eq:R3C2_constraint1}\\
    f_{Nn+\sum_{j=1}^{i-1}\abs{s_j}+k}\left(\vX\right)=x_{i,k}=0,\,i=1,2,\dots,N-1,\,k=1,2,\dots,\abs{s_i}-1,\label{eq:R3C2_constraint2}\\
    f_{Nn+\sum_{j=1}^{i}\abs{s_j}}\left(\vX\right)=x_{i,k}-\sgn\left(s_i\right)M_{\abs{s_i}}=0,\,i=1,2,\dots,N-1.\label{eq:R3C2_constraint3}
\end{IEEEeqnarray}
Among them, $x_{N,k}=0$ for $k<\abs{s_N}$ and $x_{N,\abs{s_N}}=\sgn\left(s_N\right)M_{\abs{s_N}}$ are guaranteed by the given $\vx_\f$ since $\exists\vx_0\in\R^n$, s.t. $S\in\problem\left(\vx_0,\vx_\f;\vM\right)$. Naturally, according to \eqref{eq:3.2.2_ui}, $x_{N-1,k}=0$ for $k<\abs{s_N}$ and $x_{N-1,\abs{s_N}}=\sgn\left(s_N\right)M_{\abs{s_N}}$.

Denote the above constraints as $\vf\left(\vX\right)=\vzero$. Since $\exists\vx_0\in\R^n$, s.t. $S\in\problem\left(\vx_0,\vx_\f;\vM\right)$, the system of equations $\vf\left(\vX\right)=\vzero$ has at least one solution, denoted as
\begin{equation}
    \vX^*=\left[\begin{array}[]{c}
        \vx_0^*\\\vx_1^*\\\vdots\\\vx_{N-1}^*\\t_1^*\\t_2^*\\\vdots\\t_N^*
    \end{array}\right]\in\R^{N\left(n+1\right)}.
\end{equation}

Then, we try to calculate $\where{\frac{\mathrm{d}\vf}{\mathrm{d}\vX}}{\vX^*}$, i.e., the Jacobi matrix of $\vf\left(\vX\right)$ at $\vX^*$. For \eqref{eq:R3C2_constraint1}, $\forall i=1,2,\dots,N$,
\begin{IEEEeqnarray}{l}
    \where{\frac{\partial\vf_{\left(\left(i-1\right)n+1\right):in}}{\partial\vx_{i-1}}}{\vX^*}=\vPhi_n\left(t_i^{*}\right),\IEEEyesnumber\IEEEyessubnumber*\label{eq:3_2_dfdx_11i1}\\
    \where{\frac{\partial\vf_{\left(\left(i-1\right)n+1\right):in}}{\partial\vx_{i}}}{\vX^*}=-\vI_n,\,i<N,\\
    \where{\frac{\partial\vf_{\left(\left(i-1\right)n+1\right):in}}{\partial\vx_{j}}}{\vX^*}=\vzero_{n\times n},\,j\not=i,i-1,\\
    \where{\frac{\partial\vf_{\left(\left(i-1\right)n+1\right):in}}{\partial t_i}}{\vX^*}=\left(\sum_{j=0}^{k-1}\dfrac{1}{\left(k-j\right)!}x_{i-1,j}^*{t_i^*}^{k-j}\right)_{k=1}^n\xlongequal{\eqref{eq:R3C2_constraint1}}\vx^*_{i,0:\left(n-1\right)},\\
    \where{\frac{\partial\vf_{\left(\left(i-1\right)n+1\right):in}}{\partial t_j}}{\vX^*}=\vzero_{n\times1},j\not=i.
\end{IEEEeqnarray}
Among them, for a vector $\vx=\left(x_i\right)_{i=1}^n$, $\vx_{a:b}\triangleq\vx=\left(x_i\right)_{i=a}^b$. For a matrix $\vA=\left(\left(a_{ij}\right)_{i=1}^m\right)_{j=1}^n$, $\vA_{a:b,c:d}\triangleq\left(\left(a_{ij}\right)_{i=a}^b\right)_{j=c}^d$. $\vzero_{m\times n}=\left(\left(0\right)_{i=1}^m\right)_{j=1}^n$. In \eqref{eq:3_2_dfdx_11i1},
\begin{equation}\label{eq:3_2_vPhi}
    \vPhi_a\left(t\right)\triangleq\left[\begin{array}{ccccc}
        1&&&&\\
        t&1&&&\\
        \frac{{t}^2}{2}&t&1&&\\
        \vdots&\vdots&\vdots&\ddots&\\
        \frac{{t}^{a-1}}{\left(a-1\right)!}&\frac{{t}^{a-2}}{\left(a-2\right)!}&\frac{{t}^{a-3}}{\left(a-3\right)!}&\cdots&1
    \end{array}\right],\, a\in\N^*,
\end{equation}
is invertible. $\forall t\in\R$, $a\in\N^*$, $\det\vPhi_a\left(t\right)=1$.

For \eqref{eq:R3C2_constraint2} and \eqref{eq:R3C2_constraint3}, $\forall i=1,2,\dots,N-1$,
\begin{IEEEeqnarray}{l}
    \where{\frac{\partial\vf_{\left(Nn+\sum_{j=1}^{i-1}\abs{s_j}+1\right):\left(Nn+\sum_{j=1}^{i}\abs{s_j}\right)}}{\partial\vx_{i}}}{\vX^*}=\vI_{\abs{s_i}\times n},\IEEEyesnumber\IEEEyessubnumber*\\
    \where{\frac{\partial\vf_{\left(Nn+\sum_{j=1}^{i-1}\abs{s_j}+1\right):\left(Nn+\sum_{j=1}^{i}\abs{s_j}\right)}}{\partial\vx_{j}}}{\vX^*}=\vzero_{\abs{s_i}\times n},\,j\not=i,\\
    \where{\frac{\partial\vf_{\left(Nn+\sum_{j=1}^{i-1}\abs{s_j}+1\right):\left(Nn+\sum_{j=1}^{i}\abs{s_j}\right)}}{\partial t_j}}{\vX^*}=\vzero_{\abs{s_i}\times 1},\,j=1,2,\dots,N.
\end{IEEEeqnarray}
Among them, $\vI_{m\times n}=\left(\left(\delta_{ij}\right)_{i=1}^m\right)_{j=1}^n$. Specifically, if $m\leq n$, $\vI_{m\times n}=\left[\vI_m,\vzero_{m\times\left(n-m\right)}\right]$. If $m>n$, $\vI_{m\times n}=\vI_{n\times m}^\top$.

Therefore, the Jacobi matrix of $\vf\left(\vX\right)$ at $\vX^*$, denoted as $\vJ=\where{\frac{\mathrm{d}\vf}{\mathrm{d}\vX}}{\vX^*}$, is
\begin{equation}\label{eq:3_2_Jacobian}
    \resizebox{0.9\textwidth}{!}{
    $\vJ=
    \left[\begin{array}{cccccccccc}
        \vPhi_n\left(t_1^*\right)&-\vI_n&&&&\vx_{1,0:\left(n-1\right)}^*&&&&\\
        &\vPhi_n\left(t_2^*\right)&-\vI_n&&&&\vx_{2,0:\left(n-1\right)}^*&&&\\
        &&\ddots&\ddots&&&&\ddots&&\\
        &&&\vPhi_n\left(t_{N-1}^*\right)&-\vI_n&&&&\vx_{N-1,0:\left(n-1\right)}^*&\\
        &&&&\vPhi_n\left(t_{N}^*\right)&&&&&\vx_{N,0:\left(n-1\right)}^*\\
        &\vI_{\abs{s_1}\times n}&&&&&&&&\\
        &&\vI_{\abs{s_2}\times n}&&&&&&&\\
        &&&\ddots&&&&&&\\
        &&&&\vI_{\abs{s_{N-1}}\times n}&&&&&
    \end{array}\right].$
    }
\end{equation}

\begin{theorem}\label{thm:3_2_fullrowrank}
    If Condition \ref{condition:3_2_switchinglaw} holds, then $\vJ$ in \eqref{eq:3_2_Jacobian} has full row rank.
\end{theorem}

Due to the complexity of the proof of Theorem 3.2.1, we relegate the demonstration of this theorem to the conclusion of the response to this comment.

\begin{lemma}\label{lemma:3_2_phi}
    $n\in\N^*$. Denote $\phi_k\left(t\right)\triangleq\frac{t^{k-1}}{\left(k-1\right)!}$. Denote $\vphi_n\left(t\right)\triangleq\left(\phi_k\left(t\right)\right)_{k=1}^n\in\R^n$. $\vPhi_n$ is defined in \eqref{eq:3_2_vPhi}. Assume $\vx=\vx\left(t\right)=\left(x_k\left(t\right)\right)_{k=1}^n$ is the solution of the initial value problem
    \[
        \begin{dcases}
            \dot{x}_{1}\left(t\right)=0,\,t\in\R,\\
            \dot{x}_{k+1}\left(t\right)={x}_{k}\left(t\right),\,k=1,2,\dots,n-1,\,t\in\R,\\
            \vx\left(t_0\right)=\vx_0.
        \end{dcases}
    \]
    Among them, $t_0\in\R$ and $\vx_0=\left(x_{0,k}\right)_{k=1}^n\in\R^n$ are given. Then,
    \begin{enumerate}
        \item\label{lemma:3_2_phi_0} $\vphi_n\left(0\right)=\ve_1\in\R^n$, $\vPhi_n\left(0\right)=\vI_n$.
        \item\label{lemma:3_2_phi_t} $\forall t_1,t_2\in\R$, $\vPhi_n\left(t_2\right)\vphi_n\left(t_1\right)=\vphi_n\left(t_1+t_2\right)$, $\vPhi_n\left(t_1\right)\vPhi_n\left(t_2\right)=\vPhi_n\left(t_2\right)\vPhi_n\left(t_1\right)=\vPhi_n\left(t_1+t_2\right)$.
        \item \label{lemma:3_2_phi_-1} $\forall t\in\R$, $\vPhi_n\left(t\right)$ is invertible, and $\vPhi_n^{-1}\left(t\right)=\vPhi_n\left(-t\right)$.
        \item\label{lemma:3_2_phi_x} $\forall t\in\R$, $\vx\left(t\right)=\vPhi_n\left(t-t_0\right)\vx_0$.
    \end{enumerate}
\end{lemma}

\begin{proof}
    Lemma \ref{lemma:3_2_phi}.\ref{lemma:3_2_phi_0} holds evidently.

    Now we prove Lemma \ref{lemma:3_2_phi}.\ref{lemma:3_2_phi_t}. $\forall t_1,t_2\in\R$, $\vPhi_n\left(t_k\right)=\left(\left(\frac{t_k^{i-j}}{\left(i-j\right)!}\delta_{i\geq j}\right)_{i=1}^n\right)_{j=1}^n$, $k=1,2$. Therefore, 
    \[
        \begin{aligned}
            \vPhi_n\left(t_1\right)\vPhi_n\left(t_2\right)=&\left(\left(\frac{t_1^{i-k}}{\left(i-k\right)!}\delta_{i\geq k}\right)_{i=1}^n\right)_{k=1}^n\left(\left(\frac{t_2^{k-j}}{\left(k-j\right)!}\delta_{k\geq j}\right)_{k=1}^n\right)_{j=1}^n\\
            =&\left(\left(\sum_{k=1}^{n}\frac{t_1^{i-k}}{\left(i-k\right)!}\delta_{i\geq k}\frac{t_2^{k-j}}{\left(k-j\right)!}\delta_{k\geq j}\right)_{i=1}^n\right)_{j=1}^n\\
            =&\left(\left(\sum_{k=j}^{i}\frac{t_1^{i-k}t_2^{k-j}}{\left(i-k\right)!\left(k-j\right)!}\delta_{i\geq j}\right)_{i=1}^n\right)_{j=1}^n\\
            =&\left(\left(\frac{\delta_{i\geq j}}{\left(i-j\right)!}\sum_{k=0}^{i-j}\frac{\left(i-j\right)!}{\left(i-j-k\right)!k!}t_1^{i-j-k}t_2^{k}\right)_{i=1}^n\right)_{j=1}^n\\
            =&\left(\left(\frac{\left(t_1+t_2\right)^{i-j}}{\left(i-j\right)!}\delta_{i\geq j}\right)_{i=1}^n\right)_{j=1}^n\\
            =&\vPhi_n\left(t_1+t_2\right).
        \end{aligned}
    \]
    So $\vPhi_n\left(t_1\right)\vPhi_n\left(t_2\right)=\vPhi_n\left(t_2\right)\vPhi_n\left(t_1\right)=\vPhi_n\left(t_1+t_2\right)$ for the same reason. Note that $\vphi_n\left(t_2\right)$ and $\vphi_n\left(t_1+t_2\right)$ are the first row of $\vPhi_n\left(t_2\right)$ and $\vPhi_n\left(t_1+t_2\right)$, respectively. Therefore, $\vPhi_n\left(t_2\right)\vphi_n\left(t_1\right)=\vphi_n\left(t_1+t_2\right)$. Lemma \ref{lemma:3_2_phi}.\ref{lemma:3_2_phi_t} holds.

    By Lemma \ref{lemma:3_2_phi}.\ref{lemma:3_2_phi_t}, $\forall t\in\R$, $\vPhi_n\left(t\right)\vPhi_n^{-1}\left(t\right)=\vPhi_n^{-1}\left(t\right)\vPhi_n\left(t\right)=\vI_n$. Therefore, $\vPhi_n^{-1}\left(t\right)=\vPhi_n\left(-t\right)$. Lemma \ref{lemma:3_2_phi}.\ref{lemma:3_2_phi_-1} holds.

    Next we prove Lemma \ref{lemma:3_2_phi}.\ref{lemma:3_2_phi_x}. By solving the initial value problem, $\forall t\in\R$, $k=1,2,\dots,n$, $x_k\left(t\right)=\sum_{i=0}^{k}\frac{\left(t-t_0\right)^{i}}{i!}x_{k-i}\left(t_0\right)$. Therefore, $\vx\left(t\right)=\vPhi_n\left(t-t_0\right)\vx_0$. Lemma \ref{lemma:3_2_phi}.\ref{lemma:3_2_phi_x} holds.
\end{proof}

\begin{lemma}\label{lemma:3_2_hatphi}
    $n\in\N^*$. Denote $\hat{\phi}_k\left(t\right)\triangleq t^{k-1}$. Denote $\hat{\vphi}_n\left(t\right)\triangleq\left(\hat{\phi}_k\left(t\right)\right)_{k=1}^n$. Denote $\widehat{\vPhi}_n\left(t\right)=\left(\left(\frac{i!}{j!\left(i-j\right)!}t^{i-j}\right)_{i=1}^n\right)_{j=1}^n$, i.e.,
    \[
        \hat{\vphi}_n\left(t\right)\triangleq\left[\begin{array}{c}
            1\\t\\t^2\\\vdots\\t^{n-1}
        \end{array}\right],\,
        \widehat{\vPhi}_n\left(t\right)\triangleq\left[\begin{array}{ccccc}
            1&&&&\\
            2t&1&&&\\
            3t^2&3t&1&&\\
            \vdots&\vdots&\vdots&\ddots&\\
            nt^{n-1}&\frac{n(n-1)}{2}t^{n-2}&\frac{n(n-1)(n-2)}{6}t^{n-3}&\cdots&1
        \end{array}\right].
    \]
    Then,
    \begin{enumerate}
        \item\label{lemma:3_2_hatphi_0} $\hat{\vphi}_n\left(0\right)=\ve_1\in\R^n$, $\widehat{\vPhi}_n\left(0\right)=\vI_n$.
        \item\label{lemma:3_2_hatphi_t} $\forall t_1,t_2\in\R$, $\widehat{\vPhi}_n\left(t_2\right)\hat{\vphi}_n\left(t_1\right)=\hat{\vphi}_n\left(t_1+t_2\right)$, $\widehat{\vPhi}_n\left(t_1\right)\widehat{\vPhi}_n\left(t_2\right)=\widehat{\vPhi}_n\left(t_2\right)\widehat{\vPhi}_n\left(t_1\right)=\widehat{\vPhi}_n\left(t_1+t_2\right)$.
        \item \label{lemma:3_2_hatphi_-1} $\forall t\in\R$, $\widehat{\vPhi}_n\left(t\right)$ is invertible, and $\widehat{\vPhi}_n^{-1}\left(t\right)=\widehat{\vPhi}_n\left(-t\right)$.
    \end{enumerate}
\end{lemma}

\begin{proof}
    Lemma \ref{lemma:3_2_hatphi}.\ref{lemma:3_2_hatphi_0} holds evidently. Note that $\hat{\vphi}_n\left(t\right)=\vC\vphi_n\left(t\right),\,\widehat{\vPhi}_n\left(t\right)=\vC\widehat{\vPhi}_n\left(t\right)\vC^{-1}$, where
    \[
        \vC=\left[\begin{array}{cccc}
            0!&&&\\ &1!&&\\ &&\ddots&\\ &&&\left(n-1\right)!
        \end{array}\right].
    \]
    
    According to Lemma \ref{lemma:3_2_phi}.\ref{lemma:3_2_phi_t}, $\forall t_1,t_2\in\R$, 
    \[
        \widehat{\vPhi}_n\left(t_2\right)\hat{\vphi}_n\left(t_1\right)=\vC\vPhi_n\left(t_2\right)\vC^{-1}\vC\vphi_n\left(t_1\right)=\vC\vphi_n\left(t_1+t_2\right)=\hat{\vphi}_n\left(t_1+t_2\right).
    \]
    Similarly,
    \[
        \widehat{\vPhi}_n\left(t_1\right)\widehat{\vPhi}_n\left(t_2\right)=\vC\vPhi_n\left(t_1\right)\vC^{-1}\vC\vPhi_n\left(t_2\right)\vC^{-1}=\vC\vPhi_n\left(t_1\right)\vPhi_n\left(t_2\right)\vC^{-1}=\vC\vPhi_n\left(t_1+t_2\right)\vC^{-1}=\widehat{\vPhi}_n\left(t_1+t_2\right).
    \]
    So Lemma \ref{lemma:3_2_hatphi}.\ref{lemma:3_2_hatphi_t} holds. Furthermore, $\forall t\in\R$, $\widehat{\vPhi}_n\left(t\right)\widehat{\vPhi}_n\left(-t\right)=\widehat{\vPhi}_n\left(-t\right)\widehat{\vPhi}_n\left(t\right)=\widehat{\vPhi}_n\left(0\right)=\vI_n$; hence, Lemma \ref{lemma:3_2_hatphi}.\ref{lemma:3_2_hatphi_-1} holds.
\end{proof}

\begin{lemma}\label{lemma:3_2_difference}
    $\forall\left\{T_k\right\}_{k=1}^\infty\subset\R$ is monotonically strictly increasing, i.e., $T_1<T_2<T_3<\dots$, define
    \begin{equation}\label{eq:3_2_definedifference}
        D_{n,m,p}=D_{n,m}\left(\left(T_k\right)_{k=p}^{m+p}\right)\triangleq\begin{dcases}
            1,&m=0,n=0,\\
            T_p^n,&m=0,n\geq1,\\
            \frac{D_{n,m-1}\left(\left(T_k\right)_{k=p+1}^{m+p}\right)-D_{n,m-1}\left(\left(T_k\right)_{k=p}^{m+p-1}\right)}{T_{m+p}-T_p},&m\geq1,n\geq0.
        \end{dcases}
    \end{equation}
    Then,
    \begin{equation}\label{eq:3_2_Dnmp_express}
        D_{n,m,p}=\begin{dcases}
            \sum_{p\leq i_1\leq i_2\leq\dots\leq i_{n-m}\leq m+p}\prod_{k=1}^{n-m}T_{i_k},&0\leq m<n,\\
            1,&m=n,\\
            0,&m>n.
        \end{dcases}
    \end{equation}
    Among them, $\forall a\in\R$, $a^0=0$ is conventionally defined. Specifically, for cases where $0<m\leq n$,
    \begin{equation}\label{eq:3_2_Dnmp_0mn}
        D_{n,m,p}=\sum_{k=0}^{n-m}T_p^kD_{n-k-1,m-1,p+1}=\sum_{k=0}^{n-m}T_{m+p}^kD_{n-k-1,m-1,p}.
    \end{equation}
\end{lemma}

\begin{proof}
    $D_{n,m,p}$ is well-defined since $\forall p\in\N^*$, $m\geq 1$, $T_p<T_{m+p}$. For the case where $n=0$ or $m=0$, \eqref{eq:3_2_Dnmp_express} holds evidently. For the case where $0<m=n$, \eqref{eq:3_2_Dnmp_0mn} holds evidently if \eqref{eq:3_2_Dnmp_express} holds.


    Now we prove the case where $0<m\leq n$ by induction. Since $D_{n,0,p}=T_p^n$, we have
    \begin{equation}
        \begin{aligned}
            D_{n,1,p}=&\frac{D_{n,0,p+1}-D_{n,0,p}}{T_{p+1}-T_p}=\frac{T_{p+1}^n-T_p^n}{T_{p+1}-T_p}=\sum_{k=0}^{n-1}T_p^kT_{p+1}^{n-1-k}=\sum_{p\leq i_1\leq i_2\leq\dots\leq i_{n-1}\leq p+1}\prod_{k=1}^{n-1}T_{i_k}\\
            =&\sum_{k=0}^{n-1}T_p^kD_{n-k-1,0,p+1}=\sum_{k=0}^{n-1}T_{p+1}^kD_{n-k-1,0,p}.
        \end{aligned}
    \end{equation}
    Therefore, \eqref{eq:3_2_Dnmp_express} and \eqref{eq:3_2_Dnmp_0mn} hold for $1=m\leq n$.
    
    Assume that \eqref{eq:3_2_Dnmp_express} and \eqref{eq:3_2_Dnmp_0mn} hold for $m\leq M$, $n\geq m$, $M\geq1$. Consider the case where $n=m=M+1$. $D_{M+1,M,p}=\sum_{p\leq i\leq M+p}T_i=\sum_{i=p}^{M+p}T_i$. Hence,
    \begin{equation}
        D_{M+1,M+1,p}=\frac{D_{M+1,M,p+1}-D_{M+1,M,p}}{T_{p+M+1}-T_p}=\frac{\sum_{i=p+1}^{M+p+1}T_i-\sum_{i=p}^{M+p}T_i}{T_{p+M+1}-T_p}=1.
    \end{equation}
    Therefore, \eqref{eq:3_2_Dnmp_express} and \eqref{eq:3_2_Dnmp_0mn} hold for $n=m=M+1$.

    Consider the case where $n>m=M+1$. Note that $n>m-1=M$, so
    \begin{equation}
        \begin{aligned}
            &D_{n,M+1,p}=\frac{D_{n,M,p+1}-D_{n,M,p}}{T_{p+M+1}-T_p}\\
            =&\frac{\sum_{k=0}^{n-M}T_{p+M+1}^kD_{n-k-1,M-1,p+1}-\sum_{k=0}^{n-M}T_{p}^kD_{n-k-1,M-1,p+1}}{T_{p+M+1}-T_p}=\sum_{k=0}^{n-M}\frac{T_{p+M+1}^k-T_p^k}{T_{p+M+1}-T_p}D_{n-k-1,M-1,p+1}\\
            =&\sum_{k=0}^{n-M}\sum_{q=0}^{k-1}T_p^qT_{p+M+1}^{k-q-1}\sum_{p+1\leq i_{q+1}\leq i_{q+2}\leq\dots\leq i_{n-M-k+q}\leq p+M}\prod_{j=q+1}^{n-M-k+q}T_{i_j}\\
            =&\sum_{k=0}^{n-M}\sum_{q=0}^{k-1}\sum_{\substack{i_1=i_2=\dots=i_q=p\\p+1\leq i_{q+1}\leq i_{q+2}\leq\dots\leq i_{n-M-k+q}\leq p+M\\i_{n-M-k+q+1}=i_{n-M-k+q+1}=\dots=i_{n-M-1}=p+M+1}}\prod_{j=1}^{n-M-1}T_{i_j}\\
            =&\sum_{p\leq i_{1}\leq i_{2}\leq\dots\leq i_{n-M-1}\leq p+M+1}\prod_{j=1}^{n-M-1}T_{i_j}.
        \end{aligned}
    \end{equation}
    Therefore, \eqref{eq:3_2_Dnmp_express} holds for $n>m=M+1$, implying \eqref{eq:3_2_Dnmp_0mn}. By induction, Lemma \ref{lemma:3_2_difference} holds.
\end{proof}

\begin{proof}[Proof of Theorem \ref{thm:3_2_fullrowrank}]
    In this proof, we denote $\vA\leftrightarrow\vB$ if $\vA$ has full row rank if and only if $\vB$ has full row rank.

    Denote $\vx=\vx^{\left(i\right)}\left(t\right)$, $i=1,2,\dots,N$, as the solution of the initial value problem
    \[
        \begin{dcases}
            \dot{x}_{1}\left(t\right)=0,\,t\in\R,\\
            \dot{x}_{k+1}\left(t\right)={x}_{k}\left(t\right),\,k=1,2,\dots,n-1,\,t\in\R,\\
            \vx\left(\sum_{j=1}^{i}t_j^*\right)=\vx_{i,0:\left(n-1\right)}^*.
        \end{dcases}
    \]
    Then, according to Lemma \ref{lemma:3_2_phi},
    \[
        \resizebox{\textwidth}{!}{$
        \begin{aligned}
            \vJ\leftrightarrow&\left[\begin{array}{ccccc}
                \vPhi_n^{-1}\left(t_1^*\right)&&&&\\
                &\vPhi_n^{-1}\left(t_2^*\right)&&&\\
                &&\ddots&&\\
                &&&\vPhi_n^{-1}\left(t_N^*\right)&\\
                &&&&\vI_{\sum_{i=1}^{N-1}\abs{s_i}}\\
            \end{array}\right]\vJ\\
            \leftrightarrow&\left[\begin{array}{cccccccccc}
                \vI_n&-\vPhi_n\left(-t_1^*\right)&&&&\vx^{\left(1\right)}\left(0\right)&&&&\\
                &\vI_n&-\vPhi_n\left(-t_2^*\right)&&&&\vx^{\left(2\right)}\left(t_1^*\right)&&&\\
                &&\ddots&\ddots&&&&\ddots&&\\
                &&&\vI_n&-\vPhi_n\left(-t_{N-1}^*\right)&&&&\vx^{\left(N-1\right)}\left(\sum_{i=1}^{N-2}t_i^*\right)&\\
                &&&&\vI_n&&&&&\vx^{\left(N\right)}\left(\sum_{i=1}^{N-1}t_i^*\right)\\
                &\vI_{\abs{s_1}\times n}&&&&&&&&\\
                &&\vI_{\abs{s_2}\times n}&&&&&&&\\
                &&&\ddots&&&&&&\\
                &&&&\vI_{\abs{s_{N-1}}\times n}&&&&&
            \end{array}\right].
        \end{aligned}
        $}
    \]
    Upon left multiplication by
    \[
        \prod_{i=1}^{N-1}\left[\begin{array}{cccc}
            \vI_{\left(i-1\right)n}&&&\\
            &\vI_n&\vPhi_n\left(-t_{i}^*\right)&\\
            &&\vI_n&\\
            &&&\vI_{\left(N-i-1\right)n+\sum_{i=1}^{N-1}\abs{s_i}}
        \end{array}\right],
    \]
    we have
    \[
        \vJ\leftrightarrow\left[\begin{array}{cccccccc}
            \vI_n&&&&\vx^{\left(1\right)}\left(0\right)&\vx^{\left(2\right)}\left(0\right)&\cdots&\vx^{\left(N\right)}\left(0\right)\\
            &\vI_n&&&&\vx^{\left(2\right)}\left(t_1\right)&\cdots&\vx^{\left(N\right)}\left(t_1\right)\\
            &&\ddots&&&&\ddots&\vdots\\
            &&&\vI_n&&&&\vx^{\left(N\right)}\left(\sum_{i=1}^{N-1}t_i\right)\\
            &\vI_{\abs{s_1}\times n}&&&&&&\\
            &&\ddots&&&&&\\
            &&&\vI_{\abs{s_{N-1}}\times n}&&&&
        \end{array}\right].
    \]
    Upon left multiplication by
    \[
        \left[\begin{array}{ccccc}
            \vPhi_n\left(\sum_{i=1}^{N}t_{i}^*\right)&&&&\\
            &\vPhi_n\left(\sum_{i=2}^{N}t_{i}^*\right)&&&\\
            &&\ddots&&\\
            &&&\vPhi_n\left(t_{N}^*\right)&\\
            &&&&\vI_{\sum_{i=1}^{N-1}\abs{s_i}}
        \end{array}\right],
    \]
    since $t_\f^*=\sum_{i=1}^{N}t_{i}^*$, we have
    \[
        \vJ\leftrightarrow\left[\begin{array}{cccccccc}
            \vPhi_n\left(\sum_{i=1}^{N}t_{i}^*\right)&&&&\vx^{\left(1\right)}\left(t_\f^*\right)&\vx^{\left(2\right)}\left(t_\f^*\right)&\cdots&\vx^{\left(N\right)}\left(t_\f^*\right)\\
            &\vPhi_n\left(\sum_{i=2}^{N}t_{i}^*\right)&&&&\vx^{\left(2\right)}\left(t_\f^*\right)&\cdots&\vx^{\left(N\right)}\left(t_\f^*\right)\\
            &&\ddots&&&&\ddots&\vdots\\
            &&&\vPhi_n\left(t_{N}^*\right)&&&&\vx^{\left(N\right)}\left(t_\f^*\right)\\
            &\vI_{\abs{s_1}\times n}&&&&&&\\
            &&\ddots&&&&&\\
            &&&\vI_{\abs{s_{N-1}}\times n}&&&&
        \end{array}\right].
    \]

    According to Lemma \ref{lemma:3_2_phi}.\ref{lemma:3_2_phi_-1}, $\forall a\in\N^*$, $t\in\R$, $\vPhi_{a}\left(t\right)$ is invertible. Eliminate the first $n$ rows and the first $n$ columns of $\vJ$, we have
    \[
        \vJ\leftrightarrow\left[\begin{array}{ccccccccc}
            \vPhi_n\left(\sum_{i=2}^{N}t_{i}^*\right)&&&&&\vx^{\left(2\right)}\left(t_\f^*\right)&\vx^{\left(3\right)}\left(t_\f^*\right)&\cdots&\vx^{\left(N\right)}\left(t_\f^*\right)\\
            &\vPhi_n\left(\sum_{i=3}^{N}t_{i}^*\right)&&&&&\vx^{\left(3\right)}\left(t_\f^*\right)&\cdots&\vx^{\left(N\right)}\left(t_\f^*\right)\\
            &&\ddots&&&&&\ddots&\vdots\\
            &&&\vPhi_n\left(t_{N}^*\right)&&&&&\vx^{\left(N\right)}\left(t_\f^*\right)\\
            \vI_{\abs{s_1}\times n}&&&&&&&&\\
            &\vI_{\abs{s_2}\times n}&&&&&&&\\
            &&\ddots&&&&&&\\
            &&&\vI_{\abs{s_{N-1}}\times n}&&&&&
        \end{array}\right].
    \]

    For each $\vI_{\abs{s_i}\times n}=\left[\begin{array}{cc}\vI_{\abs{s_i}}&\vzero_{\abs{s_i}\times\left(n-\abs{s_i}\right)}\end{array}\right]$, $i=1,2,\dots,N-1$, eliminate the rows and columns occupied by $\vI_{\abs{s_i}}$. Then,
    \[
        \resizebox{\textwidth}{!}{$
        \vJ\leftrightarrow\left[\begin{array}{ccccccccc}
            \vzero_{\abs{s_1}\times\left(n-\abs{s_1}\right)}&&&&&\vx_{1:\abs{s_1}}^{\left(2\right)}\left(t_\f^*\right)&\vx_{1:\abs{s_1}}^{\left(3\right)}\left(t_\f^*\right)&\cdots&\vx_{1:\abs{s_1}}^{\left(N\right)}\left(t_\f^*\right)\\
            \vPhi_{n-\abs{s_1}}\left(\sum_{i=2}^{N}t_{i}^*\right)&&&&&\vx_{\left(\abs{s_1}+1\right):n}^{\left(2\right)}\left(t_\f^*\right)&\vx_{\left(\abs{s_1}+1\right):n}^{\left(3\right)}\left(t_\f^*\right)&\cdots&\vx_{\left(\abs{s_1}+1\right):n}^{\left(N\right)}\left(t_\f^*\right)\\
            &\vzero_{\abs{s_2}\times\left(n-\abs{s_2}\right)}&&&&&\vx_{1:\abs{s_2}}^{\left(3\right)}\left(t_\f^*\right)&\cdots&\vx_{1:\abs{s_2}}^{\left(N\right)}\left(t_\f^*\right)\\
            &\vPhi_{n-\abs{s_2}}\left(\sum_{i=3}^{N}t_{i}^*\right)&&&&&\vx_{\left(\abs{s_2}+1\right):n}^{\left(3\right)}\left(t_\f^*\right)&\cdots&\vx_{\left(\abs{s_2}+1\right):n}^{\left(N\right)}\left(t_\f^*\right)\\
            &&\ddots&&&&&\ddots&\vdots\\
            &&&\vzero_{\abs{s_{N-1}}\times\left(n-\abs{s_{N-1}}\right)}&&&&&\vx_{1:\abs{s_{N-1}}}^{\left(N\right)}\left(t_\f^*\right)\\
            &&&\vPhi_{n-\abs{s_{N-1}}}\left(t_N^*\right)&&&&&\vx_{\left(\abs{s_{N-1}}+1\right):n}^{\left(N\right)}\left(t_\f^*\right)\\
        \end{array}\right].
        $}
    \]
    Eliminate the rows and columns occupied by $\vPhi_{n-\abs{s_i}}\left(\sum_{j=i+1}^{N}t_{j}^*\right)$, $\forall i=1,2,\dots,N-1$. Then, $\vJ\leftrightarrow\vJ_1$, where
    \begin{equation}\label{eq:3_2_J1}
        \vJ_1=\left[\begin{array}{cccc}
            \vx_{1:\abs{s_1}}^{\left(2\right)}\left(t_\f^*\right)&\vx_{1:\abs{s_1}}^{\left(3\right)}\left(t_\f^*\right)&\cdots&\vx_{1:\abs{s_1}}^{\left(N\right)}\left(t_\f^*\right)\\
            &\vx_{1:\abs{s_2}}^{\left(3\right)}\left(t_\f^*\right)&\cdots&\vx_{1:\abs{s_2}}^{\left(N\right)}\left(t_\f^*\right)\\
            &&\ddots&\vdots\\
            &&&\vx_{1:\abs{s_{N-1}}}^{\left(N\right)}\left(t_\f^*\right)
        \end{array}\right].
    \end{equation}

    Next we prove $\vJ$ in \eqref{eq:3_2_J1} has full row rank. If $\forall i=1,2,\dots,N-1$, $\abs{s_i}=0$, then $\rank\vJ=Nn=Nn+\sum_{i=1}^{N-1}\abs{s_i}$. In this case, $\vJ$ has full row rank. We consider the case where $\exists i=1,2,\dots,N-1$, $\abs{s_i}\not=0$ in the following. Assume $1\leq i_1<i_2<\dots<i_{N'}\leq N-1$, s.t.
    \[
        \begin{dcases}
            \abs{s_i}\not=0,\text{ if }i\in\left\{i_j\right\}_{j=1}^{N'},\\
            \abs{s_i}=0,\text{ otherwise.}
        \end{dcases}
    \]
    Then, \eqref{eq:3_2_J1} can be rewritten as
    \begin{equation}\label{eq:3_2_J1_}
        \resizebox{0.9\textwidth}{!}{$
        \vJ_2=\left[\begin{array}{cccccccccccc}
            \vx_{1:\abs{s_{i_1}}}^{\left(i_1+1\right)}\left(t_\f^*\right)&\vx_{1:\abs{s_{i_1}}}^{\left(i_1+2\right)}\left(t_\f^*\right)&\cdots&\vx_{1:\abs{s_{i_1}}}^{\left(i_2\right)}\left(t_\f^*\right)&\vx_{1:\abs{s_{i_1}}}^{\left(i_2+1\right)}\left(t_\f^*\right)&\vx_{1:\abs{s_{i_1}}}^{\left(i_2+2\right)}\left(t_\f^*\right)&\cdots&\vx_{1:\abs{s_{i_1}}}^{\left(i_{N'}\right)}\left(t_\f^*\right)&\vx_{1:\abs{s_{i_1}}}^{\left(i_{N'}+1\right)}\left(t_\f^*\right)&\vx_{1:\abs{s_{i_1}}}^{\left(i_{N'}+2\right)}\left(t_\f^*\right)&\cdots&\vx_{1:\abs{s_{i_1}}}^{\left(N\right)}\left(t_\f^*\right)\\
            &&&&\vx_{1:\abs{s_{i_2}}}^{\left(i_2+1\right)}\left(t_\f^*\right)&\vx_{1:\abs{s_{i_2}}}^{\left(i_2+2\right)}\left(t_\f^*\right)&\cdots&\vx_{1:\abs{s_{i_2}}}^{\left(i_{N'}\right)}\left(t_\f^*\right)&\vx_{1:\abs{s_{i_2}}}^{\left(i_{N'}+1\right)}\left(t_\f^*\right)&\vx_{1:\abs{s_{i_2}}}^{\left(i_{N'}+2\right)}\left(t_\f^*\right)&\cdots&\vx_{1:\abs{s_{i_2}}}^{\left(N\right)}\left(t_\f^*\right)\\
            &&&&&&\ddots&\ddots&\vdots&\vdots&\ddots&\vdots\\
            &&&&&&&&\vx_{1:\abs{s_{i_{N'}}}}^{\left(i_{N'}+1\right)}\left(t_\f^*\right)&\vx_{1:\abs{s_{i_{N'}}}}^{\left(i_{N'}+2\right)}\left(t_\f^*\right)&\cdots&\vx_{1:\abs{s_{i_{N'}}}}^{\left(N\right)}\left(t_\f^*\right)\\
        \end{array}\right],
        $}
    \end{equation}
    where $\vJ\leftrightarrow\vJ_1\leftrightarrow\vJ_2$.

    According to \eqref{eq:R3C2_constraint1}, $\forall i=1,2,\dots,N$, $k=1,2,\dots,n$, $x^*_{i,k}=\sum_{j=1}^{k}\frac{1}{\left(k-j\right)!}x^*_{i-1,j}{t_i^*}^{k-j}+\frac{1}{k!}u_i{t_i^*}^k$. So $x^*_{i-1,k}=\sum_{j=1}^{k}\frac{1}{\left(k-j\right)!}x^*_{i,j}\left(-t_i^*\right)^{k-j}+\frac{1}{k!}u_i\left(-t_i^*\right)^k$.
    
    In other words,
    \[
        x^{\left(i-1\right)}_k\left(\sum_{j=1}^{i-1}t_j^*\right)=\sum_{j=1}^{k}\frac{\left(-t_i^*\right)^{k-j}}{\left(k-j\right)!}x^{\left(i\right)}_k\left(\sum_{j=1}^{i}t_j^*\right),\,\forall i=1,2,\dots,N,\,k=2,3,\dots,n.
    \]
    So $\forall i=1,2,\dots,N$,
    \[
        \begin{aligned}
            \vx^{\left(i\right)}\left(\sum_{j=1}^{i-1}t_j^*\right)=&\vPhi_n\left(-t_i^*\right)\vx^{\left(i\right)}\left(\sum_{j=1}^{i}t_j^*\right)=
            \left[\begin{array}{c}
                x^{\left(i\right)}_1\left(\sum_{j=1}^{i}t_j^*\right)\\
                \vx^{\left(i-1\right)}_{2:n}\left(\sum_{j=1}^{i-1}t_j^*\right)
            \end{array}\right]\\
            =&\vx^{\left(i-1\right)}\left(\sum_{j=1}^{i-1}t_j^*\right)+\left(x^{\left(i\right)}_1\left(\sum_{j=1}^{i}t_j^*\right)-x^{\left(i-1\right)}_1\left(\sum_{j=1}^{i-1}t_j^*\right)\right)\ve_1.
        \end{aligned}
    \]
    In other words, $\forall i=1,2,\dots,N$,
    \begin{equation}\label{eq:3_2_xi_xi-1_Deltaui}
        \vx^{\left(i\right)}\left(\sum_{j=1}^{i-1}t_j^*\right)-\vx^{\left(i-1\right)}\left(\sum_{j=1}^{i-1}t_j^*\right)=\Delta u_i\ve_1=\Delta u_i\vphi_n\left(0\right),
    \end{equation}
    where $\Delta u_i\triangleq x^{\left(i\right)}_1\left(\sum_{j=1}^{i}t_j^*\right)-x^{\left(i-1\right)}_1\left(\sum_{j=1}^{i-1}t_j^*\right)=u_i-u_{i-1}\not=0$, $\forall i=2,3,\dots,N$, since $u_i$ and $u_{i-1}$ are the control of the different stages $s_i$ and $s_{i-1}$, respectively. Specifically, if $\abs{s_i}\not=0$, then $u_{i}=0$, and $\vx_{1:\abs{s_i}}^{\left(i-1\right)}\left(\sum_{j=1}^{i}t_j^*\right)=\vzero_{\abs{s_i}\times1}$; hence
    \begin{equation}\label{eq:3_2_xi_xi-1_Deltaui_}
        \vx^{\left(i+1\right)}\left(\sum_{j=1}^{i}t_j^*\right)=u_i\vphi_n\left(0\right),\,\abs{s_i}\not=0.
    \end{equation}
    
    According to \eqref{eq:3_2_xi_xi-1_Deltaui}, \eqref{eq:3_2_xi_xi-1_Deltaui_}, and Lemma \ref{lemma:3_2_phi}, let
    \begin{equation}\label{eq:3_2_J3}
        \resizebox{0.9\textwidth}{!}{$
        \begin{aligned}
            &\vJ_3\\
            =&\left(\prod_{j=i_1+1}^{N}\left[\begin{array}{cccc}
                \vPhi_{\abs{s_{i_1}}}\left(-t_j\right)&&&\\
                &\vPhi_{\abs{s_{i_2}}}\left(-t_j\right)&&\\
                &&\ddots&\\
                &&&\vPhi_{\abs{s_{i_{N'}}}}\left(-t_j\right)
            \end{array}\right]\right)
            \vJ_2
            \left(\prod_{j=i_1+1}^{N-1}\left[\begin{array}{cccc}
                \vI_{N-1-j}&&&\\
                &1&-1&\\
                &&1&\\
                &&&\vI_{j-i_1-1}
            \end{array}\right]\right)
            \left[\begin{array}{cccc}
                \frac{1}{\Delta u_{i_1+1}}&&&\\
                &\frac{1}{\Delta u_{i_1+2}}&&\\
                &&\ddots&\\
                &&&\frac{1}{\Delta u_{N}}
            \end{array}\right]\\
            =&\left[\begin{array}{cccccccccccc}
                \vphi_{\abs{s_{i_1}}}\left(-T_{i_1+1}\right)&\vphi_{\abs{s_{i_1}}}\left(-T_{i_1+2}\right)&\cdots&\vphi_{\abs{s_{i_1}}}\left(-T_{i_2}\right)&\vphi_{\abs{s_{i_1}}}\left(-T_{i_2+1}\right)&\vphi_{\abs{s_{i_1}}}\left(-T_{i_2+2}\right)&\cdots&\vphi_{\abs{s_{i_1}}}\left(-T_{i_{N'}}\right)&\vphi_{\abs{s_{i_1}}}\left(-T_{i_{N'}+1}\right)&\vphi_{\abs{s_{i_1}}}\left(-T_{i_{N'}+2}\right)&\cdots&\vphi_{\abs{s_{i_1}}}\left(-T_{N}\right)\\
                &&&&\vphi_{\abs{s_{i_2}}}\left(-T_{i_2+1}\right)&\vphi_{\abs{s_{i_2}}}\left(-T_{i_2+2}\right)&\cdots&\vphi_{\abs{s_{i_2}}}\left(-T_{i_{N'}}\right)&\vphi_{\abs{s_{i_2}}}\left(-T_{i_{N'}+1}\right)&\vphi_{\abs{s_{i_2}}}\left(-T_{i_{N'}+2}\right)&\cdots&\vphi_{\abs{s_{i_2}}}\left(-T_{N}\right)\\
                &&&&&&\ddots&\ddots&\vdots&\vdots&\ddots&\vdots\\
                &&&&&&&&\vphi_{\abs{s_{i_{N'}}}}\left(-T_{i_{N'}+1}\right)&\vphi_{\abs{s_{i_{N'}}}}\left(-T_{i_{N'}+2}\right)&\cdots&\vphi_{\abs{s_{i_{N'}}}}\left(-T_{N}\right)\\
            \end{array}\right]
        \end{aligned}
        $}
    \end{equation}
    where $T_i=\sum_{j=i_1+1}^{i-1}t_j^*$. It is evidently that $\vJ\leftrightarrow\vJ_2\leftrightarrow\vJ_3$. Note that \eqref{eq:3_2_J3} successfully avoid the complex background of the time-optimal problem. Left-multiplying matrix $\vJ_3$ by
    \[
        \left[\begin{array}{cccc}
            \mathrm{diag}\left(\left(\left(-1\right)^{j-1}\left(j-1\right)!\right)_{j=1}^{\abs{s_{i_1}}}\right)&&&\\
            &\mathrm{diag}\left(\left(\left(-1\right)^{j-1}\left(j-1\right)!\right)_{j=1}^{\abs{s_{i_2}}}\right)&&\\
            &&\ddots&\\
            &&&\mathrm{diag}\left(\left(\left(-1\right)^{j-1}\left(j-1\right)!\right)_{j=1}^{\abs{s_{i_{N'}}}}\right)\\
        \end{array}\right],
    \]
    we have $\vJ\leftrightarrow\vJ_3\leftrightarrow\vJ_4$, where
    \begin{equation}\label{eq:3_2_J4}
        \resizebox{0.9\textwidth}{!}{$
        \vJ_4=\left[\begin{array}{cccccccccccc}
            \hat{\vphi}_{\abs{s_{i_1}}}\left(T_{i_1+1}\right)&\hat{\vphi}_{\abs{s_{i_1}}}\left(T_{i_1+2}\right)&\cdots&\hat{\vphi}_{\abs{s_{i_1}}}\left(T_{i_2}\right)&\hat{\vphi}_{\abs{s_{i_1}}}\left(T_{i_2+1}\right)&\hat{\vphi}_{\abs{s_{i_1}}}\left(T_{i_2+2}\right)&\cdots&\hat{\vphi}_{\abs{s_{i_1}}}\left(T_{i_{N'}}\right)&\hat{\vphi}_{\abs{s_{i_1}}}\left(T_{i_{N'}+1}\right)&\hat{\vphi}_{\abs{s_{i_1}}}\left(T_{i_{N'}+2}\right)&\cdots&\hat{\vphi}_{\abs{s_{i_1}}}\left(T_{N}\right)\\
            &&&&\hat{\vphi}_{\abs{s_{i_2}}}\left(T_{i_2+1}\right)&\hat{\vphi}_{\abs{s_{i_2}}}\left(T_{i_2+2}\right)&\cdots&\hat{\vphi}_{\abs{s_{i_2}}}\left(T_{i_{N'}}\right)&\hat{\vphi}_{\abs{s_{i_2}}}\left(T_{i_{N'}+1}\right)&\hat{\vphi}_{\abs{s_{i_2}}}\left(T_{i_{N'}+2}\right)&\cdots&\hat{\vphi}_{\abs{s_{i_2}}}\left(T_{N}\right)\\
            &&&&&&\ddots&\ddots&\vdots&\vdots&\ddots&\vdots\\
            &&&&&&&&\hat{\vphi}_{\abs{s_{i_{N'}}}}\left(T_{i_{N'}+1}\right)&\hat{\vphi}_{\abs{s_{i_{N'}}}}\left(T_{i_{N'}+2}\right)&\cdots&\hat{\vphi}_{\abs{s_{i_{N'}}}}\left(T_{N}\right)\\
        \end{array}\right].
        $}
    \end{equation}

    In \eqref{eq:3_2_J4}, $T_{i_1+1}<T_{i_1+2}<\dots<T_N$, since $\forall i=1,2,\dots,N$, $t_i>0$. $\vJ_4$ in \eqref{eq:3_2_J4} has a form similar to the Vandermonde matrix. Now, we only need to proof that $\vJ_4$ has full row rank based on Lemma \ref{lemma:3_2_difference}.
    
    Denote $\vD_{m,p}^{a:b}\triangleq\left(D_{l,m,p}\right)_{l=a}^b$, where $D_{l,m,p}=D_{l,m}\left(\left(T_k\right)_{k=p}^{m+p}\right)$ is defined in \eqref{eq:3_2_definedifference}. Then, $\hat{\vphi}_l\left(T_p\right)=\vD_{0,p}^{0:\left(l-1\right)}$. Therefore,
    \begin{equation}
        \resizebox{0.9\textwidth}{!}{$
        \vJ_4=\left[\begin{array}{cccccccccccc}
            \vD_{0,i_1+1}^{0:\left(\abs{s_{i_1}}-1\right)}&\vD_{0,i_1+2}^{0:\left(\abs{s_{i_1}}-1\right)}&\cdots&\vD_{0,i_2}^{0:\left(\abs{s_{i_1}}-1\right)}&\vD_{0,i_2+1}^{0:\left(\abs{s_{i_1}}-1\right)}&\vD_{0,i_2+2}^{0:\left(\abs{s_{i_1}}-1\right)}&\cdots&\vD_{0,i_{N'}}^{0:\left(\abs{s_{i_1}}-1\right)}&\vD_{0,i_{N'}+1}^{0:\left(\abs{s_{i_1}}-1\right)}&\vD_{0,i_{N'}+2}^{0:\left(\abs{s_{i_1}}-1\right)}&\cdots&\vD_{0,N}^{0:\left(\abs{s_{i_1}}-1\right)}\\
            &&&&\vD_{0,i_2+1}^{0:\left(\abs{s_{i_2}}-1\right)}&\vD_{0,i_2+2}^{0:\left(\abs{s_{i_2}}-1\right)}&\cdots&\vD_{0,i_{N'}}^{0:\left(\abs{s_{i_2}}-1\right)}&\vD_{0,i_{N'}+1}^{0:\left(\abs{s_{i_2}}-1\right)}&\vD_{0,i_{N'}+2}^{0:\left(\abs{s_{i_2}}-1\right)}&\cdots&\vD_{0,N}^{0:\left(\abs{s_{i_2}}-1\right)}\\
            &&&&&&\ddots&\ddots&\vdots&\vdots&\ddots&\vdots\\
            &&&&&&&&\vD_{0,i_{N'}+1}^{0:\left(\abs{s_{i_{N'}}}-1\right)}&\vD_{0,i_{N'}+2}^{0:\left(\abs{s_{i_{N'}}}-1\right)}&\cdots&\vD_{0,N}^{0:\left(\abs{s_{i_{N'}}}-1\right)}
        \end{array}\right].
        $}
    \end{equation}

    We prove $\vJ_4$ has full row rank through row operations. At each step, we first subtract the preceding column from the succeeding one and then divide by the corresponding time difference. Take the first step as example:
    \begin{equation*}
        \resizebox{\textwidth}{!}{$
        \begin{aligned}
            &\vJ\leftrightarrow\vJ_4\\
            =&\left[\begin{array}{cccccccccccc}
                1&1&\cdots&1&1&1&\cdots&1&1&1&\cdots&1\\
                \vD_{0,i_1+1}^{1:\left(\abs{s_{i_1}}-1\right)}&\vD_{0,i_1+2}^{1:\left(\abs{s_{i_1}}-1\right)}&\cdots&\vD_{0,i_2}^{1:\left(\abs{s_{i_1}}-1\right)}&\vD_{0,i_2+1}^{1:\left(\abs{s_{i_1}}-1\right)}&\vD_{0,i_2+2}^{1:\left(\abs{s_{i_1}}-1\right)}&\cdots&\vD_{0,i_{N'}}^{1:\left(\abs{s_{i_1}}-1\right)}&\vD_{0,i_{N'}+1}^{1:\left(\abs{s_{i_1}}-1\right)}&\vD_{0,i_{N'}+2}^{1:\left(\abs{s_{i_1}}-1\right)}&\cdots&\vD_{0,N}^{1:\left(\abs{s_{i_1}}-1\right)}\\
                &&&&1&1&\cdots&1&1&1&\cdots&1\\
                &&&&\vD_{0,i_2+1}^{1:\left(\abs{s_{i_2}}-1\right)}&\vD_{0,i_2+2}^{1:\left(\abs{s_{i_2}}-1\right)}&\cdots&\vD_{0,i_{N'}}^{1:\left(\abs{s_{i_2}}-1\right)}&\vD_{0,i_{N'}+1}^{1:\left(\abs{s_{i_2}}-1\right)}&\vD_{0,i_{N'}+2}^{1:\left(\abs{s_{i_2}}-1\right)}&\cdots&\vD_{0,N}^{1:\left(\abs{s_{i_2}}-1\right)}\\
                &&&&&&\ddots&\ddots&\vdots&\vdots&\ddots&\vdots\\
                &&&&&&&&1&1&\cdots&1\\
                &&&&&&&&\vD_{0,i_{N'}+1}^{1:\left(\abs{s_{i_{N'}}}-1\right)}&\vD_{0,i_{N'}+2}^{1:\left(\abs{s_{i_{N'}}}-1\right)}&\cdots&\vD_{0,N}^{1:\left(\abs{s_{i_{N'}}}-1\right)}
            \end{array}\right]\\
            \leftrightarrow&\left[\begin{array}{cccccccccccc}
                1&0&\cdots&0&0&0&\cdots&0&0&0&\cdots&0\\
                \vD_{0,i_1+1}^{1:\left(\abs{s_{i_1}}-1\right)}&\vD_{1,i_1+1}^{1:\left(\abs{s_{i_1}}-1\right)}&\cdots&\vD_{1,i_2-1}^{1:\left(\abs{s_{i_1}}-1\right)}&\vD_{1,i_2}^{1:\left(\abs{s_{i_1}}-1\right)}&\vD_{1,i_2+1}^{1:\left(\abs{s_{i_1}}-1\right)}&\cdots&\vD_{1,i_{N'}-1}^{1:\left(\abs{s_{i_1}}-1\right)}&\vD_{1,i_{N'}}^{1:\left(\abs{s_{i_1}}-1\right)}&\vD_{1,i_{N'}+1}^{1:\left(\abs{s_{i_1}}-1\right)}&\cdots&\vD_{1,N-1}^{1:\left(\abs{s_{i_1}}-1\right)}\\
                &&&&1&0&\cdots&0&0&0&\cdots&0\\
                &&&&\vD_{0,i_2+1}^{1:\left(\abs{s_{i_2}}-1\right)}&\vD_{1,i_2+1}^{1:\left(\abs{s_{i_2}}-1\right)}&\cdots&\vD_{1,i_{N'}-1}^{1:\left(\abs{s_{i_2}}-1\right)}&\vD_{1,i_{N'}}^{1:\left(\abs{s_{i_2}}-1\right)}&\vD_{1,i_{N'}+1}^{1:\left(\abs{s_{i_2}}-1\right)}&\cdots&\vD_{1,N-1}^{1:\left(\abs{s_{i_2}}-1\right)}\\
                &&&&&&\ddots&\ddots&\vdots&\vdots&\ddots&\vdots\\
                &&&&&&&&1&0&\cdots&0\\
                &&&&&&&&\vD_{0,i_{N'}+1}^{1:\left(\abs{s_{i_{N'}}}-1\right)}&\vD_{1,i_{N'}+1}^{1:\left(\abs{s_{i_{N'}}}-1\right)}&\cdots&\vD_{1,N-1}^{1:\left(\abs{s_{i_{N'}}}-1\right)}
            \end{array}\right]\\
            \leftrightarrow&\left[\begin{array}{ccccccccc}
                \vD_{1,i_1+1}^{1:\left(\abs{s_{i_1}}-1\right)}&\cdots&\vD_{1,i_2-1}^{1:\left(\abs{s_{i_1}}-1\right)}&\vD_{1,i_2+1}^{1:\left(\abs{s_{i_1}}-1\right)}&\cdots&\vD_{1,i_{N'}-1}^{1:\left(\abs{s_{i_1}}-1\right)}&\vD_{1,i_{N'}+1}^{1:\left(\abs{s_{i_1}}-1\right)}&\cdots&\vD_{1,N-1}^{1:\left(\abs{s_{i_1}}-1\right)}\\
                &&&\vD_{1,i_2+1}^{1:\left(\abs{s_{i_2}}-1\right)}&\cdots&\vD_{1,i_{N'}-1}^{1:\left(\abs{s_{i_2}}-1\right)}&\vD_{1,i_{N'}+1}^{1:\left(\abs{s_{i_2}}-1\right)}&\cdots&\vD_{1,N-1}^{1:\left(\abs{s_{i_2}}-1\right)}\\
                &&&&\ddots&\ddots&\vdots&\ddots&\vdots\\
                &&&&&&\vD_{1,i_{N'}+1}^{1:\left(\abs{s_{i_{N'}}}-1\right)}&\cdots&\vD_{1,N-1}^{1:\left(\abs{s_{i_{N'}}}-1\right)}
            \end{array}\right].
        \end{aligned}
        $}
    \end{equation*}
    At the second step, elements $\vD_{1,i_j}^{1:\left(\abs{s_{i_l}}-1\right)}$ does not exist. Fortunately, the difference between $\vD_{1,i_j}^{1:\left(\abs{s_{i_l}}\pm1\right)}$ can be eliminated, resulting in a matrix with similar structures to the above one. At the second step, it is noteworthy that $D_{1,1,p}=1$. Then,
    \begin{equation*}
        \resizebox{\textwidth}{!}{$
        \begin{aligned}
            &\vJ\\
            \leftrightarrow&\left[\begin{array}{cccccccccccc}
                1&1&\cdots&1&1&1&\cdots&1&1&1&\cdots&1\\
                \vD_{1,i_1+1}^{2:\left(\abs{s_{i_1}}-1\right)}&\vD_{1,i_1+2}^{2:\left(\abs{s_{i_1}}-1\right)}&\cdots&\vD_{1,i_2-1}^{2:\left(\abs{s_{i_1}}-1\right)}&\vD_{1,i_2+1}^{2:\left(\abs{s_{i_1}}-1\right)}&\vD_{1,i_2+2}^{2:\left(\abs{s_{i_1}}-1\right)}&\cdots&\vD_{1,i_{N'}-1}^{2:\left(\abs{s_{i_1}}-1\right)}&\vD_{1,i_{N'}+1}^{2:\left(\abs{s_{i_1}}-1\right)}&\vD_{1,i_{N'}+2}^{2:\left(\abs{s_{i_1}}-1\right)}&\cdots&\vD_{1,N-1}^{2:\left(\abs{s_{i_1}}-1\right)}\\
                &&&&1&1&\cdots&1&1&1&\cdots&1\\
                &&&&\vD_{1,i_2+1}^{2:\left(\abs{s_{i_2}}-1\right)}&\vD_{1,i_2+2}^{2:\left(\abs{s_{i_2}}-1\right)}&\cdots&\vD_{1,i_{N'}-1}^{2:\left(\abs{s_{i_2}}-1\right)}&\vD_{1,i_{N'}+1}^{2:\left(\abs{s_{i_2}}-1\right)}&\vD_{1,i_{N'}+2}^{2:\left(\abs{s_{i_2}}-1\right)}&\cdots&\vD_{1,N-1}^{2:\left(\abs{s_{i_2}}-1\right)}\\
                &&&&&&\ddots&\ddots&\vdots&\vdots&\ddots&\vdots\\
                &&&&&&&&1&1&\cdots&1\\
                &&&&&&&&\vD_{1,i_{N'}+1}^{2:\left(\abs{s_{i_{N'}}}-1\right)}&\vD_{1,i_{N'}+2}^{2:\left(\abs{s_{i_{N'}}}-1\right)}&\cdots&\vD_{1,N-1}^{2:\left(\abs{s_{i_{N'}}}-1\right)}
            \end{array}\right]\\
            \leftrightarrow&\left[\begin{array}{cccccccccccc}
                1&0&\cdots&0&0&0&\cdots&0&0&0&\cdots&0\\
                \vD_{1,i_1+1}^{2:\left(\abs{s_{i_1}}-1\right)}&\vD_{2,i_1+1}^{2:\left(\abs{s_{i_1}}-1\right)}&\cdots&\vD_{2,i_2-2}^{2:\left(\abs{s_{i_1}}-1\right)}&*&\vD_{2,i_2+1}^{2:\left(\abs{s_{i_1}}-1\right)}&\cdots&\vD_{2,i_{N'}-2}^{2:\left(\abs{s_{i_1}}-1\right)}&*&\vD_{2,i_{N'}+1}^{2:\left(\abs{s_{i_1}}-1\right)}&\cdots&\vD_{2,N-2}^{2:\left(\abs{s_{i_1}}-1\right)}\\
                &&&&1&0&\cdots&0&0&0&\cdots&0\\
                &&&&\vD_{1,i_2+1}^{2:\left(\abs{s_{i_2}}-1\right)}&\vD_{2,i_2+1}^{2:\left(\abs{s_{i_2}}-1\right)}&\cdots&\vD_{2,i_{N'}-2}^{2:\left(\abs{s_{i_2}}-1\right)}&*&\vD_{2,i_{N'}+1}^{2:\left(\abs{s_{i_2}}-1\right)}&\cdots&\vD_{2,N-2}^{2:\left(\abs{s_{i_2}}-1\right)}\\
                &&&&&&\ddots&\ddots&\vdots&\vdots&\ddots&\vdots\\
                &&&&&&&&1&0&\cdots&0\\
                &&&&&&&&\vD_{1,i_{N'}+1}^{2:\left(\abs{s_{i_{N'}}}-1\right)}&\vD_{2,i_{N'}+1}^{2:\left(\abs{s_{i_{N'}}}-1\right)}&\cdots&\vD_{2,N-2}^{2:\left(\abs{s_{i_{N'}}}-1\right)}
            \end{array}\right]\\
            \leftrightarrow&\left[\begin{array}{cccccccccc}
                \vD_{2,i_1+1}^{2:\left(\abs{s_{i_1}}-1\right)}&\cdots&\vD_{2,i_2-2}^{2:\left(\abs{s_{i_1}}-1\right)}&\vD_{2,i_2+1}^{2:\left(\abs{s_{i_1}}-1\right)}&\cdots&\vD_{2,i_{N'}-2}^{2:\left(\abs{s_{i_1}}-1\right)}&\vD_{2,i_{N'}+1}^{2:\left(\abs{s_{i_1}}-1\right)}&\cdots&\vD_{2,N-2}^{2:\left(\abs{s_{i_1}}-1\right)}\\
                &&&\vD_{2,i_2+1}^{2:\left(\abs{s_{i_2}}-1\right)}&\cdots&\vD_{2,i_{N'}-2}^{2:\left(\abs{s_{i_2}}-1\right)}&\vD_{2,i_{N'}+1}^{2:\left(\abs{s_{i_2}}-1\right)}&\cdots&\vD_{2,N-2}^{2:\left(\abs{s_{i_2}}-1\right)}\\
                &&&&\ddots&\ddots&\vdots&\ddots&\vdots\\
                &&&&&&\vD_{2,i_{N'}+1}^{2:\left(\abs{s_{i_{N'}}}-1\right)}&\cdots&\vD_{2,N-2}^{2:\left(\abs{s_{i_{N'}}}-1\right)}
            \end{array}\right].
        \end{aligned}
        $}
    \end{equation*}

    It can be proved by Condition \ref{condition:3_2_switchinglaw} that the above process can be continued until the last step, i.e., the matrix is empty. Therefore, $\vJ$ has full row rank. Theorem \ref{thm:3_2_fullrowrank} holds.
\end{proof}

\begin{theorem}
    If Condition \ref{condition:3_2_switchinglaw} holds, then the switching law $S$ induces a manifold of $\dim S=N-\sum_{i=1}^{N}\abs{s_i}$ dimension.
\end{theorem}

\begin{proof}
    At $\vX^*$, $\vf\left(\vX^*\right)=\vzero$. Theorem \ref{thm:3_2_fullrowrank} implies that $\vJ^*=\where{\frac{\mathrm{d}\vf}{\mathrm{d}\vX}}{\vX^*}=\left(\vj_1^*,\vj_2^*,\dots,\vj_{N\left(n+1\right)}^*\right)$ has full row rank, i.e., $\rank\vJ^*=Nn+\sum_{i=1}^{N}\abs{s_i}$. Assume that $\left\{\vj_{i_k}\right\}_{k=1}^{\rank\vJ^*}$ are linearly independent. Denote $\left\{i_k'\right\}_{k=1}^{N\left(n+1\right)-\rank\vJ^*}=\Setminus{\left\{k\right\}_{k=1}^{N\left(n+1\right)}}{\left\{\vj_{i_k}\right\}_{k=1}^{\rank\vJ^*}}$. Applying the implicit function theorem, $\vf\left(\vX\right)=\vzero$ induces a smooth map $$\hat{\vf}:\left(X_{i_k'}\right)_{k=1}^{N\left(n+1\right)-\rank\vJ^*}\mapsto\left(X_{i_k}\right)_{k=1}^{\rank\vJ^*}$$ near $\vX^*$. Therefore, the map of the smooth function $\hat{\vf}$, i.e., the set induced by $\vf\left(\vX\right)=\vzero$ near $\vX^*$, is a manifold of dimension $N\left(n+1\right)-\rank\vJ^*=N-\sum_{i=1}^{N}\abs{s_i}$. The theorem holds.
\end{proof}

\end{document}